\documentclass[11pt]{article}

\usepackage{fullpage}
\usepackage{amsmath}
\usepackage{amsfonts}
\usepackage{amssymb}
\usepackage{amsthm}
\usepackage{mathrsfs}
\usepackage{xspace}
\usepackage{booktabs}
\usepackage{array}
\usepackage{longtable}
\usepackage[dvipsnames]{xcolor}
\usepackage{hyperref}
\usepackage[nottoc]{tocbibind}

\numberwithin{equation}{section}
\allowdisplaybreaks

\newtheorem{theorem}{Theorem}
\newtheorem{lemma}{Lemma}
\newtheorem{proposition}{Proposition}

\newtheorem{claim}{Claim}
\theoremstyle{definition}
\newtheorem{definition}{Definition}
\newtheorem{example}{Example}
\theoremstyle{remark}

\newcommand{\SC}{\operatorname{SC}}
\newcommand{\OPT}{\operatorname{OPT}}

\newcommand{\PM}{\mathsf{P}}
\newcommand{\PD}{\mathsf{D}}
\newcommand{\Mix}{\mathsf{M}}
\newcommand{\HD}{\mathsf{HD}}

\newcommand{\E}{\mathbb{E}}

\newcommand{\pos}[1]{\left[#1\right]_{+}}
\newcommand{\eps}{\varepsilon}

\title{Randomized Strategyproof Facility Location:\\ Two Facilities and Beyond}

\author{Hau Chan$^{1}$ \qquad Jianan Lin$^{2}$ \qquad Chenhao Wang$^{3,4}$\\[0.6em]
\small $^{1}$University of Nebraska--Lincoln\\
\small $^{2}$Rensselaer Polytechnic Institute\\
\small $^{3}$Beijing Normal University--Zhuhai\\
\small $^{4}$Beijing Normal--Hong Kong Baptist University}

\date{}

\begin{document}
\maketitle

\begin{abstract}
We design and analyze randomized strategyproof mechanisms for multi-facility location under the utilitarian social-cost objective, the sum of the agents' distances to their nearest facilities.
For two facilities, the Pairwise-Distance mechanism locates facilities at a pair of reported locations sampled with probability proportional to their distance.
It is strategyproof on Ptolemaic spaces, including Euclidean and Hilbert spaces as special cases, and has an approximation ratio of \(4\).
The resulting Hybrid-Distance mechanism is a fixed-probability mixture: it selects the classical Proportional mechanism [Lu et al., EC'10] with probability \(\lambda^*=\frac{5+4\sqrt3}{23}\) and Pairwise-Distance with probability $1-\lambda^*$.
It is strategyproof on Ptolemaic spaces and has a tight approximation ratio of \(\frac{74+4\sqrt3}{23}\approx3.5186\), breaking the long-standing factor-\(4\) benchmark of [Lu et al., EC'10].

We complement the two-facility results by studying more facilities.
First, for \(n\) agents and \(k=n-1\) facilities, we introduce the Inverse-Square mechanism, which omits one report with probability proportional to the inverse square of its nearest-neighbor distance and locates facilities at all remaining reports.
It is strategyproof on any metric space and has an approximation ratio of \(\Theta(\sqrt{n})\), improving the previous best-known ratio of \(\frac{n}{2}\) [Escoffier et al., ADT'11].
Second, for $k$ facilities on the line, we introduce the Gap-Product mechanism, which locates facilities at \(k\) reports and weights each set by the product of the gaps between consecutive selected reports.
When \(k=3\), it is strategyproof and has a \(6\)-approximation, replacing the previous \(n\)-dependent guarantee [Fotakis and Tzamos, EC'13] by a constant. 
\end{abstract}

\section{Introduction}\label{sec:introduction}

Facility location is a canonical problem in approximate mechanism design without money and a fundamental optimization problem in operations research and theoretical computer science \cite{procaccia2013approximate,owen1998strategic,drezner2004facility}.
A social planner locates one or more facilities in a metric space using locations reported by agents, and each agent incurs the distance from her true location to the nearest facility.
Under the utilitarian social-cost objective, the planner seeks to minimize the sum of these distances.
Because an agent may benefit by misreporting her location, the central challenge is to design a strategyproof mechanism that elicits truthful location reports while retaining a good approximation to the optimal social cost.
Such mechanisms can inform the placement of public resources such as schools, parks, and libraries \cite{drezner2004facility}.
As preference-aggregation rules, they can also select outcomes or representatives that reflect a group's political preferences \cite{black1948rationale,moulin1980strategy}.

The classical single-facility problem on the line provides an important contrast: locating the facility at a median report is strategyproof and optimal for the utilitarian social cost \cite{moulin1980strategy,procaccia2013approximate}.
For the egalitarian objective, which minimizes the maximum individual cost, no deterministic strategyproof mechanism can beat factor \(2\), whereas randomization yields the optimal factor \(\frac{3}{2}\) \cite{procaccia2013approximate}.
The single-facility literature has also studied alternative objectives, including \(L_p\)-norm, least-squares, and utility objectives \cite{feigenbaum2017approximately,feldman2013strategyproof,walsh24utility}, and domains such as trees \cite{alon2010strategyproof}, circles \cite{DBLP:conf/sagt/Meir19}, and higher-dimensional spaces \cite{GoelH23,gravin2025approximation,barak2026facility,chan2026strategyproof}.

\paragraph{Multi-Facility Location under the Utilitarian Social-Cost Objective.}
Locating $k\ge 2$ facilities changes both the optimization problem and its incentive constraints: an agent's cost becomes the minimum of several distances, while the mechanism must coordinate several reported locations.
For \(n\) agents and \(k=2\) facilities on the line, the deterministic Two-Extremes mechanism by Procaccia and Tennenholtz \cite{procaccia2013approximate} locates facilities at the leftmost and rightmost reported positions and has an approximation ratio of \(n-2\).
For \(n\ge5\), no deterministic strategyproof mechanism can achieve a ratio better than \(n-2\) \cite{fotakis2014power}.
The deterministic picture becomes even more restrictive beyond two facilities: for every \(k\ge3\), no deterministic anonymous strategyproof mechanism has a bounded approximation ratio even on the line \cite{fotakis2014power}.

Randomization provides substantially better guarantees across several multi-facility regimes.
For two facilities, Lu et al.~\cite{lu2009tighter} established early randomized bounds, and Lu et al.~\cite{lu10mechanism} subsequently introduced the Proportional mechanism.
This mechanism chooses an anchor uniformly from the reported locations and then chooses a second report with probability proportional to its distance from the anchor.
It is strategyproof on every metric space and has an approximation ratio of \(4\).
For arbitrary \(2\le k<n\) on the line, the Equal Cost mechanism by Fotakis and Tzamos~\cite{fotakis2013strategyproof} is group-strategyproof and has an approximation ratio of at most \(n\) for the utilitarian social cost.
For the special setting \(k=n-1\), the Pick the Loser mechanism randomly leaves one report without a facility; on the line, it is strongly group-strategyproof and has an approximation ratio of \(2\) \cite{fotakis2013strategyproof}.
For \(k=n-1\) on arbitrary metric spaces, Escoffier et al.~\cite{escoffier2011many} proposed the Inversely Proportional mechanism, which omits report \(i\) with probability inversely proportional to its nearest-neighbor distance.
It is strategyproof and has a tight approximation ratio of \(\frac{n}{2}\).

These results suggest that the effectiveness of randomization depends jointly on the number of facilities and the geometry of the location space, leading to our central question:

\begin{quote}
\emph{How do the number of facilities and the geometry of the location space affect the approximation guarantees achievable by randomized strategyproof mechanisms for multi-facility location?}
\end{quote}

We investigate this question through three complementary settings.
For two facilities, we focus on Ptolemaic spaces, a class containing Euclidean and Hilbert spaces, metric trees, and \(\mathrm{CAT}(0)\) spaces \cite{foertsch2007ptolemy}; their defining four-point inequality provides geometric structure for analyzing pairwise-distance randomization.
For \(n\) agents and \(k=n-1\) facilities, we work on arbitrary metric spaces, where the mechanism needs only to randomize over the single omitted report.
Finally, on the line, we study gap-based randomization for three facilities together with its approximation behavior for general \(k\).

\subsection{Our Contributions}\label{subsec:contributions}

Our contributions are as follows.

\begin{itemize}
\item \textbf{Two facilities in Ptolemaic spaces.} We introduce the \emph{Pairwise-Distance} mechanism, which locates facilities at an unordered pair of reports sampled with probability proportional to the distance between them.
It is strategyproof on every Ptolemaic metric space and has an approximation ratio of \(4\), with the approximation guarantee holding even on arbitrary metric spaces.
We then consider fixed-probability mixtures of Pairwise-Distance and the Proportional mechanism of Lu et al.~\cite{lu10mechanism}.
The resulting \emph{Hybrid-Distance} mechanism selects Proportional with probability \(\lambda^*=\frac{5+4\sqrt3}{23}\), is strategyproof on Ptolemaic spaces, and has an approximation ratio of \(3+\lambda^*=\frac{74+4\sqrt3}{23}\approx 3.5186\). 
Two families of line instances establish the tightness of the analysis.
Furthermore, the choice of \(\lambda^*\) optimizes the approximation ratio among all profile-independent mixtures of these two component mechanisms.

\item \textbf{\(n-1\) facilities in arbitrary metric spaces.} For \(n\) agents and \(k=n-1\) facilities, we introduce the \emph{Inverse-Square} mechanism, which omits one report with probability proportional to the inverse square of its distance to the nearest other reported location and locates facilities at all remaining reports.
It is strategyproof on every metric space and has a tight approximation ratio of \(\frac{1+\sqrt{n/2}}{2}=\Theta(\sqrt{n})\).
This improves the \(\frac{n}{2}\)-approximation of the Inversely Proportional mechanism of Escoffier et al.~\cite{escoffier2011many} in the same arbitrary-metric setting.
While the Pick the Loser mechanism of Fotakis and Tzamos \cite{fotakis2013strategyproof} achieves the stronger ratio \(2\) on the line, the contribution of Inverse-Square is instead its geometry-independent improvement over the corresponding general-metric baseline.

\item \textbf{Three facilities on the line.} We introduce the \emph{Gap-Product} mechanism on the line, which locates facilities at \(k\) selected reports and assigns each selection a weight equal to the product of the gaps between consecutive selected locations in left-to-right order. It reduces to Pairwise-Distance when $k=2$.
For \(k=3\), we prove that it is strategyproof with a tight approximation ratio of \(6\), replacing the previous \(n\)-approximation \cite{fotakis2013strategyproof} by a constant. 
More generally, Gap-Product has a tight approximation ratio of \(2k\) for every \(k\ge2\), although strategyproofness fails for every \(k\ge4\).
The approximation proof uses an equivalent random partition into consecutive blocks, and the incentive proof groups the deviation expansion by overlap type.
\end{itemize}

\paragraph{Concurrent work.} We note that there are {three} concurrent and independent works.
First, for two facilities, Ma and Peng \cite{ma2026breaking} independently proposed the Pairwise-Distance mechanism (Global Pair in their terminology) and proved its strategyproofness in Ptolemaic spaces.
They showed that the mixture that selects Pairwise-Distance with probability $\frac{1}{3}$ and Proportional with probability $\frac{2}{3}$ achieves an approximation ratio of $3.6667$.
They further proved a lower bound of $3.5186$ on the approximation ratio of every fixed mixture of these two mechanisms and a lower bound of $1.2071$ for any randomized strategyproof mechanism.
In Section~\ref{sec:two-facility}, we establish the exact approximation ratio of $3.5186$ for the mixture and thereby close the gap, using a different accounting and tightness analysis.

Second, for three facilities on the line, Aziz, Mackenzie, and Suzuki \cite{aziz2026anchoring} introduced the \emph{Random-Anchor Volume} mechanism, which opens an anchor facility at a uniformly random report and selects two more reports with probability proportional to the product of their consecutive gaps from the anchor.
They showed that it is strategyproof and achieves an $8$-approximation.
They also noted that the Gap-Product mechanism achieves a \(2k\)-approximation, while leaving its strategyproofness for \(k=3\) as an open question.
In Section~\ref{sec:k-facility}, we prove its strategyproofness and the  $6$-approximation for $k=3$.

Third, Jia \cite{jia2026product} independently introduced Gap-Product and proved the same \(2k\)-approximation for every \(k\ge2\) and the strategyproofness for \(k=3\). When $k=2$,
Jia also found that its random mixture with Proportional can achieve the \(3.5186\)-approximation, the same with ours. 
While the focus of \cite{jia2026product} is only on the line, our two-facility strategyproofness results extend to Ptolemaic spaces, and our two-facility approximation bounds hold on arbitrary metric spaces

Moreover, compared to these work, our $\Theta(\sqrt{n})$-approximation result for $k=n-1$ in all metric spaces is totally new.

\subsection{Other Related Work}\label{subsec:related}


Beyond the standard social-cost benchmarks reviewed above, the multi-facility literature has considered alternative connection costs, service rules, preference models, and manipulation models.
Fotakis and Tzamos studied concave connection costs and winner-imposing mechanisms for multiple facilities on the line \cite{fotakis2013strategyproof,fotakis2013winner}.
Other work allows agents to have heterogeneous or fractional preferences over facilities, with much of this literature focusing on two facilities or limited service resources \cite{PaoloCarmine2016,fong2018facility}.
Related two-facility variants include opposite facilities \cite{chen2018mechanism,chen2021tight} and false-name-proof mechanisms \cite{sonoda2016false}.
These models share the presence of multiple facilities with our setting but modify agents' preferences, their service rule, or the permitted form of manipulation.

Other variants consider attractive, obnoxious, or hybrid facilities \cite{cheng2013strategy,feigenbaum2015strategyproof,chan2025obnoxious}, or imposes capacity constraints on how agents can be assigned to facilities \cite{aziz2020facility,aziz2020capacity}.
Fairness-oriented variants replace or supplement aggregate efficiency with individual, group, or equity guarantees \cite{alex2024,Zhou22Group-Fair,walsh2025equitable}.
Learning-augmented mechanisms additionally receive a prediction or advice signal and balance consistency with robustness and strategyproofness \cite{agrawal2022learning,Xu2022,balkanski2024randomized,DBLP:conf/nips/Barak0T24,chan2025prediction,nips2025envy}.
The survey of Chan et al.\ \cite{chan2021mechanismsurvey} provides a broader account of strategyproof facility location.

\section{Preliminaries}\label{sec:preliminaries}

Let \((\Omega,d)\) be a metric space and fix a facility budget \(k\ge1\).
For each population size \(n\ge1\), let \(N=\{1,\ldots,n\}\) be the set of agents.
Agent \(i\) is located at \(x_i\in\Omega\), and
\(\mathbf{x}=(x_1,\ldots,x_n)\in\Omega^n\) denotes the location profile.
Let \(\mathcal F_k=\{F\subseteq\Omega:1\le |F|\le k\}\),
and let \(\Delta(\mathcal F_k)\) denote the set of probability distributions over \(\mathcal F_k\).

A randomized \(k\)-facility mechanism \(f\) specifies, for every population size \(n\ge1\), a function \(f_n:\Omega^n\longrightarrow\Delta(\mathcal F_k)\).
We suppress the subscript \(n\) when the population size is clear.
On input \(\mathbf{x}\), the mechanism draws a facility set \(F\sim f(\mathbf{x})\).

For \(F\in\mathcal F_k\), the cost of agent \(i\) and the social cost at profile \(\mathbf{x}\) are
\[
 c_i(F,\mathbf{x})=\min_{y\in F}d(x_i,y),
 \qquad
 \SC(F,\mathbf{x})=\sum_{i\in N}c_i(F,\mathbf{x}).
\]
For a randomized mechanism \(f\), define
\begin{align*}
 c_i(f(\mathbf{x}),\mathbf{x})
 &=\E_{F\sim f(\mathbf{x})}[c_i(F,\mathbf{x})],\\
 \SC(f(\mathbf{x}),\mathbf{x})
 &=\E_{F\sim f(\mathbf{x})}[\SC(F,\mathbf{x})]
 =\sum_{i\in N}c_i(f(\mathbf{x}),\mathbf{x}).
\end{align*}

The optimal social cost with at most \(k\) facilities is
\[
 \OPT_k(\mathbf{x})=\inf_{F\in\mathcal F_k}\SC(F,\mathbf{x}).
\]
When the profile is fixed, we write \(c_i(f)\) and \(\SC(f)\) for
\(c_i(f(\mathbf{x}),\mathbf{x})\) and \(\SC(f(\mathbf{x}),\mathbf{x})\), respectively.

\begin{definition}[Strategyproofness]
A randomized mechanism \(f\) is \emph{strategyproof} if, for every \(n\ge1\), every profile \(\mathbf{x}\in\Omega^n\), every agent \(i\in N\), and every alternative report \(x'_i\in\Omega\),
\[
 c_i(f(\mathbf{x}),\mathbf{x})
 \le
 c_i\bigl(f(x'_i,\mathbf{x}_{-i}),\mathbf{x}\bigr).
\]
Thus, no agent can reduce her expected cost by misreporting her location while the other reports remain fixed.
\end{definition}

A randomized \(k\)-facility mechanism \(f\) has approximation ratio at most \(\rho\) if
\[
 \SC(f(\mathbf{x}),\mathbf{x})\le \rho\,\OPT_k(\mathbf{x})
\]
for every population size \(n\ge1\) and every profile \(\mathbf{x}\in\Omega^n\) with \(\OPT_k(\mathbf{x})>0\).
When the mechanism is considered on more than one metric space, the inequality is required on every such space.
The \emph{approximation ratio} \(\operatorname{apx}(f)\) is the infimum over all \(\rho\ge1\) satisfying this uniform bound.

We are particularly interested in Ptolemaic metric spaces. Every Euclidean or Hilbert space, and every metric tree is Ptolemaic \cite{foertsch2007ptolemy}.
Hence all results proved below for Ptolemaic spaces apply to these familiar classes.

\begin{definition}[Ptolemaic space]\label{def:ptolemaic}
A metric space $(\Omega,d)$ is \emph{Ptolemaic} if every four points \(u,v,w,z\in\Omega\) satisfy
\[
 d(u,w)d(v,z)
 \le
 d(u,v)d(w,z)+d(u,z)d(v,w).
\]
\end{definition}

\section{Two Facilities}
\label{sec:two-facility}

Throughout this section, \(k=2\), and we write \(\OPT=\OPT_2\).
For a profile \(\mathbf{x}\), let
\[
 d_{ij}=d(x_i,x_j),\qquad
 S_i=\sum_{j\in N}d_{ij},\qquad
 Z=\sum_{i<j}d_{ij}.
\]
Thus, \(d_{ij}\) is the distance between \(x_i\) and \(x_j\), \(S_i\) is the sum of distances from \(x_i\) to the profile, and \(Z\) is the total distance over all unordered pairs.
In particular, \(\sum_iS_i=2Z\).

We study two mechanisms that use the same distance-proportional second draw but different distributions over the first agent.

\begin{definition}[Proportional mechanism \(\PM\) \cite{lu10mechanism}]\label{def:proportional}
If \(Z=0\), open a facility at the common reported location.
Otherwise, choose an anchor \(i\) uniformly from \(N\), choose a second agent \(j\ne i\) with probability \(d_{ij}/S_i\), and open facilities at \(x_i\) and \(x_j\).
\end{definition}

\begin{definition}[Pairwise-Distance mechanism \(\PD\)]
\label{def:pairwise-distance}
If \(Z=0\), open a facility at the common reported location.
Otherwise, choose an unordered pair \(\{i,j\}\) with probability \(d_{ij}/Z\), and open facilities at \(x_i\) and \(x_j\).
\end{definition}

Pairwise-Distance can equivalently be implemented by choosing the anchor \(i\) with probability \(S_i/(2Z)\) and then choosing \(j\) with probability \(d_{ij}/S_i\).
Thus, the two mechanisms differ only in their anchor distributions: Proportional uses the uniform distribution, whereas Pairwise-Distance favors reports that are far from the rest of the profile.

Lu et al.~\cite{lu10mechanism} proved that Proportional is strategyproof and a \(4\)-approximation on every metric space.
As we will show later, Pairwise-Distance is strategyproof on every Ptolemaic space, and achieves a \(4\)-approximation.
Thus, an immediate question is whether a mixture of the two mechanisms can improve the approximation guarantee.

\begin{definition}[Fixed mixture]\label{def:mixture}
For \(\lambda\in[0,1]\), the mechanism
\(
 \Mix_\lambda:=\lambda\PM+(1-\lambda)\PD
\)
runs Proportional with probability \(\lambda\) and Pairwise-Distance with probability \(1-\lambda\), independently of the profile.
\end{definition}

Define
\[
 \lambda^*=\frac{1}{4\sqrt3-5}=\frac{5+4\sqrt3}{23}.
\]
We call the specific weighted mixture
\[
 \HD:=\Mix_{\lambda^*}
 =\lambda^*\PM+(1-\lambda^*)\PD
\]
the \emph{Hybrid-Distance mechanism}.

\begin{theorem}\label{thm:incentive-properties}
The Hybrid-Distance mechanism \(\HD\) is strategyproof on every Ptolemaic metric space.
\end{theorem}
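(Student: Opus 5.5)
Strategyproofness is preserved under profile-independent mixtures: for any agent $i$ and misreport $x_i'$,
\[
c_i(\HD(x_i',\mathbf{x}_{-i}),\mathbf{x})=\lambda^*c_i(\PM(x_i',\mathbf{x}_{-i}),\mathbf{x})+(1-\lambda^*)c_i(\PD(x_i',\mathbf{x}_{-i}),\mathbf{x}),
\]
so it suffices to show that each component is strategyproof on Ptolemaic spaces. For $\PM$ this is the result of Lu et al.~\cite{lu10mechanism}, which holds on every metric space. The work therefore lies in proving that $\PD$ is strategyproof on every Ptolemaic space.

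For $\PD$, fix agent $1$ with true location $x=x_1$ and misreport $y$. Write $D=\sum_{2\le i<j}d_{ij}$ for the total distance among the other reports and $a_j=d(x,x_j)$, $b_j=d(y,x_j)$. Then $Z_x=D+A$ with $A=\sum_j a_j$, and $Z_y=D+B$ with $B=\sum_j b_j$. When truthful, agent $1$ pays $0$ whenever a pair containing her is chosen, so
\[
c_1^{\text{truth}}=\frac{1}{D+A}\sum_{2\le i<j}d_{ij}\min(a_i,a_j).
\]
When she misreports $y$, her cost is
\[
c_1^{\text{dev}}=\frac{1}{D+B}\Bigl[\sum_{j}b_j\min(d(x,y),a_j)+\sum_{i<j}d_{ij}\min(a_i,a_j)\Bigr].
\]
Let $T=\sum_{i<j}d_{ij}\min(a_i,a_j)$. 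Clearing denominators, the goal becomes
\[
(D+A)\sum_j b_j\min(d(x,y),a_j)\ \ge\ (B-A)\,T .
\]
If $B\le A$ the inequality is immediate, so assume $B>A$.

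The key step is the bound $T\le D\cdot\min$-type quantities, combined with Ptolemy. Two ingredients should suffice. First, $\min(a_i,a_j)\le$ something, and trivially $T\le \sum_{i<j} d_{ij}\,\min(a_i,a_j)$. Second, the Ptolemaic inequality applied to $x,y,x_i,x_j$ gives
\[
d_{ij}\,d(x,y)\le a_i b_j+a_j b_i .
\]
Summing over pairs should yield
\[
T\cdot d(x,y)\ \lesssim\ \sum_j b_j\,(\text{weighted }a\text{'s}),
\]
which links $T$ to the left-hand side. The term $B-A\le (n-1)d(x,y)$ comes from the triangle inequality. The plan is to split the left-hand sum according to whether $a_j\le d(x,y)$, writing $\min(d(x,y),a_j)\ge a_j\,d(x,y)/\max(a_j,d(x,y))$. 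I would first verify the case $n=3$ (agents $1,2,3$), where the inequality reduces to
\[
(d_{23}+a_2+a_3)\bigl(b_2\min(\delta,a_2)+b_3\min(\delta,a_3)\bigr)\ge (b_2+b_3-a_2-a_3)\,d_{23}\min(a_2,a_3)
\]
with $\delta=d(x,y)$, to see exactly how Ptolemy enters. I would then generalize by summing pairwise inequalities, pairing each term $d_{ij}\min(a_i,a_j)$ in $T$ with the contributions of $b_i,b_j$ and the $D$ factor on the left.

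The main obstacle is this last aggregation. $D$ multiplies the left side while $B-A$ is a sum over all $j$, so a term-by-term matching must charge each product $(b_k-a_k)\,d_{ij}\min(a_i,a_j)$ to the left side. I would try to charge it to $d_{ij}\,b_k\min(\delta,a_k)$ when $a_k\ge\min(a_i,a_j)$, using $b_k-a_k\le\delta$. When $a_k$ is small, I would instead charge it to the $A\cdot b$ terms, using Ptolemy on the quadruple $x,y,x_i,x_j$. This is where I expect genuine case analysis to be needed.
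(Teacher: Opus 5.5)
Your reduction to Pairwise-Distance and your cleared-denominator inequality are correct and match the paper's (your $D,A,B,T$ and $\sum_j b_j\min(\delta,a_j)$ are the paper's $B,D_x,D_z,A,E$). Your charging when $a_k\ge\min(a_i,a_j)$ is also valid: it is the paper's endpoint inequality, whose proof uses only $a_k\ge m_{ij}$. The real difficulty, however, lies in the remaining terms, where $k$ is the point of the triple $\{i,j,k\}$ nearest to $x$. For these you offer only placeholders (``$\min$-type quantities'', ``$\lesssim$''), and the remedy you sketch cannot work.

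Write $e_k=b_k\min(\delta,a_k)$, $E=\sum_k e_k$ and $m_{ij}=\min(a_i,a_j)$. Then $DE-(B-A)T=\sum_{i<j}\sum_k d_{ij}\bigl(e_k-m_{ij}(b_k-a_k)\bigr)$. The deficient terms are indexed by triples, but the budget $A\cdot E$ you want to charge them to is indexed by pairs. Concretely, on the line let $x=0$, put $m$ other agents at $0$ and one each at $1$ and $2$ (call these $p$ and $q$), and let $y=\tfrac12$.
\begin{itemize}
\item Each of the $m$ triples $\{k,p,q\}$ with $k$ at $0$ contributes $d_{pq}\bigl(e_k-m_{pq}(b_k-a_k)\bigr)=-\tfrac12$, for a total deficit of $m/2$.
\item The budget is only $A\cdot E=3\cdot 1=3$.
\end{itemize}
So for $m\ge7$ your scheme cannot close, whatever Ptolemy instance you invoke, even though the target inequality ($3m+4\ge m/2-1$) holds easily.

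The missing idea is to pool the three terms of each unordered triple, using slack that your scheme discards. Relabel a triple so that $a_1\le a_2\le a_3$, so that $m_{12}=m_{13}=a_1$ and $m_{23}=a_2$. The paper proves
\[
d_{12}\bigl(e_3-a_1(b_3-a_3)\bigr)+d_{13}\bigl(e_2-a_1(b_2-a_2)\bigr)+d_{23}\bigl(e_1-a_2(b_1-a_1)\bigr)\ge0 .
\]
In the example above this triple sum equals $1\cdot\tfrac34+2\cdot\tfrac14-\tfrac12=\tfrac34$. Summing over triples gives the stronger inequality $DE\ge(B-A)T$, so the $A\cdot E$ term is never needed. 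The proof splits on the position of $\delta$ relative to $a_1\le a_2\le a_3$:
\begin{itemize}
\item For $\delta\le a_1$, the triangle inequality suffices.
\item For $a_1\le\delta\le a_2$, a nontrivial scalar inequality is needed. The paper shows the bound is piecewise affine in the slacks $b_i-(a_i-\delta)$ and checks the extreme points.
\item For $\delta\ge a_2$, the negative part $-(a_2-a_1)\,d_{23}b_1$ is absorbed by Ptolemy applied to the misreport and the three other reports, $d_{23}b_1\le d_{12}b_3+d_{13}b_2$. This is not the instance on $x,y,x_i,x_j$ that you proposed.
\end{itemize}
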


Our main result determines the exact approximation ratio of Hybrid-Distance and its optimality within the fixed-mixture family.

\begin{theorem}\label{thm:main}
For every population size and every profile on every metric space,
\[
 \SC(\HD)\le(3+\lambda^*)\OPT.
\]
The bound is asymptotically tight on the line, and hence the approximation ratio of \(\HD\) is exactly \(3+\lambda^*\).
Moreover, \(\lambda^*\) minimizes the worst-case approximation ratio over the family \(\{\Mix_\lambda:\lambda\in[0,1]\}\).
\end{theorem}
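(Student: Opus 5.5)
The plan is to fix an optimal two-clustering and bound each component mechanism by a parametrized expression, so that the mixture can be analyzed through a one-dimensional family of worst cases. Let \(C_1,C_2\) be the clusters of an optimal solution with centers \(c_1,c_2\) (we may take \(c_1,c_2\) to be reported locations, at a loss to be absorbed later, or work directly with the optimal centers). Write \(n_a=|C_a|\), \(r_l=d(x_l,c_{a(l)})\), \(R_a=\sum_{l\in C_a}r_l\), \(\OPT=R_1+R_2\) and \(D=d(c_1,c_2)\).

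By the triangle inequality, every \(d_{ij}\) satisfies
\[
d_{ij}\le r_i+r_j \text{ (same cluster)}, \qquad D-r_i-r_j\le d_{ij}\le D+r_i+r_j \text{ (different clusters)}.
\]
Hence
\[
Z\ge n_1n_2D-(n_2R_1+n_1R_2) \qquad\text{and}\qquad S_i\ge n_{3-a}D-(\text{cluster-radius terms}).
\]
For a drawn pair \(\{i,j\}\), the cost of agent \(l\) is at most \(d(x_l,x_i)\) or \(d(x_l,x_j)\), whichever lies in its own cluster, if any.

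\textbf{Split pairs.} If \(i\in C_1\) and \(j\in C_2\), the social cost is at most \(\OPT+n_1r_i+n_2r_j\).

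\textbf{Same-cluster pairs.} If both \(i,j\in C_a\), the other cluster pays at most \(n_{3-a}D+R_{3-a}+n_{3-a}r_i\). The probability weight of such pairs is \(d_{ij}\le r_i+r_j\), which is small. This is exactly where \(D\) cancels against the normalization \(Z\) (for \(\PD\)) or \(S_i\) (for \(\PM\)).

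Summing these bounds with the respective anchor distributions, uniform for \(\PM\) and \(S_i/2Z\) for \(\PD\), should give
\[
\SC(\PM)\le \OPT\cdot g_{\PM}(\cdot), \qquad \SC(\PD)\le \OPT\cdot g_{\PD}(\cdot),
\]
where \(g_{\PM},g_{\PD}\) are explicit functions of a few scale-free parameters. Natural candidates are the cluster-size fraction \(\alpha=n_1/n\) and the cost split \(\beta=R_1/\OPT\).

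One caveat concerns the split-pair term. Under \(\PM\) the term \(n_1r_i\) is weighted by \(1/n\) times \(\Pr[\text{second draw lands across}]\approx 1\), giving roughly \(\OPT\) extra per side. Under \(\PD\), instead, far agents are more likely to be anchors. So the two mechanisms lose in complementary regimes: \(\PM\) when a small cluster is spread out, and \(\PD\) when a single outlier inflates \(S_i\).

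\textbf{Mixture step.} Next I bound \(\lambda g_{\PM}+(1-\lambda)g_{\PD}\le 3+\lambda\) uniformly in the parameters for \(\lambda=\lambda^*\). After the parameter reduction this should be a two-variable inequality. I expect this step to be the main obstacle: the bound is not termwise, since each mechanism alone reaches \(4\). One needs a combined accounting in which \(\PD\)'s excess and \(\PM\)'s excess occur in different regimes. The appearance of \(\sqrt3\) suggests that the extremal regime is the maximization of a rational function such as \(\frac{(1+t)^2}{\,\cdot\,}\) in one parameter \(t\), whose critical point satisfies a quadratic with discriminant involving \(3\).

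I would first try to reduce to the extremal configurations in which each cluster is a point mass plus at most one outlier. The expected cost is linear in each \(r_l\) once the normalizations are bounded, so extreme points should suffice. I would then verify the resulting polynomial inequality directly, choosing \(\lambda^*\) exactly so that two local maxima coincide.

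\textbf{Tightness and optimality of \(\lambda^*\).} I would build two families of line instances.

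\emph{Family A.} A large cluster at \(0\), a far point mass at \(D\to\infty\), and a small perturbation. On this family \(\PD\) and \(\PM\) have asymptotic ratios that combine to exactly \(3+\lambda\) for every \(\lambda\).

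\emph{Family B.} A parametrized family, for instance with cluster sizes in ratio \(t\), on which \(\Mix_\lambda\) has ratio \(h_\lambda(t)\) that decreases in \(\lambda\); optimizing over \(t\) yields \(\sup_t h_\lambda(t)\).

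Then \(\operatorname{apx}(\Mix_\lambda)\ge\max\{3+\lambda,\sup_t h_\lambda(t)\}\), and the two branches cross at \(\lambda^*=\frac{5+4\sqrt3}{23}\). This proves both that the upper bound is asymptotically tight for \(\HD\) (via Family A) and that no other \(\lambda\in[0,1]\) does better: for \(\lambda>\lambda^*\) Family A exceeds \(3+\lambda^*\), and for \(\lambda<\lambda^*\) Family B does.
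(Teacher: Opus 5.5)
\textbf{The upper bound.} The content of the theorem is the upper bound \(\SC(\HD)\le(3+\lambda^*)\OPT\), and your proposal does not prove it.
\begin{itemize}
\item You never derive \(g_{\PM},g_{\PD}\). The expected costs are ratios whose numerators are quadratic in the pairwise distances and whose normalizers \(Z\) and \(S_i\) also move with the \(r_l\). So they are not functions of \((n_a,R_a)\).
\item The reduction to ``a point mass plus one outlier'' by linearity in the \(r_l\) is unjustified. It is also unavailable on an arbitrary metric space, which is where the theorem is claimed.
\item The mixing step itself is left as ``the main obstacle''.
\end{itemize}

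With the parameters you name, bounding each mechanism separately and then mixing pointwise in fact cannot work. The profiles \((0^A,1^B,2)\) and \((0^A,1^B,1+\delta)\) with \(\delta=B^{-2}\) both have \(\alpha=A/(A+B+1)\) and \(\beta=0\). Yet as \(A/B\to t\), \(\PD\to4\) on the first and \(\PM\to(2t+4)/(t+1)\) on the second. Letting \(t\to0\), any valid \(g_{\PM}\) and \(g_{\PD}\) are both close to \(4\) at a common parameter point, so \(\lambda g_{\PM}+(1-\lambda)g_{\PD}\le 3+\lambda^*\) fails for every \(\lambda\).

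\textbf{The missing idea.} What is needed is a joint inequality holding profile by profile,
\[
\SC(\PM)+(4\sqrt3-6)\SC(\PD)\le\bigl(4+3(4\sqrt3-6)\bigr)\OPT,
\]
which after dividing by \(4\sqrt3-5\) is exactly the bound for \(\HD\). The paper proves it by charging, in four steps.
\begin{itemize}
\item It writes \(Z\,\SC(\PD)\) as a sum over triples of agents.
\item Triples inside one cluster cost at most \(3\phi_{ijk}\), where \(\sum\phi_{ijk}\le Z\,\SC^*\).
\item The excess of each mixed triple is assigned to a single agent \(j\), defining \(Q_j\).
\item Proportional's savings below \(4\SC^*\) at anchor \(j\), namely \(|B|c_j+K_j/D_j\), are shown to be at least \((4\sqrt3-6)nQ_j/Z\). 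Here \(K_j\) records the reductions from second draws inside \(j\)'s own cluster.
\end{itemize}
The last step uses an averaging bound relating the local scale to \(Z/n\), and a scalar minimization whose optimum at \(1+\sqrt3\) produces the constant.

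\textbf{The lower bounds.} Your two-family structure is the right one, but neither family is pinned down or computed. The crossing at \(\lambda^*\) is therefore taken from the known answer rather than derived.

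Family A needs three conditions: the perturbation inside the large cluster, only \(o(m)\) agents at the far location, and the perturbation small relative to \(1/m\). The paper uses \((0^m,\eps,1)\) with \(\eps=m^{-2}\), which gives \(\PM\to4\) and \(\PD\to3\). For fixed \(\eps\) and \(m\to\infty\), the ratio of \(\PM\) instead tends to \((2-\eps)/(1+\eps)<2\).

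Family B is \((0^A,1^B,2)\) with \(A/B\to t\). There \(\PD\to4\) for every \(t\), while \(\PM\to\psi(t)=\frac{6t^2+10t+2}{2t^2+3t+1}\), maximized at \(t=1+\sqrt3\) with value \(10-4\sqrt3\). Only this computation yields the branch \(4-(4\sqrt3-6)\lambda\), which meets \(3+\lambda\) at \(\lambda^*\).
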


The remainder of the section has three parts.
Subsection~\ref{sec:pd-strategyproofness} proves Theorem~\ref{thm:incentive-properties}.
Subsection~\ref{sec:pd-excess} proves the upper bound for \(\HD\).
Finally, Subsection~\ref{sec:tightness} proves tightness and determines the optimal mixing weight, completing the proof of Theorem~\ref{thm:main}.

\subsection{Strategyproofness}
\label{sec:pd-strategyproofness}

We now prove Theorem~\ref{thm:incentive-properties}. Hybrid-Distance uses a report-independent mixing probability, so it is strategyproof if and only if the two components are both strategyproof.
The strategyproofness of Proportional is known by Lu et al.~\cite{lu10mechanism}, so only the Pairwise-Distance component requires an argument.

\begin{proof}[Proof of Theorem~\ref{thm:incentive-properties}]
We prove that Pairwise-Distance is strategyproof on every Ptolemaic metric space.
Fix a profile in which the agent under consideration is located at \(x\), let \(z\) be an alternative report, and denote the other agents' locations by \(y_1,\ldots,y_m\).
For simplicity, define
\[
 \begin{aligned}
 B&=\sum_{i<j}d(y_i,y_j),
 &A&=\sum_{i<j}d(y_i,y_j)\min\{d(x,y_i),d(x,y_j)\},\\
 D_x&=\sum_i d(x,y_i),
 &D_z&=\sum_i d(z,y_i),\\
 E&=\sum_i d(z,y_i)\min\{d(x,z),d(x,y_i)\}.
 \end{aligned}
\]
Here, \(B\) is the total sampling weight of pairs containing only other agents, and \(A\) is the contribution of these pairs to the expected-cost numerator of the agent at \(x\).
Moreover, \(D_x\) and \(D_z\) are the total sampling weights of pairs containing, respectively, the report \(x\) and the report \(z\), while \(E\) is the contribution of the latter pairs to the expected-cost numerator of the agent at \(x\).
In particular, \(A\) and \(B\) are independent of whether the agent reports \(x\) or \(z\).
If \(B+D_x=0\), truthful reporting gives the agent cost zero, so the desired inequality is immediate.
Assume henceforth that \(B+D_x>0\).
The agent's expected costs under truthful reporting and reporting \(z\) are, respectively,
\[
 \frac{A}{B+D_x}
 \qquad\text{and}\qquad
 \frac{A+E}{B+D_z}.
\]
After cross-multiplication, strategyproofness is equivalent to
\(
 E(B+D_x)\ge A(D_z-D_x).
\)
It is therefore sufficient to prove the stronger inequality
\begin{equation}\label{eq:sp-strong}
 BE\ge A(D_z-D_x).
\end{equation}
The advantage of \eqref{eq:sp-strong} is that its left-hand side admits a decomposition into local terms indexed by pairs and triples.

Set
\[
 \delta_i=d(z,y_i)-d(x,y_i),\qquad
 e_i=d(z,y_i)\min\{d(x,z),d(x,y_i)\},
\]
and \(m_{ij}=\min\{d(x,y_i),d(x,y_j)\}\).
Expanding the difference in \eqref{eq:sp-strong} gives
\begin{equation}\label{eq:sp-expansion}
 BE-A(D_z-D_x)
 =\sum_{i<j}\sum_k
   d(y_i,y_j)\bigl(e_k-m_{ij}\delta_k\bigr).
\end{equation}
We split this double sum into endpoint terms, where \(k\in\{i,j\}\), and triple terms, where \(i,j,k\) are distinct.
The endpoint terms need only the triangle inequality; Ptolemy is used solely to control each group of three terms belonging to one unordered triple.
If \(k\in\{i,j\}\), then \(0\le m_{ij}\le d(x,y_k)\) and
\begin{equation}\label{eq:endpoint-nonnegative}
 e_k-m_{ij}\delta_k\ge0.
\end{equation}
Indeed, this is immediate when \(\delta_k\le0\).
When \(\delta_k>0\), the triangle inequality gives \(\delta_k\le d(x,z)\).
If \(d(x,y_k)\le d(x,z)\), then
\(e_k=d(z,y_k)d(x,y_k)\ge m_{ij}\delta_k\) because \(d(z,y_k)\ge\delta_k\).
If \(d(x,z)<d(x,y_k)\), then
\(e_k=d(z,y_k)d(x,z)\ge m_{ij}\delta_k\) because
\(d(x,z)\ge\delta_k\) and \(d(z,y_k)>d(x,y_k)\ge m_{ij}\).

It remains to group the terms in \eqref{eq:sp-expansion} having three distinct indices.
The following lemma is the required local inequality.

\begin{lemma}\label{lem:sp-triple}
For \(i\in\{1,2,3\}\), let \(d_i=d(x,y_i)\), and relabel the three points so that \(d_1\le d_2\le d_3\).
Then 
\begin{equation}\label{eq:sp-triple}
 \begin{split}
 T:={}&d(y_1,y_2)(e_3-d_1\delta_3)
       +d(y_1,y_3)(e_2-d_1\delta_2)+d(y_2,y_3)(e_1-d_2\delta_1)\ge0.
 \end{split}
\end{equation}
\end{lemma}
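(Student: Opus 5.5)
We prove $T\ge0$ by combining one negative term with positive parts of the other two. Write $a=d(x,z)$, $c_i=d(z,y_i)$, so $\delta_i=c_i-d_i$ and $e_i=c_i\min\{a,d_i\}$. Also write $p_{ij}=d(y_i,y_j)$.

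\textbf{Step 1: pointwise bound on each bracket.} The same reasoning as in \eqref{eq:endpoint-nonnegative} gives a clean case split. If $\delta_i\le0$, the bracket is at least $e_i\ge0$. If $\delta_i>0$, then $\delta_i\le\min\{a,c_i\}$, and we use $e_i-m\delta_i\ge (\min\{a,d_i\}-m)\,\delta_i$ for any coefficient $m$. Hence:
\begin{itemize}
\item the brackets for $y_3$ and $y_2$ (coefficient $d_1\le d_2,d_3$) are nonnegative, since $\min\{a,d_i\}\ge\min\{a,d_1\}$;
\item the bracket for $y_3$ is negative only when $a<d_1$, and even then it is at least $a c_3-d_1\delta_3\ge a(c_3-\delta_3)=ad_3\ge0$.
\end{itemize}
So every bracket indexed with coefficient $d_1$ is nonnegative. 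The only possibly negative term is the third one, $p_{23}(e_1-d_2\delta_1)$. It is negative only if $\delta_1>0$ and $\min\{a,d_1\}<d_2$.

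\textbf{Step 2: bound the negative term with Ptolemy.} We have
\[
e_1-d_2\delta_1\ge c_1\min\{a,d_1\}-d_2(c_1-d_1).
\]
We must dominate $p_{23}\,d_2\,\delta_1$ using $p_{12}e_3+p_{13}e_2$ together with the leftover positive parts. The natural tool is Ptolemy's inequality on the quadruple $(x,y_1,y_2,y_3)$ and on $(z,y_1,y_2,y_3)$:
\[
p_{23}d_1\le p_{12}d_3+p_{13}d_2,\qquad p_{23}c_1\le p_{12}c_3+p_{13}c_2.
\]
Subtracting suitably (the second minus $d_2/d_1$-weighted forms, or directly) relates $p_{23}\delta_1$ to $p_{12}\delta_3+p_{13}\delta_2$ plus error terms.

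\textbf{Step 3: case analysis on the position of $a$.} Split by where $a$ falls:
\begin{itemize}
\item \textbf{$a\ge d_2$.} Then $\min\{a,d_1\}=d_1$ and $e_1-d_2\delta_1\ge c_1d_1-d_2c_1+d_2d_1$. The two nonnegative brackets equal $p_{12}c_3d_3-\dots$. Here multiply the $z$-Ptolemy inequality by $d_2$ and absorb, using $d_2\le d_3$ and $e_2=c_2d_2$ when $a\ge d_2$.
\item \textbf{$a<d_2$.} Then $e_2=c_2 a$ and $e_3=c_3a$. Use $\delta_1\le a$ and $a$-scaled Ptolemy $p_{23}c_1\le p_{12}c_3+p_{13}c_2$ to compare $p_{23}d_2\delta_1$ against $a(p_{12}c_3+p_{13}c_2)$ minus $d_1(p_{12}\delta_3+p_{13}\delta_2)$. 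Bound the latter correction using the $x$-Ptolemy inequality and $\delta_i\le a$.
\end{itemize}

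\textbf{Main obstacle.} The hard part is Step 3. The negative mass $p_{23}d_2\delta_1$ carries coefficient $d_2$, not $d_1$, and the positive terms have coefficient only $d_1$ subtracted. I expect to need a convex combination of both Ptolemy inequalities, with weights depending on $d_1,d_2,a$, rather than either alone. Here is what I would try first. Write $T$ as a linear function of the three $p_{ij}$. On the cone allowed by the triangle and Ptolemy constraints it suffices to check the extreme rays. These are degenerate configurations where a Ptolemy inequality is tight: collinear or concyclic points, such as $x$ or $z$ lying on the "arc" through $y_1,y_2,y_3$. That check reduces the lemma to a few explicit one-dimensional inequalities.
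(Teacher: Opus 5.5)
\textbf{There is a genuine gap.} Your outline matches the paper's: only the $y_1$ bracket can be negative, you split on the position of $a$, and you use Ptolemy on $(z,y_1,y_2,y_3)$ when $a$ is large. But the proposal does not prove the lemma. The decisive regime $d_1\le a<d_2$ is only sketched, and the sketch cannot work.

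To see this, take $y_1=x$, so that $d_1=0$, $c_1=\delta_1=a$, $p_{12}=d_2$, $p_{13}=d_3$ and
\[
 T=a\bigl(d_2c_3+d_3c_2-d_2p_{23}\bigr).
\]
Your correction term $d_1(p_{12}\delta_3+p_{13}\delta_2)$ vanishes. Once you replace $a(p_{12}c_3+p_{13}c_2)$ by its Ptolemy lower bound $ap_{23}c_1=a^2p_{23}$, the best remaining bound is $T\ge a(a-d_2)p_{23}$, which is negative whenever $p_{23}>0$. Yet $T\ge0$ here simply because $p_{23}\le c_2+c_3$ and $d_2\le d_3$.

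So in this regime Ptolemy is the wrong tool. What is needed is a triangle-inequality argument that uses the routes through $x$ and through $z$ at once. The paper writes $T=p_{12}U+p_{13}V-p_{23}K$ with
\[
 U=d_1d_3+(a-d_1)c_3\ge0,\qquad V=d_1d_2+(a-d_1)c_2\ge0,\qquad K=(d_2-d_1)c_1-d_1d_2.
\]
When $K>0$ it uses the bounds $p_{12}\ge d_2-d_1$, $p_{13}\ge d_3-d_1$, $p_{23}\le M:=\min\{d_2+d_3,\,c_2+c_3\}$, and $K\le K_0:=a(d_2-d_1)-d_1^2$ (from $c_1\le d_1+a$). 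The resulting scalar inequality is piecewise affine in the slacks $s_i=c_i-(d_i-a)\in[0,2a]$, $i=2,3$, and is verified at three vertices.

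Your extreme-ray fallback does not replace this, for three reasons:
\begin{itemize}
\item For fixed $a,d_i,c_i$, the feasible $(p_{12},p_{13},p_{23})$ form a polytope cut out by inhomogeneous triangle bounds, not a cone.
\item The bound that actually closes the argument uses triangle-tight values, not Ptolemy-tight (concyclic) configurations.
\item You would still have to minimize over $a,d_i,c_i$, which is exactly the work the slack analysis does.
\end{itemize}

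\textbf{Two other steps are wrong as written.}

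In Step~1 with $a<d_1$, the inequality $ac_3-d_1\delta_3\ge a(c_3-\delta_3)$ is equivalent to $(a-d_1)\delta_3\ge0$. This fails precisely when $\delta_3>0$. The correct estimate is $e_3-d_1\delta_3=d_1d_3+(a-d_1)c_3\ge a(d_3+a-d_1)\ge0$, using $c_3\le d_3+a$, and likewise for $y_2$. This case must still be closed: pair these bounds with $e_1-d_2\delta_1\ge a(d_1+a-d_2)$ and $p_{23}\le p_{12}+p_{13}$.

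For $a\ge d_2$, using the $z$-Ptolemy inequality is right; it is the only place the paper needs Ptolemy. However, the multiplier must be $d_2-d_1$, not $d_2$. With $\Delta=p_{12}c_3+p_{13}c_2-p_{23}c_1\ge0$ one has the exact identity
\[
 T=(d_2-d_1)\Delta+p_{12}\bigl\{d_1d_3+(\min\{a,d_3\}-d_2)c_3\bigr\}+d_1d_2(p_{13}+p_{23}),
\]
in which every term is nonnegative. With multiplier $d_2$ the remainder can be negative: on the line with $x=0$, $y_1=-1$, $y_2=y_3=1$, $z=3$, one gets $T=4$ but $T-d_2\Delta=-4$.
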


\begin{proof}
We distinguish the position of \(d(x,z)\) relative to \(d_1\le d_2\le d_3\).

If \(d(x,z)\le d_1\), then \(\min\{d(x,z),d(x,y_i)\}=d(x,z)\) for all \(i\).
Consequently,
\[
 \begin{aligned}
 e_3-d_1\delta_3
 &=d_1d_3+(d(x,z)-d_1)d(z,y_3)\\
 &\ge d_1d_3+(d(x,z)-d_1)(d_3+d(x,z))\\
 &=d(x,z)(d_3+d(x,z)-d_1),\\
 e_2-d_1\delta_2
 &=d_1d_2+(d(x,z)-d_1)d(z,y_2)\\
 &\ge d_1d_2+(d(x,z)-d_1)(d_2+d(x,z))\\
 &=d(x,z)(d_2+d(x,z)-d_1),\\
 e_1-d_2\delta_1
 &=d_1d_2+(d(x,z)-d_2)d(z,y_1)\\
 &\ge d_1d_2+(d(x,z)-d_2)(d_1+d(x,z))\\
 &=d(x,z)(d_1+d(x,z)-d_2).
 \end{aligned}
\]
Each inequality follows from \(d(z,y_i)\le d(x,y_i)+d(x,z)\): the coefficient multiplying \(d(z,y_i)\) is nonpositive because \(d(x,z)\le d_1\le d_2\le d_3\).
Multiplying these three inequalities by the corresponding nonnegative pair distances and summing gives
\[
 \begin{split}
 T\ge d(x,z)\{&d(y_1,y_2)(d_3+d(x,z)-d_1)
                 +d(y_1,y_3)(d_2+d(x,z)-d_1)\\
               &+d(y_2,y_3)(d_1+d(x,z)-d_2)\}.
 \end{split}
\]
The expression is nonnegative if \(d_1+d(x,z)-d_2\ge0\).
Otherwise \(d(y_2,y_3)\le d(y_1,y_2)+d(y_1,y_3)\), and the expression in braces is at least
\[
 d(y_1,y_2)(d_3-d_2+2d(x,z))+2d(x,z)d(y_1,y_3)\ge0.
\]

If \(d_1\le d(x,z)\le d_2\),
then
\begin{align*}
 T={}&d(y_1,y_2)\{d_1d_3+(d(x,z)-d_1)d(z,y_3)\}+d(y_1,y_3)\{d_1d_2+(d(x,z)-d_1)d(z,y_2)\}\\
     &+d(y_2,y_3)\{d_1d_2-(d_2-d_1)d(z,y_1)\}.
\end{align*}
If the last brace is nonnegative, all three terms are nonnegative and hence \(T\ge0\).
We may therefore assume that the last brace is negative, and set
\[
 K=(d_2-d_1)d(z,y_1)-d_1d_2>0,\qquad
 K_0=d(x,z)(d_2-d_1)-d_1^2.
\]
Denote the first two positive braces by
\(U=d_1d_3+(d(x,z)-d_1)d(z,y_3)\) and
\(V=d_1d_2+(d(x,z)-d_1)d(z,y_2)\), so that
\[
 T=d(y_1,y_2)U+d(y_1,y_3)V-d(y_2,y_3)K.
\]
Since \(d(z,y_1)\le d_1+d(x,z)\), we have \(K\le K_0\).
The triangle inequality also gives
\(
 d(y_1,y_2)\ge d_2-d_1, d(y_1,y_3)\ge d_3-d_1
\)
and
\(
 d(y_2,y_3)\le M:=\min\{d_2+d_3,d(z,y_2)+d(z,y_3)\}.
\)
Therefore
\[
 T\ge (d_2-d_1)U+(d_3-d_1)V-MK_0.
\]
Consequently it suffices to verify
\begin{equation}\label{eq:sp-middle-claim}
 \begin{split}
 L:={}&(d_2-d_1)\{d_1d_3+(d(x,z)-d_1)d(z,y_3)\}\\
      &+(d_3-d_1)\{d_1d_2+(d(x,z)-d_1)d(z,y_2)\}
        \ge K_0M.
 \end{split}
\end{equation}
It remains to verify this scalar inequality.
For \(i\in\{2,3\}\), define the slack in the lower triangle bound by
\[
 s_i=d(z,y_i)-\bigl(d_i-d(x,z)\bigr).
\]
The triangle inequality gives \(0\le s_i\le2d(x,z)\), and
\[
 M=d_2+d_3-2d(x,z)
   +\min\{s_2+s_3,2d(x,z)\}.
\]
Hence \(L-K_0M\) is affine in $s_2$ and $s_3$ on each side of the line
\(s_2+s_3=2d(x,z)\).
On the side \(s_2+s_3\le2d(x,z)\), since $s_2,s_3\ge 0$, the minimum of \(L-K_0M\) is attained at one of
\[
 (s_2,s_3)\in\{(0,0),\quad(2d(x,z),0),\quad(0,2d(x,z))\}.
\]
At these three points, direct substitution gives, respectively,
\begin{align*}
 &2d_1\{d(x,z)(d_2-d_1)+d_1(d_2-d(x,z))\}+(d_3-d_2)\{d(x,z)(d_1+d_2-d(x,z))+d_1^2\},\\
 &2d_1^2d_2
   +(d_3-d_2)\{d(x,z)(d_2+d(x,z)-d_1)+d_1^2\},\\
 &2d_1^2d_2
   +(d_3-d_2)\{d(x,z)(d_1+d_2-d(x,z))+d_1^2\}.
\end{align*}
Every displayed quantity is nonnegative because
\(d_1\le d(x,z)\le d_2\le d_3\).

On the side \(s_2+s_3\ge2d(x,z)\), the value of \(M\) is constant.
The coefficients of \(s_2\) and \(s_3\) in \(L\) are, respectively,
\[
 (d_3-d_1)(d(x,z)-d_1)
 \quad\text{and}\quad
 (d_2-d_1)(d(x,z)-d_1).
\]
The first is at least the second, and both are nonnegative.
Thus the minimum on this side is attained at
\((s_2,s_3)=(0,2d(x,z))\), which is already the third point checked above.
This proves \eqref{eq:sp-middle-claim}.

Finally, if \(d(x,z)\ge d_2\),
Ptolemy's inequality on the four points \(z,y_1,y_2,y_3\) gives
\[
 \Delta:=d(y_1,y_2)d(z,y_3)+d(y_1,y_3)d(z,y_2)
          -d(y_2,y_3)d(z,y_1)\ge0.
\]
If \(d_2\le d(x,z)\le d_3\), direct substitution of the three minima \((d_1,d_2,d(x,z))\) yields
\[
 \begin{split}
 T={}&(d_2-d_1)\Delta
       +d(y_1,y_2)\{d_1d_3+(d(x,z)-d_2)d(z,y_3)\}\\
     &+d_1d_2\{d(y_1,y_3)+d(y_2,y_3)\}\ge0.
 \end{split}
\]
If \(d(x,z)\ge d_3\), substituting the three minima \((d_1,d_2,d_3)\) instead yields
\[
 \begin{split}
 T={}&(d_2-d_1)\Delta
       +d(y_1,y_2)\{d_1d_3+(d_3-d_2)d(z,y_3)\}\\
     &+d_1d_2\{d(y_1,y_3)+d(y_2,y_3)\}\ge0.
 \end{split}
\]
These cases exhaust all possibilities.
\end{proof}

Fix an unordered triple of indices and relabel it as \(\{1,2,3\}\) so that
\(d_1=d(x,y_1)\le d_2= d(x,y_2)\le d_3= d(x,y_3)\).
The three distinct-index terms in \eqref{eq:sp-expansion} associated with this triple are
\[
 \begin{split}
 &d(y_1,y_2)(e_3-m_{12}\delta_3)
 +d(y_1,y_3)(e_2-m_{13}\delta_2)+d(y_2,y_3)(e_1-m_{23}\delta_1).
 \end{split}
\]
By the definition of \(m_{ij}\) and the chosen ordering,
\(
 m_{12}=m_{13}=d_1, m_{23}=d_2.
\)
Thus, the preceding sum is exactly the quantity \(T\) in \eqref{eq:sp-triple}, and is nonnegative by Lemma~\ref{lem:sp-triple}.
Since this argument applies to every unordered triple, all distinct-index terms in \eqref{eq:sp-expansion} are nonnegative; all endpoint terms are nonnegative by \eqref{eq:endpoint-nonnegative}.
Hence \eqref{eq:sp-strong} holds, so a misreport cannot reduce the agent's expected cost.
This proves Pairwise-Distance strategyproofness on every Ptolemaic metric space.
Hybrid-Distance is a report-independent randomization between Proportional and Pairwise-Distance, so it is strategyproof as well.
\end{proof}

\subsection{Approximation ratio}
\label{sec:pd-excess}

We now prove the approximation part of Theorem~\ref{thm:main}.
The argument uses only the triangle inequality; in particular, it does not invoke Ptolemy's inequality and is therefore valid on every metric space.
Fix a profile $\mathbf x$ with \(Z=\sum_{i<j}d(x_i,x_j)>0\), and fix any two comparison facilities \(f_A,f_B\).
Assign each agent to a nearer comparison facility, breaking ties arbitrarily, and let \(A,B\) be the resulting clusters.
Write
\[
 c_i=c_i(\{f_A,f_B\},\mathbf x)= \min \{d(x_i,f_A),d(x_i,f_B)\}.
\]
as the cost of agent $i$ under the comparison pair.
Set
\[
 \SC_A^*=\sum_{i\in A}c_i,\qquad
 \SC_B^*=\sum_{k\in B}c_k,\qquad
 \SC^*=\SC_A^*+\SC_B^*.
\]
Here \(\SC_A^*\) and \(\SC_B^*\) are the total service costs of the two clusters, and
\(
 \SC^*=\SC(\{f_A,f_B\},\mathbf{x})
\) is the social cost of the comparison pair \(f_A,f_B\).
The superscript \(*\) refers throughout this subsection to this fixed comparison pair.
The pair need not attain the optimum.
All inequalities below hold for every such pair, so we may take the infimum over pairs at the end.

The proof first bounds the social cost of Pairwise-Distance in terms of the comparison cost \(\SC^*\), from which a 4-approximation follows.

\begin{lemma}\label{lem:pd-accounting}
 {There are nonnegative quantities \(Q_j\), \(j\in N\), whose
values are specified in Definition~\ref{def:mixed-excess-assignment}, such that}
\begin{equation}\label{eq:global-pd}
 Z\bigl(\SC(\PD)-3\SC^*\bigr)\le\sum_{j\in N}Q_j
\end{equation}
In particular, \(\SC(\PD)\le4\OPT\).
\end{lemma}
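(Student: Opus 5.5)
The plan is to expand \(Z\,\SC(\PD)=\sum_{i<j}d_{ij}\,\SC(\{x_i,x_j\},\mathbf x)\) pair by pair and charge every unit of cost beyond \(3\SC^*\) to a single agent; \(Q_j\) collects what is charged to agent \(j\). I split the unordered pairs into \emph{cross pairs} (\(i\in A\), \(k\in B\)) and \emph{same-cluster pairs} (both in \(A\) or both in \(B\)). Only the triangle inequality is used, through three facts: \(d(x_l,x_i)\le c_l+c_i\) for \(l,i\) in the same cluster; \(d_{ij}\le c_i+c_j\) within a cluster; and \(d_{ik}\ge d(f_A,f_B)-c_i-c_k\) for a cross pair.

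\textbf{Cross pairs.} For a cross pair \(\{i,k\}\), every agent \(l\in A\) pays at most \(c_l+c_i\), and every \(l\in B\) pays at most \(c_l+c_k\). Hence
\[
 \SC(\{x_i,x_k\},\mathbf x)\le \SC^*+|A|c_i+|B|c_k .
\]
So cross pairs exceed \(\SC^*\) only by terms proportional to a single endpoint's cost. Multiplying by \(d_{ik}\) and summing, agent \(i\in A\) receives \(|A|\,c_i\sum_{k\in B}d_{ik}\), and symmetrically for agents in \(B\).

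\textbf{Same-cluster pairs.} For a pair \(\{i,j\}\subseteq A\), agents of \(A\) pay at most \(c_l+\min\{c_i,c_j\}\). Agents of \(B\) are served badly, since each pays about \(c_l+d(f_A,f_B)+c_i\). This is the dangerous term, but its weight is only \(d_{ij}\le c_i+c_j\). I therefore bound
\[
 d_{ij}\,|B|\,d(f_A,f_B)\le (c_i+c_j)\,|B|\,d(f_A,f_B),
\]
and split it as \(c_i|B|d(f_A,f_B)\) charged to \(i\) and \(c_j|B|d(f_A,f_B)\) charged to \(j\). The counterweight is that the cross-pair weight \(\sum_{k\in B}d_{ik}\ge |B|\,d(f_A,f_B)-|B|c_i-\SC_B^*\) is large whenever \(d(f_A,f_B)\) is large. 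Comparing this with the \(3\SC^*\) budget, I will set
\[
 Q_j=\Bigl[\text{(charges to } j)-3\SC^*\cdot(\text{share of }Z\text{ attributed to }j)\Bigr]_+ ,
\]
where \(Z\) is distributed as \(d_{ij}/2\) to each endpoint. With this choice \eqref{eq:global-pd} holds by construction, and nonnegativity of \(Q_j\) is automatic.

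\textbf{Deriving \(\SC(\PD)\le 4\OPT\).} It suffices to show \(\sum_jQ_j\le Z\,\SC^*\) and then take the infimum over comparison pairs. For this I would bound each \(Q_j\) by roughly \(c_j\cdot(\text{share of } Z\text{ attributed to } j)\) plus terms that sum to at most \(Z\,\SC^*\) by the symmetric identity \(\sum_j S_j=2Z\). The tool is the lower bound \(S_j\ge \sum_{l}(d(f,x_j)-c_l)\) against each cluster, which converts \(|A|c_j\) and \(|B|d(f_A,f_B)\) into multiples of \(S_j\) at the cost of \(\SC^*\) terms.

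\textbf{Main obstacle.} I expect the hardest step to be the same-cluster term in which the far cluster \(B\) is served poorly. The issue is that the \(3\SC^*\) budget must absorb \(|B|\,d(f_A,f_B)\) while only the pair weight \(d_{ij}\le c_i+c_j\) is available. My first attempt is to pair each such term against the cross-pair weights \(d_{ik}\) of the same anchor \(i\), which are about \(d(f_A,f_B)\). If the constants do not close to \(3\), I would instead keep this term explicit inside \(Q_j\), since the later hybrid analysis may require exactly that residual.
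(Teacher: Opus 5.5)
Your proposal has a genuine gap, and it lies in the part of the lemma that carries the content.

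First, the statement refers to the specific quantities of Definition~\ref{def:mixed-excess-assignment}, the mixed-triple excesses. Their triple-by-triple structure is what Lemma~\ref{lem:row-charging} later exploits. Defining \(Q_j\) as a positive-part residual makes \eqref{eq:global-pd} a tautology. Even so, your bookkeeping is off. Each per-pair bound already contains a baseline \(\SC^*\), i.e.\ \(Z\SC^*\) in total. Subtracting \(3\SC^*\) times each agent's share therefore only gives \(Z(\SC(\PD)-4\SC^*)\le\sum_jQ_j\).

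Second, and decisively, \(\SC(\PD)\le4\OPT\) is left as a sketch, and your per-pair bounds cannot yield it. On the line, put \(m\) agents at \(0\), \(m\) at \(1\), and one outlier at \(3\). Take the optimal comparison pair \(\{0,1\}\). Then \(\SC^*=\OPT=2\), \(A\) consists of the agents at \(1\) and \(3\), \(B\) of those at \(0\), and \(Z=m^2+5m\). Your bounds give:
\begin{itemize}
\item \(2\) for the \(m^2\) pairs \((0,1)\), of weight \(1\);
\item \(\SC^*+|A|\cdot2=2m+4\) for the \(m\) pairs \((0,3)\), of weight \(3\);
\item \(\SC^*+|B|\,d(f_A,f_B)=m+2\) for the \(m\) pairs \((1,3)\), of weight \(2\) (the \(\min\{c_i,c_j\}\) terms vanish).
\end{itemize}
The weighted total is \(10m^2+16m\). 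This exceeds \(4Z\SC^*=8m^2+40m\) once \(m>12\), while the true value of \(Z\SC(\PD)\) is \(7m^2\).

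The culprit is the cross-pair bound, not the same-cluster term you worried about. When the outlier is drawn together with a \(B\)-agent, you charge every agent at \(1\) a detour of length \(2\) through \(x_i=3\). In fact she is served at distance \(1\) by the facility at \(0\), and these are the heaviest pairs. All of this charge lands on the outlier, so its \(Q\) is about \(8m^2\). That is far above \(Z\SC^*=2m^2+10m\), and far above the hoped-for ``\(c_j\) times its share'' \(=5m\). No redistribution among agents rescues the plan.

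The missing idea is to never route through \(f_A,f_B\) and instead to account by triples. Write \(Z\SC(\PD)=\sum_{i<j<k}S_{ijk}\), where \(S_{ijk}\) sums, over the three pairs of the triple, the pair weight times the third agent's exact cost. Give each triple a baseline in which each \(c\) is weighted by the opposite pair distance:
\begin{itemize}
\item \(\phi_{ijk}=\frac23(d_{jk}c_i+d_{ik}c_j+d_{ij}c_k)\) for monochromatic triples;
\item \(\phi_{ijk}=d_{jk}c_i+d_{ik}c_j\) for mixed triples, with \(i,j\) in the same cluster and \(d_{ik}\le d_{jk}\).
\end{itemize}
The total coefficient of each \(c_i\) is then at most \(Z\), so \(\sum\phi_{ijk}\le Z\SC^*\).

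Using only \(d_{ij}\le c_i+c_j\) within a cluster and the triangle inequality among the three distances, one gets \(S_{ijk}\le3\phi_{ijk}\) for monochromatic triples, via a weighted sum of the three cluster inequalities. For mixed triples one gets \(S_{ijk}\le4\phi_{ijk}\). Then \(Q_j\) collects \(\pos{S_{ijk}-3\phi_{ijk}}\) over the mixed triples whose farther same-cluster agent is \(j\), and both assertions of the lemma follow at once. In the instance above, every triple \(\{0,1,3\}\) has \(S=7\) and \(\phi=2\), and the accounting \(7m^2=3\cdot2m^2+m^2\) is exact.
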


\begin{proof}
If Pairwise-Distance selects the pair \(\{i,j\}\), every agent
\(k\notin\{i,j\}\) incurs cost \(\min\{d_{ik},d_{jk}\}\).
Since this pair is selected with probability \(d_{ij}/Z\),
\begin{align}
 Z\SC(\PD)
 &=\sum_{i<j}d_{ij}
   \sum_{k\in N\setminus\{i,j\}}\min\{d_{ik},d_{jk}\}\notag\\
 &=\sum_{i<j<k}\Bigl(
   d_{ij}\min\{d_{ik},d_{jk}\}
   +d_{ik}\min\{d_{ij},d_{jk}\}
   +d_{jk}\min\{d_{ij},d_{ik}\}
   \Bigr).
 \label{eq:compressed-triple-sum}
\end{align}
The second equality groups together the three terms involving the same three-element subset of agents.
For each \(i<j<k\), let \(S_{ijk}\) denote the corresponding summand.

For comparison with the baseline \(3Z\SC^*\),
we therefore associate with each triple $\{i,j,k\}$ a quantity \(\phi_{ijk}\), informally representing the portion of \(Z\SC^*\) assigned to that triple.
The definitions below are chosen so that
\(\sum_{i<j<k}3\phi_{ijk}\le 3Z\SC^*\).
We compare \(S_{ijk}\) with \(3\phi_{ijk}\) for each triple and assign any uncovered part to one of the quantities \(Q_j\).

A triple \(\{i,j,k\}\) is \emph{monochromatic} if all three agents belong to \(A\) or all three belong to \(B\); otherwise it is \emph{mixed}.
Because \(A,B\) partition \(N\), every mixed triple has exactly two agents in one cluster and one agent in the other.

Suppose first that \(\{i,j,k\}\) is monochromatic, and relabel its agents so that
\(d_{ij}\le d_{ik}\le d_{jk}\). In this case, \(S_{ijk}=d_{ij}(d_{ik}+d_{jk})\).
Define
\[
 \phi_{ijk}
 =\frac23\bigl(d_{jk}c_i+d_{ik}c_j+d_{ij}c_k\bigr).
\]
Because the three agents share a comparison facility, we have
 $c_i+c_j\ge d_{ij}$,
 $c_i+c_k\ge d_{ik}$ and
 $c_j+c_k\ge d_{jk}$.
Multiply these inequalities, respectively, by nonnegatove multipliers
 $d_{jk}+d_{ik}-d_{ij}$,
 $d_{jk}+d_{ij}-d_{ik}$,
 $d_{ij}+d_{ik}-d_{jk}$, 
and summing gives
\[
 3\phi_{ijk}
 \ge
 2(d_{ij}d_{ik}+d_{ij}d_{jk}+d_{ik}d_{jk})
 -(d_{ij}^2+d_{ik}^2+d_{jk}^2).
\]
The difference between the right-hand side and \(S_{ijk}\) is
\[
 d_{ij}(d_{ik}-d_{ij})
 +(d_{jk}-d_{ik})(d_{ij}+d_{ik}-d_{jk})\ge0.
\]
Hence, we have
\begin{equation}\label{eq:mono-three}
 S_{ijk}\le3\phi_{ijk}
\end{equation}
for every monochromatic triple.

Now suppose that \(\{i,j,k\}\) is mixed.
Relabel its agents so that \(i,j\) are the unique same-cluster pair and \(d_{ik}\le d_{jk}\).
Define
\[
 \phi_{ijk}=d_{jk}c_i+d_{ik}c_j.
\]
For later use, we show that every mixed triple satisfies
\begin{equation}\label{eq:mixed-four-compressed}
 S_{ijk}\le4\phi_{ijk}.
\end{equation}
Indeed, we have
\(
 \phi_{ijk}
 \ge d_{ik}(c_i+c_j)
 \ge d_{ik}d_{ij}
\).
If \(d_{ij}\le d_{ik}\le d_{jk}\), then
\(
 S_{ijk}=d_{ij}(2d_{ik}+d_{jk})
 \le4d_{ij}d_{ik}
\) by the triangle inequality.
If \(d_{ik}\le d_{ij}\le d_{jk}\), then
\(
 S_{ijk}=d_{ik}(2d_{ij}+d_{jk})
 \le4d_{ik}d_{ij}.
\)
Finally, if \(d_{ik}\le d_{jk}\le d_{ij}\), then
\(
 S_{ijk}=d_{ik}(2d_{jk}+d_{ij})
 \le3d_{ik}d_{ij}.
\)
This proves \eqref{eq:mixed-four-compressed}.

\begin{definition}[Mixed-triple assignment]\label{def:mixed-excess-assignment}
 {For each mixed triple \(\{i,j,k\}\) with $i,j$ in the same cluster, add
\(\pos{S_{ijk}-3\phi_{ijk}}\) to \(Q_j\), where \(j\) is the farther
same-cluster agent from \(k\) (breaking ties consistently).  Thus, \(Q_j\)
is the total excess assigned to agent \(j\), and every mixed-triple excess is
counted exactly once.}
\end{definition}

It remains to bound the total baseline contribution.
Fix an agent \(i\).
For each pair \(\{j,k\}\subseteq N\setminus\{i\}\), the coefficient of \(c_i\) in \(\phi_{ijk}\) is either \(0\), \(2d_{jk}/3\), or \(d_{jk}\), according to the position of \(i\) and the type of the triple.
Its total coefficient is therefore at most \(\sum_{j<k}d_{jk}=Z\).
Summing over agents gives
\begin{equation}\label{eq:phi-congestion}
 \sum_{i<j<k}\phi_{ijk}
 \le Z\sum_{i\in N}c_i
 =Z\SC^*.
\end{equation}
Equations \eqref{eq:mono-three}, \eqref{eq:compressed-triple-sum}, and \eqref{eq:phi-congestion} now give
\[
 Z\SC(\PD)=\sum_{i<j<k}S_{ijk}
 \le
 3\sum_{i<j<k}\phi_{ijk}+\sum_jQ_j
 \le
 3Z\SC^*+\sum_jQ_j,
\]
which proves \eqref{eq:global-pd}.

For the final assertion that $\PD$ is a $4$-approximation, \eqref{eq:mono-three} and \eqref{eq:mixed-four-compressed} give \(S_{ijk}\le4\phi_{ijk}\) for every triple.
Thus, for every comparison pair of social cost \(\SC^*\),
\[
 Z\SC(\PD)
 =\sum_{i<j<k}S_{ijk}
 \le4\sum_{i<j<k}\phi_{ijk}
 \le4Z\SC^*.
\]
Dividing by \(Z\) and taking the infimum over comparison pairs proves the final assertion.
\end{proof}

Next, we derive a corresponding bound for Proportional and then show, through a one-row inequality for each anchor, that the savings of Proportional control these residual quantities.

For every \(j\in N\), define
\[
 D_{jA}=\sum_{\ell\in A}d_{j\ell},\qquad
 D_{jB}=\sum_{k\in B}d_{jk},\qquad
 D_j=D_{jA}+D_{jB}.
\]
Thus, \(D_{jA}\) and \(D_{jB}\) are the total distances from anchor \(j\) to clusters \(A\) and \(B\), respectively, and \(D_j\) is its total distance to all agents.
Define
\[
 K_j=
 \begin{cases}
 D_{jA}^2+
 \displaystyle\sum_{\ell\in A}d_{j\ell}
 \sum_{r\in N}\pos{d_{jr}-d_{\ell r}}, & j\in A,\\[2mm]
 D_{jB}^2+
 \displaystyle\sum_{\ell\in B}d_{j\ell}
 \sum_{r\in N}\pos{d_{jr}-d_{\ell r}}, & j\in B.
 \end{cases}
\]
In each case, the double sum is the sum, weighted by \(d_{j\ell}\), of the reductions obtained by opening the second facility at \(x_\ell\), over agents \(\ell\) in the same cluster as the first selected agent \(j\).

\begin{lemma}\label{lem:pm-accounting}
The Proportional mechanism satisfies
\begin{equation}\label{eq:global-pm}
 n\bigl(4\SC^*-\SC(\PM)\bigr)
 \ge
 |B|\SC_A^*+|A|\SC_B^*
 +\sum_{j\in N}\frac{K_j}{D_j}.
\end{equation}
\end{lemma}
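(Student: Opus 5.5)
The plan is to prove \eqref{eq:global-pm} one anchor at a time, as a ``one-row'' inequality for each anchor \(j\), and then sum over \(j\).

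\textbf{Step 1: rewrite the cost of one row.} Proportional picks anchor \(j\) with probability \(1/n\) and second agent \(\ell\) with probability \(d_{j\ell}/D_j\). Hence
\[
 n\,\SC(\PM)=\sum_{j\in N}\frac{1}{D_j}\sum_{\ell\in N}d_{j\ell}\sum_{r\in N}\min\{d_{jr},d_{\ell r}\}.
\]
Write \(\min\{d_{jr},d_{\ell r}\}=d_{jr}-\pos{d_{jr}-d_{\ell r}}\) and use \(\sum_\ell d_{j\ell}=D_j\). Row \(j\) then equals
\[
 D_j-\frac{R_j}{D_j},\qquad R_j=\sum_\ell d_{j\ell}\sum_r\pos{d_{jr}-d_{\ell r}}.
\]
For \(j\in A\), split \(R_j=R_j^{A}+R_j^{B}\) according to whether \(\ell\in A\) or \(\ell\in B\). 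The part \(R_j^{A}\) is exactly the double sum in \(K_j\), so it cancels against the matching part of \(K_j/D_j\) on the right of \eqref{eq:global-pm}. The remaining part of \(K_j\) combines with \(D_j\) to give
\[
 D_j-\frac{D_{jA}^2}{D_j}=D_{jB}\Bigl(1+\frac{D_{jA}}{D_j}\Bigr).
\]
The case \(j\in B\) is symmetric.

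\textbf{Step 2: split the budget among anchors.} The left side of \eqref{eq:global-pm} is \(4n\SC^*\). Subtracting \(|B|\SC_A^*+|A|\SC_B^*\) from it leaves exactly
\[
 \sum_{j\in A}\bigl(4\SC_A^*+3\SC_B^*\bigr)+\sum_{j\in B}\bigl(3\SC_A^*+4\SC_B^*\bigr).
\]
So it suffices to prove, for each \(j\in A\),
\[
 D_{jB}\bigl(D_j+D_{jA}\bigr)-R_j^{B}\;\le\;D_j\bigl(4\SC_A^*+3\SC_B^*\bigr),
\]
together with the symmetric statement for \(j\in B\).

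\textbf{Step 3: the single-row inequality (main obstacle).} Let \(\Delta=d(f_A,f_B)\). The triangle inequality gives
\[
 D_{jA}\le |A|c_j+\SC_A^*,\qquad |B|(\Delta-c_j)-\SC_B^*\le D_{jB}\le |B|(c_j+\Delta)+\SC_B^*.
\]
The danger is that \(D_{jB}\) grows with \(\Delta\) while the budget does not. This growth must be cancelled by \(R_j^B\): opening a facility at some \(x_\ell\) with \(\ell\in B\) saves roughly \(\Delta\) for every \(r\in B\).

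I would lower-bound \(R_j^B\) by restricting to \(r\in B\) and dropping the positive part:
\[
 R_j^B\ge\sum_{\ell\in B}d_{j\ell}\sum_{r\in B}(d_{jr}-d_{\ell r})=D_{jB}^2-\sum_{\ell,r\in B}d_{j\ell}d_{\ell r},
\]
with \(d_{\ell r}\le c_\ell+c_r\). Since \(D_j=D_{jA}+D_{jB}\), the \(D_{jB}^2\) term cancels the corresponding part of \(D_{jB}D_j\). What remains is \(2D_{jA}D_{jB}\) minus \(\sum_{\ell\in B}d_{j\ell}(|B|c_\ell+\SC_B^*)\), to be compared with \((D_{jA}+D_{jB})(4\SC_A^*+3\SC_B^*)\). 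The \(D_{jB}\cdot3\SC_B^*\) part of the budget absorbs the \(\SC_B^*\) errors. The key term \(2D_{jA}D_{jB}\) must be charged to \(D_{jB}\cdot4\SC_A^*\), which needs \(D_{jA}\le 2\SC_A^*\). That holds only when \(|A|c_j\le\SC_A^*\), that is, for anchors that are not far from their own facility.

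For the other anchors I would not drop the positive parts. I would keep the terms with \(r\in A\): when \(c_j\) is large, opening at \(\ell\in B\) does not help agents in \(A\), but the reductions \(\pos{d_{jr}-d_{\ell r}}\) for \(r\in A\) near \(f_A\) still contribute about \(c_j\) each. I expect to finish with a short case split on whether \(c_j\) exceeds \(\SC_A^*/|A|\), and on the relative size of \(\Delta\) and \(c_j\).

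This bookkeeping, and in particular verifying that the constant \(4\) in front of \(\SC_A^*\) suffices uniformly, is where I expect most of the work to lie. Once the row inequality holds, summing over all anchors gives \eqref{eq:global-pm}.
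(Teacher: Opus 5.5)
Your plan cannot be completed as stated. There is a sign slip in Step~1, and, more importantly, the per-anchor reduction in Step~2 is false.

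\textbf{The sign slip.} Write $\SC_j=D_j-R_j/D_j$ for the cost of row $j$. Since $K_j/D_j$ sits on the right of \eqref{eq:global-pm}, it is \emph{added} to the row cost. For $j\in A$ this gives $\SC_j+K_j/D_j=D_j+D_{jA}^2/D_j-R_j^B/D_j=2D_{jA}+(D_{jB}^2-R_j^B)/D_j$, not $D_j-D_{jA}^2/D_j-R_j^B/D_j$.

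\textbf{The per-anchor split fails in either normalization.} On the line take $f_A=0$, $f_B=L$, $m$ agents at $0$, one agent $j$ at $1$, and one agent at $L$, with $L$ large. Then $\SC_A^*=1$ and $\SC_B^*=0$. A direct computation gives $K_j=2m^2$ and $\SC_j+K_j/D_j=2m$, against the anchor's share $4\SC_A^*+3\SC_B^*=4$. Even your mis-signed target fails here: $D_{jB}(D_j+D_{jA})-R_j^B=2m(L-1)>4D_j$ once $m\ge3$ and $L$ is large. The rescue you sketch for far anchors does not exist in this example. For $\ell$ at $L$ and $r\in A$ at $0$, one has $\pos{d_{jr}-d_{\ell r}}=\pos{1-L}=0$. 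The overshoot of the far anchor is genuinely paid for by the other anchors of $A$: each anchor $i$ at $0$ has $\SC_i+K_i/D_i=2$, using only half of its share.

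\textbf{The missing idea: average over anchors.} You need to average over the anchors of a cluster rather than split the budget uniformly. The paper proves the row bound $\SC_j\le2D_{jA}+3\SC_B^*-K_j/D_j$. This is your corrected expression together with $D_{jB}^2-R_j^B\le3D_{jB}\SC_B^*$. Only after summing over $j$ does it use $\sum_{j\in A}D_{jA}=2\sum_{\{i,j\}\subseteq A}d_{ij}\le2(|A|-1)\SC_A^*$. So $2D_{jA}\le4\SC_A^*$ holds on average over $A$, but not for each $j$.

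\textbf{The $B$-part bound is too lossy.} Dropping the positive part and bounding by $\sum_{\ell,r\in B}d_{j\ell}d_{\ell r}$ fails in the following configuration. Let $A=\{j\}$ sit at $f_A$, let $m$ agents of $B$ sit at $f_B$ at distance $\delta$ from $x_j$, and let one agent of $B$ sit at distance $c\gg m\delta$ beyond $f_B$. That sum is then about $mc^2$, while $3D_{jB}\SC_B^*\approx3c^2$, and nothing on the $A$ side is available. The paper instead keeps the minimum and proves, for ordered pairs $\ell,r\in B$, the inequality $d_{j\ell}\min\{d_{jr},d_{\ell r}\}\le d_{j\ell}c_r+2c_\ell d_{jr}$, splitting on whether $d_{j\ell}\le2d_{jr}$. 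This sums to exactly $3D_{jB}\SC_B^*$. The $r\in A$ part is bounded trivially by $D_{jA}D_{jB}$.
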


\begin{proof}
Fix an anchor \(j\in A\), and let \(\SC_j\) be Proportional's expected social cost conditional on choosing \(j\) as the anchor.
Conditional on the anchor \(j\), the second agent \(\ell\) is selected with probability \(d_{j\ell}/D_j\).
We consider separately whether \(\ell\) belongs to \(A\) or \(B\) to upper-bound \(D_j\SC_j\).

\paragraph{Case 1: \(\ell\in A\).}
For a fixed \(\ell\in A\), the resulting social cost is
\[
 \sum_{r\in N}\min\{d_{jr},d_{\ell r}\}
 =D_j-\sum_{r\in N}\pos{d_{jr}-d_{\ell r}}.
\]
Consequently, the part of \(\SC_j\) arising from outcomes in which the second selected agent \(\ell\) belongs to \(A\) is
\(
 \sum_{\ell\in A}\frac{d_{j\ell}}{D_j}
 \left(D_j-\sum_{r\in N}\pos{d_{jr}-d_{\ell r}}\right).
\)
After multiplying by \(D_j\), this contribution becomes
\begin{align*}
 &\sum_{\ell\in A}d_{j\ell}
 \left(D_j-\sum_{r\in N}\pos{d_{jr}-d_{\ell r}}\right)=
 D_{jA}D_j-
 \sum_{\ell\in A}d_{j\ell}
 \sum_{r\in N}\pos{d_{jr}-d_{\ell r}}=D_{jA}D_j-(K_j-D_{jA}^2),
\end{align*}
where the last equality follows from the definition of \(K_j\) for \(j\in A\).

\paragraph{Case 2: \(\ell\in B\).}
We now bound the part of \(D_j\SC_j\) arising from outcomes in which the second selected agent belongs to \(B\).
For every such outcome, the facility at \(x_j\) can serve cluster \(A\).
Therefore, the contribution of cluster \(A\), summed over all \(\ell\in B\), is at most
\[
 \sum_{\ell\in B}d_{j\ell}
 \sum_{r\in A}\min\{d_{jr},d_{\ell r}\}
 \le
 \sum_{\ell\in B}d_{j\ell}D_{jA}
 =D_{jA}D_{jB}.
\]
We next bound the contribution of cluster \(B\).
For every ordered pair \(\ell,r\in B\),
\begin{equation}\label{eq:opp}
 d_{j\ell}\min\{d_{jr},d_{\ell r}\}
 \le d_{j\ell}c_r+2c_\ell d_{jr}.
\end{equation}
To prove this inequality, first suppose that \(d_{j\ell}\le2d_{jr}\).
Since \(\ell\) and \(r\) are assigned to the same comparison facility,
\[
 d_{j\ell}\min\{d_{jr},d_{\ell r}\}
 \le d_{j\ell}d_{\ell r}
 \le d_{j\ell}(c_\ell+c_r)
 \le2c_\ell d_{jr}+d_{j\ell}c_r.
\]
Now suppose that \(d_{j\ell}>2d_{jr}\).
The triangle inequality gives
\(d_{\ell r}\ge d_{j\ell}-d_{jr}>d_{jr}\), so the left-hand side is \(d_{j\ell}d_{jr}\).
Moreover,
\(
 d_{j\ell}-d_{jr}\le d_{\ell r}\le c_\ell+c_r\) and
 \(d_{jr}<d_{\ell r}\le c_\ell+c_r.
\)
Hence, we obtain
\[
 d_{j\ell}d_{jr}
 =(d_{j\ell}-d_{jr})d_{jr}+d_{jr}^2
 \le2(c_\ell+c_r)d_{jr}
 \le2c_\ell d_{jr}+c_r d_{j\ell},
\]
as desired.
Summing the ordered-pair inequality \eqref{eq:opp} over \(\ell,r\in B\) yields
\begin{align*}
 \sum_{\ell\in B}d_{j\ell}
 \sum_{r\in B}\min\{d_{jr},d_{\ell r}\}
 &\le
 \sum_{\ell,r\in B}\bigl(d_{j\ell}c_r+2c_\ell d_{jr}\bigr)
 =3D_{jB}\SC_B^*.
\end{align*}
Thus, the part of \(D_j\SC_j\) arising from outcomes in which the second selected agent belongs to \(B\) is at most
\(D_{jA}D_{jB}+3D_{jB}\SC_B^*\).

Consequently, combining Case 1 and Case 2 gives
\begin{align*}
 D_j\SC_j
 &\le
 D_{jA}D_j-(K_j-D_{jA}^2)+D_{jA}D_{jB}
 +3D_{jB}\SC_B^*\\
 &=2D_{jA}D_j-K_j+3D_{jB}\SC_B^*\\
 &\le D_j(2D_{jA}+3\SC_B^*)-K_j.
\end{align*}
Thus, for the anchor $j\in A$, we obtain
\begin{equation}\label{eq:compressed-anchor}
\SC_j\le2D_{jA}+3\SC_B^*-\frac{K_j}{D_j}.
\end{equation}
The symmetric bound holds for anchors in \(B\).

Summing \eqref{eq:compressed-anchor} and its symmetric counterpart over all anchors 
gives
\[
 n\SC(\PM)=\sum_{j\in N}\SC_j
 \le
 2\sum_{j\in A}D_{jA}
 +2\sum_{j\in B}D_{jB}
 +3|A|\SC_B^*+3|B|\SC_A^*
 -\sum_{j\in N}\frac{K_j}{D_j}.
\]
Subtracting from \(4n\SC^*\) yields
\begin{align*}
 &n\bigl(4\SC^*-\SC(\PM)\bigr)\\
 \ge~&
 |B|\SC_A^*+|A|\SC_B^*+4\left(|A|\SC_A^*
 -\frac12\sum_{j\in A}D_{jA}\right)+4\left(|B|\SC_B^*
 -\frac12\sum_{j\in B}D_{jB}\right)
 +\sum_{j\in N}\frac{K_j}{D_j}.
\end{align*}
Finally,
\[
 \frac12\sum_{j\in A}D_{jA}
 =\sum_{\{i,j\}\subseteq A}d_{ij}
 \le\sum_{\{i,j\}\subseteq A}(c_i+c_j)
 =(|A|-1)\SC_A^*,
 \qquad
 \frac12\sum_{j\in B}D_{jB}
 \le (|B|-1)\SC_B^*,
\]
because \(d_{ij}\le c_i+c_j\) for agents in the same cluster.
The two parenthesized terms are therefore nonnegative. Discarding them proves \eqref{eq:global-pm}.
\end{proof}

The remaining task is local.
For an anchor \(j\), the next lemma gives a lower bound on its corresponding Proportional savings in the right-hand side of \eqref{eq:global-pm}.
In particular, the savings are sufficient to cover
\((4\sqrt3-6)n/Z\) times the mixed-triple excess \(Q_j\) assigned to \(j\) in Definition~\ref{def:mixed-excess-assignment}.

\begin{lemma}\label{lem:row-charging}
For every \(j\in A\),
\begin{equation}\label{eq:row-charging}
 |B|c_j+\frac{K_j}{D_j}
 \ge
(4\sqrt3-6)n\frac{Q_j}{Z}.
\end{equation}
The symmetric inequality holds for \(j\in B\), with \(|A|\) in place of \(|B|\).
\end{lemma}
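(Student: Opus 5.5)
The plan is to fix \(j\in A\) (the case \(j\in B\) is symmetric) and expand \(Q_j\) by Definition~\ref{def:mixed-excess-assignment}. Thus \(Q_j=\sum\pos{S_{ijk}-3\phi_{ijk}}\), where the sum runs over pairs \((i,k)\) with \(i\in A\setminus\{j\}\), \(k\in B\) and \(d_{ik}\le d_{jk}\), and \(\phi_{ijk}=d_{jk}c_i+d_{ik}c_j\). The left-hand side of \eqref{eq:row-charging} contains two kinds of savings. The first is the linear term \(|B|c_j\). The second is the quadratic term \(K_j/D_j\), which contains \(D_{jA}^2/D_j\) together with the reduction terms \(\sum_{\ell\in A}d_{j\ell}\sum_r\pos{d_{jr}-d_{\ell r}}\). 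Specializing the latter to \(\ell=i\) and \(r=k\in B\) with \(d_{ik}\le d_{jk}\), it is at least \(\sum_{i,k} d_{ij}(d_{jk}-d_{ik})\). The goal is to show that each triple excess is controlled by a combination of \(d_{ij}(d_{jk}-d_{ik})\), of \(d_{ij}\) times a quantity controlled by \(c_j\), and of \(d_{ij}d_{ik}\)-type terms that feed into \(D_{jA}^2\).

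\textbf{Step 1 (per-triple bound).} I will go through the three orderings of \(d_{ij},d_{ik},d_{jk}\) used in the proof of \eqref{eq:mixed-four-compressed}. In each ordering I will bound \(S_{ijk}-3\phi_{ijk}\) from above, using \(c_i+c_j\ge d_{ij}\) and the triangle inequality. I expect a bound of the form
\[
\pos{S_{ijk}-3\phi_{ijk}}\le d_{ij}\cdot\min\{\,d_{ik}-3c_j',\;\text{const}\cdot(d_{jk}-d_{ik})+\dots\},
\]
which vanishes when \(c_j\) is large relative to \(d_{ij}\). The natural parameterization is \(c_j=t\,d_{ij}\) and \(c_i\ge(1-t)d_{ij}\); the excess then decreases linearly in \(t\). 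The useful bound will be a minimum of two linear expressions: one charged to \(c_j\) and one charged to \(d_{ij}\cdot(\text{gap } d_{jk}-d_{ik})\) or to \(d_{ij}^2\).

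\textbf{Step 2 (summation over the row).} Summing over \(k\in B\) should give \(Q_j\le\sum_{i\in A}d_{ij}\bigl(|B|\,a_i+\sum_k b_{ik}\bigr)\), where the \(a_i\) pieces are compared with \(c_j\) and the \(b_{ik}\) pieces with the gap terms in \(K_j\). Then I need to convert \(n/Z\) into the normalizations appearing on the left-hand side. I plan to use the lower bounds \(Z\ge D_j\) and \(Z\ge\sum_{\ell\in A}d_{j\ell}\cdot(\dots)\), and to split \(n=|A|+|B|\). This way \(|B|/Z\) pairs with the \(|B|c_j\) saving and \(|A|\)-weighted terms pair with \(D_{jA}^2/D_j\) via Cauchy--Schwarz, \(\sum_{i\in A}d_{ij}\le D_{jA}\).

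\textbf{Step 3 (main obstacle: balancing the two savings).} After Step 2, the inequality should reduce to a two-term comparison of the form \(X+Y\ge(4\sqrt3-6)\cdot(\text{excess})\), where the excess is bounded by \(\min\) of linear functions of the two savings \(X\) and \(Y\). The constant \(4\sqrt3-6\) should come from an AM--GM optimization: one minimizes over a single ratio parameter \(s\) something like \((1+s)/(\text{excess}(s))\), and the minimum is attained at a root involving \(\sqrt3\). The hard part is choosing the per-triple split in Step 1 so that, after summation, it matches exactly the shape of \(K_j/D_j\) (a squared sum divided by \(D_j\), rather than a plain sum). If the direct charging loses constant factors, I will first try a weighted Cauchy--Schwarz \(\bigl(\sum_i d_{ij}u_i\bigr)^2\le D_{jA}\sum_i d_{ij}u_i^2\) to absorb the quadratic term, and then tune the weight to hit \(4\sqrt3-6\).
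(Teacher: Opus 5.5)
\textbf{There is a genuine gap: both decisive steps are left open, and the concrete choices you make would fail, because \eqref{eq:row-charging} is asymptotically tight.} Take the line profile with $p$ agents at $0$, $q$ agents at $1$ and $j$ at $2$, with comparison facilities $0$ and $1$. Then $j$'s cluster is the agents at $1$ together with $j$, the other cluster has $p$ agents, and $c_j=1$. Every triple assigned to $j$ has excess exactly $1$, so $Q_j=pq$. Moreover $Z=pq+2p+q$, $D_{jA}=q$ and $D_{jB}=2p$. The reduction terms split into $R_j^B=pq$ (improved agents at $0$) and $R_j^A=q^2$ (improved agents at $1$). With $p/q\to1+\sqrt3$, both sides of \eqref{eq:row-charging} equal $2\sqrt3\,q(1+o(1))$.

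This instance defeats your plan in two ways.
\begin{itemize}
\item \emph{Your budget is too small.} You keep $|B|c_j$, $D_{jA}^2/D_j$ and the reductions at $r\in B$. These total at most $q\bigl(t+\frac{1+t}{1+2t}\bigr)\approx3.31\,q$ at $t=p/q=1+\sqrt3$, strictly below $2\sqrt3\,q\approx3.46\,q$. So the within-cluster reductions $R_j^A=\sum_{i\in A}d_{ji}\sum_{k\in A}\pos{d_{jk}-d_{ik}}$ are indispensable. No Cauchy--Schwarz reweighting of the terms you keep can replace them.
\item \emph{Your conversion of $n/Z$ is too lossy.} Here $Z$ comes almost entirely from pairs not containing $j$, so $Z\ge D_j=q+2p$ loses a factor of order $q$. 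Splitting $n=|A|+|B|$ does not repair this. Since the lemma is tight, the conversion of $n/Z$ into row-local quantities must lose nothing asymptotically.
\end{itemize}

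The paper supplies exactly these missing pieces.
\begin{itemize}
\item \emph{Charging quantity.} It charges the excess not to $d_{ij}$ but to $z_{ji}=\sqrt{d_{ij}\pos{d_{ij}-3c_i}}$. The excess is at most $(d_{jk}-d_{ik})\pos{d_{ij}-3c_i}$. Hence it is simultaneously at most $z_{ji}^2$, $z_{ji}d_{ik}$ and $z_{ji}(d_{jk}-d_{ik})$, while $z_{ji}\le c_j$.
\item \emph{Your Step 1 heuristic points the wrong way.} The excess is governed by $c_i$, not $c_j$. Under your constraint $c_i\ge(1-t)d_{ij}$ it is at most $(d_{jk}-d_{ik})d_{ij}\pos{3t-2}$, which increases in $t$. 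Positive excess forces $c_j>2d_{ij}/3$, and that is what lets $|B|c_j$ pay.
\item \emph{Within-cluster savings.} It proves $R_j^A\ge\pi_j^2$ with $\pi_j=\sum_i z_{ji}$. The diagonal term $k=i$ gives $d_{ij}^2\ge z_{ji}^2$. Two active agents $i,k$ (those with $d_{ij}>3c_i$ and $d_{kj}>3c_k$) satisfy $c_i+c_k<\min\{d_{ij},d_{kj}\}$, $d_{ij}<2d_{kj}$ and $d_{kj}<2d_{ij}$. This forces their two cross terms to be at least $2z_{ji}z_{jk}$. Combined with $D_{jA}\ge\pi_j$, this yields the row budget $K_j/D_j\ge(2\pi_j^2+H_j)/(\pi_j+D_{jB})$, where the factor $2$ is exactly what the tight instance needs.
\item \emph{Global factor.} It proves $nD(R,B)\le(|R|+|B|)Z$ for disjoint $R,B$, by summing $d_{ik}\le d_{io}+d_{ok}$ over every third agent $o$. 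Applying this to the level sets $\{i\in A\setminus\{j\}:z_{ji}\ge\theta c_j\}$ gives $nG_j\le(\pi_j+|B|c_j)Z$, where $G_j=\sum_{i,k}z_{ji}d_{ik}$. This bound is asymptotically lossless on the instance above.
\end{itemize}

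Only after these steps does the scalar optimization in $\beta=|B|c_j/\pi_j$ reach $4\sqrt3-6$, at $\beta=1+\sqrt3$.
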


\begin{proof}
The claim is immediate if \(Q_j=0\).
Assume \(Q_j>0\).
Then \(|B|>0\) due to the existence of mixed triples.

Consider an assigned mixed triple with \(i,j\in A\) and \(k\in B\), oriented
so that \(d_{ik}\le d_{jk}\).  Recall that $S_{ijk}=d_{ij}\min\{d_{ik},d_{jk}\}
   +d_{ik}\min\{d_{ij},d_{jk}\}
   +d_{jk}\min\{d_{ij},d_{ik}\}$ and
\(\phi_{ijk}=d_{jk}c_i+d_{ik}c_j\). Define
\[
 z_{ji}=\sqrt{d_{ij}(d_{ij}-3c_i)_+}.
\]
The three possible positions of \(d_{ij}\) relative to \(d_{ik}\le d_{jk}\),
together with \(d_{ij}\le c_i+c_j\), give
\[
 S_{ijk}-3\phi_{ijk}
 \le
 \begin{cases}
  0,
    & d_{ik}\le d_{jk}\le d_{ij},\\
  (d_{jk}-d_{ik})(d_{ij}-3c_i),
    & d_{ij}\le d_{ik}\le d_{jk},\\
  (d_{jk}-d_{ij})(d_{ik}-3c_i),
    & d_{ik}\le d_{ij}\le d_{jk}.
 \end{cases}
\]

The triangle inequality gives
\(
 d_{jk}-d_{ik}\le d_{ij}\) and
 \(d_{jk}-d_{ij}\le d_{ik}.
\)
In the third case, we also have
\(d_{jk}-d_{ij}\le d_{jk}-d_{ik}\).
These inequalities imply the simultaneous bounds
\begin{equation}\label{eq:three-local-compressed}
 \pos{S_{ijk}-3\phi_{ijk}}\le z_{ji}^2,\qquad
 \pos{S_{ijk}-3\phi_{ijk}}\le z_{ji}d_{ik},\qquad
 \pos{S_{ijk}-3\phi_{ijk}}\le z_{ji}(d_{jk}-d_{ik}).
\end{equation}
Moreover, note that
\begin{equation}\label{eq:z-range-compressed}
 0\le z_{ji}\le d_{ij},\qquad z_{ji}\le c_j,
\end{equation}
where the last inequality is because 
\(d_{ij}\le c_i+c_j\), and
\(
 z_{ji}^2
 \le
 (c_i+c_j)(c_j-2c_i)
 \le c_j^2.
\)

Define
\begin{align*}
\pi_j=\sum_{i\in A\setminus\{j\}}z_{ji},\qquad G_j&=
 \sum_{\substack{i\in A\setminus\{j\}\\k\in B}}
 z_{ji}d_{ik},\qquad
 H_j=
 \sum_{\substack{i\in A\setminus\{j\}\\k\in B}}
 z_{ji}\pos{d_{jk}-d_{ik}}.
\end{align*}
Summing \eqref{eq:three-local-compressed} over the assigned triples and using \eqref{eq:z-range-compressed} gives
\begin{equation}\label{eq:compressed-row-constraints}
 Q_j\le G_j,\qquad
 Q_j\le H_j,\qquad
 Q_j\le |B|c_j\pi_j.
\end{equation}
Furthermore, because
\(d_{jk}\le d_{ik}+\pos{d_{jk}-d_{ik}}\) for every \(i,k\), we have
\begin{equation}\label{eq:compressed-piD}
 \pi_jD_{jB}\le G_j+H_j.
\end{equation}

For this fixed anchor \(j\in A\), define
\[
 R_j=
 \sum_{i\in A}d_{ji}
 \sum_{k\in N}\pos{d_{jk}-d_{ik}},
\]
so that \(K_j=D_{jA}^2+R_j\).
We next identify the savings contained in \(R_j\).
Split \(R_j=R_j^A+R_j^B\) according to whether the improved agent \(k\) belongs to \(A\) or \(B\).
Since \(d_{ij}\ge z_{ji}\),
\begin{equation}\label{eq:compressed-cross-saving}
 R_j^B\ge H_j.
\end{equation}

\begin{claim}\label{clm:compressed-energy}
\(R_j^A\ge\pi_j^2\).
\end{claim}

\begin{proof}
Call \(i\in A\setminus\{j\}\) \emph{active} if \(d_{ij}>3c_i\).
For each active \(i\), the diagonal term in \(R_j^A\) with $k=i$ is \(d_{ij}^2\ge z_{ji}^2\).
Now fix two distinct active agents \(i,k\in A\).  We first show
\begin{equation}\label{eq:compressed-active-geometry}
 c_i+c_k<\min\{d_{ij},d_{kj}\},\qquad
 d_{ij}<2d_{kj},\qquad d_{kj}<2d_{ij}.
\end{equation}
If \(d_{ij}\le c_i+c_k\), then \(d_{ij}>3c_i\) implies
\(c_k>2c_i\), and
\(
 d_{kj}\le d_{ij}+d_{ik}
 \le2(c_i+c_k)<3c_k,
\)
contradicting the activity of \(k\).
Thus \(d_{ij}>c_i+c_k\), and symmetrically
\(d_{kj}>c_i+c_k\).
The remaining two inequalities follow from
\(|d_{ij}-d_{kj}|\le d_{ik}\le c_i+c_k
<\min\{d_{ij},d_{kj}\}\).

The two cross terms in \(R_j^A\) generated by \(i,k\) are 
\[
d_{ji}\pos{d_{jk}-d_{ik}}+d_{jk}\pos{d_{ji}-d_{ki}}\ge
 2d_{ij}d_{kj}-(d_{ij}+d_{kj})(c_i+c_k).
\]
By \eqref{eq:compressed-active-geometry},
\[
 (c_i+c_k)\left(\frac1{d_{ij}}+\frac1{d_{kj}}\right)
 \le
 3\left(\frac{c_i}{d_{ij}}+\frac{c_k}{d_{kj}}\right).
\]
Indeed, after multiplication by \(d_{ij}d_{kj}\), the difference between
the RHS and LHS is
\(
 c_i(2d_{kj}-d_{ij})+c_k(2d_{ij}-d_{kj})\ge0.
\)
Since \(c_i/d_{ij},c_k/d_{kj}<1/3\),
\[
 \left(1-\frac32\left(\frac{c_i}{d_{ij}}+
       \frac{c_k}{d_{kj}}\right)\right)^2
 -\left(1-\frac{3c_i}{d_{ij}}\right)
  \left(1-\frac{3c_k}{d_{kj}}\right)
 =\frac94\left(\frac{c_i}{d_{ij}}-
       \frac{c_k}{d_{kj}}\right)^2\ge 0,
\]
and the first term inside the square is nonnegative.
Hence the two cross terms are at least
\[
 2d_{ij}d_{kj}
 \sqrt{\left(1-\frac{3c_i}{d_{ij}}\right)
       \left(1-\frac{3c_k}{d_{kj}}\right)}
 =2z_{ji}z_{jk}.
\]
Summing all diagonal and cross contributions proves the claim.
\end{proof}

Equation \eqref{eq:compressed-cross-saving} and Claim~\ref{clm:compressed-energy}, together with
\(\pi_j\le D_{jA}\) and \(H_j\le\pi_jD_{jB}\), imply
\begin{align}
 \frac{K_j}{D_j}
 &=
 \frac{D_{jA}^2+R_j}{D_{jA}+D_{jB}}\ge
 \frac{D_{jA}^2+\pi_j^2+H_j}{D_{jA}+D_{jB}}\ge
 \frac{2\pi_j^2+H_j}{\pi_j+D_{jB}},
 \label{eq:compressed-row-budget}
\end{align}
where the last step is because the function
\(
 u\to\frac{u^2+\pi_j^2+H_j}{u+D_{jB}}
\)
has non-negative derivatives when $u\ge\pi_j$.

We now compare the local scale \(G_j\) with the global denominator \(Z/n\).
The assumption \(Q_j>0\) and \eqref{eq:compressed-row-constraints} imply that
\(c_j,\pi_j,G_j\) are positive.
For \(0\le\theta\le1\), define
\[
 T_\theta=\{i\in A\setminus\{j\}:z_{ji}\ge\theta c_j\}.
\]
Then
\begin{equation*}
 G_j=c_j\int_0^1D(T_\theta,B)\,d\theta,
 \qquad
 \pi_j=c_j\int_0^1|T_\theta|\,d\theta,
\end{equation*}
where \(D(R,B)=\sum_{i\in R,k\in B}d_{ik}\).

\begin{claim}\label{clm:tyi}
$nG_j\le(\pi_j+|B|c_j)Z.$
\end{claim}
\begin{proof}
For any disjoint \(R,B\subseteq N\), we show that
$nD(R,B)\le(|R|+|B|)Z.$
Indeed, for every \(o\in N\setminus(R\cup B)\), the triangle inequality gives
\[
\begin{aligned}
 D(R,B)
 &\le\sum_{\substack{i\in R,k\in B}}(d_{io}+d_{ok})=|B|\sum_{i\in R}d_{io}
   +|R|\sum_{k\in B}d_{ko}.
\end{aligned}
\]
Summing over \(o\in N\setminus(R\cup B)\) yields
\[
\begin{aligned}
 |N\setminus(R\cup B)|D(R,B)
 &\le
 |B|\sum_{\substack{i\in R\\o\in N\setminus(R\cup B)}}d_{io}
 +|R|\sum_{\substack{k\in B\\o\in N\setminus(R\cup B)}}d_{ko}\le(|R|+|B|)\bigl(Z-D(R,B)\bigr).
\end{aligned}
\]
Since \(R\cap B=\varnothing\), we have
\(|N\setminus(R\cup B)|=n-|R|-|B|\), and the last inequality rearranges to $nD(R,B)\le(|R|+|B|)Z.$
Applying this inequality to every level set in the definition of $G_j$, we  obtain the claim.
\end{proof}

It remains to show that the local savings control the assigned excess \(Q_j\).
The next claim combines the row budget with the constraints on
\(G_j,H_j\), and \(Q_j\) to give the required comparison.
\begin{claim}\label{clm:compressed-scalar}
For the fixed anchor \(j\in A\),
\[
 |B|c_j+\frac{K_j}{D_j}
 \ge
 (4\sqrt3-6)(\pi_j+|B|c_j)\frac{Q_j}{G_j}.
\]
\end{claim}

\begin{proof}
By \eqref{eq:compressed-row-budget} and \eqref{eq:compressed-piD},
\[
 |B|c_j+\frac{K_j}{D_j}
 \ge
 |B|c_j+
 \frac{2\pi_j^2+H_j}{\pi_j+(G_j+H_j)/\pi_j}.
\]
Set the following three dimensionless ratios:
\[
 \beta=\frac{|B|c_j}{\pi_j},\qquad
 g=\frac{G_j}{|B|c_j\pi_j},\qquad
 y=\frac{H_j}{|B|c_j\pi_j}.
\]
Dividing the preceding lower bound by \(\pi_j\) gives
\[
 \frac{1}{\pi_j}\left(|B|c_j+\frac{K_j}{D_j}\right)
 \ge
 \beta+\frac{2+\beta y}{1+\beta(g+y)}.
\]
Moreover, \eqref{eq:compressed-row-constraints} implies
\(
 \frac{Q_j}{G_j}
 \le \min\left\{1,\frac{y}{g},\frac1g\right\}.
\)
It therefore suffices to prove
\[
 \frac{g}{(1+\beta)\min\{g,y,1\}}
 \left(
  \beta+\frac{2+\beta y}{1+\beta(g+y)}
 \right)
 \ge 4\sqrt3-6.
\]
We distinguish which of \(g,y,1\) attains the minimum, breaking ties
arbitrarily.

Suppose first that \(g\le\min\{y,1\}\).  If \(\beta g\le1\), then
\((2+\beta y)/(1+\beta(g+y))\ge1\), so the left-hand side is at least one.
If \(\beta g>1\), this fraction is increasing in \(y\), since its derivative is
\(
 \frac{\beta(\beta g-1)}{(1+\beta(g+y))^2}>0.
\)
Hence its minimum is attained at \(y=g\).

Next suppose that \(y\le\min\{g,1\}\).  For fixed \(\beta,y\), the derivative
with respect to \(g\) of
\(
 g\bigl(\beta+(2+\beta y)/(1+\beta(g+y))\bigr)
\)
is
\(
 \beta+
 \frac{(2+\beta y)(1+\beta y)}{(1+\beta(g+y))^2}>0.
\)
Thus the minimum is attained at \(g=y\).  In both of the preceding cases,
the value at \(g=y\) is at least one when \(\beta g\le1\).  We have therefore
reduced the nontrivial regime to \(g=y\le1\) and \(\beta g>1\), where
the expression becomes
\[
 \frac{1}{1+\beta}
 \left(\beta+\frac{2+\beta g}{1+2\beta g}\right).
\]
The second fraction depends only on \(\beta g\) and is smaller than one.
Holding \(\beta g\) fixed, the last display increases with \(\beta\).
Since \(g\le1\), its minimum is therefore attained at \(g=1\).

Finally, suppose that \(1\le\min\{g,y\}\).  The same derivative calculation
shows that the minimum is attained at \(g=1\).  If \(\beta\le1\), then
\((2+\beta y)/(1+\beta(1+y))\ge1\), and the expression is at least one.
If \(\beta>1\), this fraction is increasing in \(y\), so its minimum is
attained at \(y=1\).

All nontrivial cases have now reduced to minimizing
\(
 \frac{2(\beta^2+\beta+1)}{(\beta+1)(2\beta+1)}
\) for $\beta>1$.
By computing the derivative, the minimum is attained at \(\beta=1+\sqrt3\), where the value is
\(4\sqrt3-6\).  This proves the claim.
\end{proof}

Finally, Claim~\ref{clm:compressed-scalar} and
\ref{clm:tyi} give
\[
 |B|c_j+\frac{K_j}{D_j}
 \ge
(4\sqrt3-6)n\frac{Q_j}{Z},
\]
which proves the lemma.
\end{proof}

We now combine Lemmas~\ref{lem:pd-accounting}-\ref{lem:row-charging} to prove the upper bound
\(\SC(\HD)\le(3+\lambda^*)\OPT\) in Theorem~\ref{thm:main}.

\begin{proof}[Proof of Theorem~\ref{thm:main}: approximation upper bound]
We first establish the following tradeoff, valid on every metric space:
\begin{equation}\label{eq:joint-tradeoff}
\SC(\PM)+(4\sqrt3-6)\SC(\PD)
 \le \bigl(4+3(4\sqrt3-6)\bigr)\OPT.
\end{equation}
Summing Lemma~\ref{lem:row-charging} over all \(j\in N\) and applying
Lemma~\ref{lem:pm-accounting} yields
\[
 n\bigl(4\SC^*-\SC(\PM)\bigr)
 \ge
 (4\sqrt3-6)n\frac{\sum_{j\in N}Q_j}{Z}.
\]
Moreover, Lemma~\ref{lem:pd-accounting} gives
\(
 \frac{\sum_{j\in N}Q_j}{Z}
 \ge \SC(\PD)-3\SC^*.
\)
Combining these inequalities shows that every comparison pair of cost
\(\SC^*\) satisfies
\[
 \SC(\PM)+(4\sqrt3-6)\SC(\PD)
 \le \bigl(4+3(4\sqrt3-6)\bigr)\SC^*.
\]
Taking the infimum over comparison pairs proves
\eqref{eq:joint-tradeoff}.  

By the definition of \(\HD\) and \eqref{eq:joint-tradeoff}, we obtain
\begin{align}
 \SC(\HD)
 &=\frac{1}{4\sqrt3-5}\SC(\PM)
   +\frac{4\sqrt3-6}{4\sqrt3-5}\SC(\PD)\notag\\
 &\le
 \frac{4+3(4\sqrt3-6)}{4\sqrt3-5}\OPT\notag\\
 &=(3+\lambda^*)\OPT.
 \label{eq:hd-upper}
\end{align}
\end{proof}

\subsection{Tightness and the optimal mixing weight}
\label{sec:tightness}
We now show that the weight $\lambda^*=\frac{5+4\sqrt3}{23}$ in our mechanism $\HD=\Mix_{\lambda^*}$ is optimal among the mechanism class $\Mix_{\lambda}=\lambda\PM+(1-\lambda)\PD$, for all probabilities $\lambda\in[0,1]$ assigned to Proportional. 
We construct two families of line instances that determine the lower bound for the mixtures $\Mix_{\lambda}$.
On the first family, the approximation ratios of Proportional and Pairwise-Distance converge to \(4\) and \(3\), respectively.
On the second, the ratios of the two mechanisms converge to $10-4\sqrt3$ and \(4\), respectively.

\begin{proposition}\label{prop:lower-frontier}
For every \(\lambda\in[0,1]\),
\begin{equation}\label{eq:lower-frontier}
 \operatorname{apx}(\Mix_\lambda)
 \ge
 \max\{3+\lambda,\ 4-(4\sqrt3-6)\lambda\}.
\end{equation}
\end{proposition}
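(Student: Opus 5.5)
Since $\SC(\Mix_\lambda)=\lambda\SC(\PM)+(1-\lambda)\SC(\PD)$ is linear in $\lambda$, it suffices to exhibit two families of line profiles $\mathbf x^{(1)}_t$ and $\mathbf x^{(2)}_t$ with $\OPT_2>0$. On the first family we want
\[
\SC(\PM)/\OPT\to4,\qquad \SC(\PD)/\OPT\to3 .
\]
On the second we want
\[
\SC(\PM)/\OPT\to10-4\sqrt3,\qquad \SC(\PD)/\OPT\to4 .
\]
The first family then gives $\operatorname{apx}(\Mix_\lambda)\ge 4\lambda+3(1-\lambda)=3+\lambda$. The second gives $\operatorname{apx}(\Mix_\lambda)\ge\lambda(10-4\sqrt3)+4(1-\lambda)=4-(4\sqrt3-6)\lambda$. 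Taking the maximum yields \eqref{eq:lower-frontier}. In each case we compute the two expected costs in closed form from Definitions~\ref{def:proportional} and~\ref{def:pairwise-distance}, keep only leading-order terms, and pass to the limit.

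\textbf{First family.} This should be the standard tight example for Proportional. Take a cluster of $m$ agents at $0$, one agent at $1$, and a single far agent at $L$. The parameters are tuned so that $\OPT=1$ (facilities at $0$ and $L$). For Proportional, anchors in the big cluster mostly pick the far agent, which is fine. However, the isolated agents act as anchors with non-negligible probability, and then choose a partner that leaves the big cluster paying a lot. I will instead follow the accounting in Lemma~\ref{lem:pm-accounting} and Lemma~\ref{lem:pd-accounting} in reverse. Tightness of $\PM$ at $4$ requires equality in \eqref{eq:opp} together with the discarded nonnegative terms vanishing. Tightness of $\PD$ at exactly $3$ requires every triple to be monochromatic-tight in \eqref{eq:mono-three} with $Q_j=0$. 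This suggests two equal-size clusters, each consisting of two sub-points at distance $\approx 2c$, with the clusters very far apart, and the sizes and the ratio $c/L$ sent to limits in a suitable order. I would parametrize such a profile by a few weights, compute both ratios as rational functions, and choose limits realizing $(4,3)$.

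\textbf{Second family.} Here $\PD$ must hit its bound $4$, which is possible only through mixed triples with $S_{ijk}=4\phi_{ijk}$. Using the equality cases in \eqref{eq:mixed-four-compressed}, this means $d_{ij}\approx d_{ik}$, $d_{jk}\approx 2d_{ik}$, and $c_i\approx0$. Concretely, I would take a heavy cluster at $0$ (where $c_i=0$), an agent at $1$ in the same optimal cluster, and agents at $2$ forming the other cluster. The value $10-4\sqrt3$ for $\PM$ must arise from the extremal $\beta=1+\sqrt3$ in Claim~\ref{clm:compressed-scalar}, with $g=y=1$. So the size ratio between the ``$B$'' side and the active side should be tuned so that $|B|c_j/\pi_j=1+\sqrt3$. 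I would place $p$ agents at a point, $q$ agents at a nearby point, and a separate group far away, with $p/q$ free. I would then compute $\SC(\PD)$ and $\SC(\PM)$ exactly, let the sizes go to infinity with $q/p$ and the distance scales fixed, and optimize the free ratio. Checking that the optimum reproduces $10-4\sqrt3$ while $\PD\to4$ simultaneously is the main obstacle. I expect to iterate on the configuration guided by the equality cases of the upper-bound proof, in particular equality in \eqref{eq:compressed-row-budget}, in Claim~\ref{clm:tyi}, and in \eqref{eq:three-local-compressed}.

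Finally, the last sentence of Theorem~\ref{thm:main} follows from \eqref{eq:lower-frontier}. The two lines in the maximum intersect where $3+\lambda=4-(4\sqrt3-6)\lambda$, that is, at $\lambda=1/(4\sqrt3-5)=\lambda^*$. Combined with the upper bound \eqref{eq:hd-upper}, this shows that $\lambda^*$ minimizes the ratio and that the bound $3+\lambda^*$ is tight.
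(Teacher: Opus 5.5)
\textbf{Genuine gap: the two instances are never constructed.} Your reduction matches the paper's. Since $\SC(\Mix_\lambda)$ is affine in $\lambda$, two line families whose $(\PM,\PD)$ ratio pairs tend to $(4,3)$ and $(10-4\sqrt3,4)$ give the two bounds. But the proposition's whole content is those two families and their limit computations. You supply neither, and the configurations you commit to do not work.

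\textbf{First family.} You discarded the right candidate. The profile with $m$ agents at $0$, one at $1$ and one at $L$ is the paper's $(0^m,\eps,1)$ with $\eps=m^{-2}$, rescaled by $1/\eps$. It works whenever $m\to\infty$ and $L/m\to\infty$:
\begin{itemize}
\item $\OPT=1$ and $\SC(\PD)=m(3L-2)/(m(L+1)+L-1)\to3$.
\item For $\PM$, each anchor at $0$ costs $(2L-1)/(L+1)\to2$. The rare pick of the agent at $1$ leaves the far agent paying $L-1$, so it contributes as much as the common pick.
\item The anchor at $1$, drawn with probability $1/(m+2)$, costs $2m(L-1)/(m+L-1)\approx 2m$. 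This needs $L/m\to\infty$.
\item The anchor at $L$ costs $O(1)$, so the total tends to $4$.
\end{itemize}
Your reverse-engineering also misfires here: the ratio $3$ for $\PD$ comes from the mixed triples $\{0,1,L\}$, with $S_{ijk}=3L-2$ and $\phi_{ijk}=L$, not from monochromatic ones. Your replacement, two equal-size far-apart clusters, cannot reach $4$. Apply Lemma~\ref{lem:pm-accounting} to the optimal pair: the slack $|B|\SC_A^*+|A|\SC_B^*$ equals $\frac{n}{2}\OPT$ when $|A|=|B|$, so $\SC(\PM)\le 3.5\,\OPT$. Driving Proportional to $4$ forces one optimal cluster to be a vanishing, essentially cost-free fraction of the agents.

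\textbf{Second family.} Your profile $(0^p,1,2^q)$, with the heavy point $0$ serving the agent at $1$, has pair weights $1,2,1$. It gives $\SC(\PD)=4pq/(p+q+2pq)<2=2\OPT$. You misread the equality case of \eqref{eq:mixed-four-compressed}. Under the orientation $d_{ik}\le d_{jk}$, the zero-cost agent $i$ is the same-cluster agent nearer to $k$. So the cluster's facility must sit between the agent it serves and the other cluster.

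The fix is to make the middle point heavy: $(0^A,1^B,2)$, with optimal facilities at $0$ and $1$ and $\OPT=1$. Then $\SC(\PD)=4AB/(AB+2A+B)\to4$. Conditioning on the anchor gives $\SC(\PM)=\frac{1}{A+B+1}\bigl[\frac{3AB}{B+2}+\frac{2AB}{A+1}+\frac{3AB}{2A+B}\bigr]\to\psi(t)=\frac{6t^2+10t+2}{2t^2+3t+1}$ as $A/B\to t$. The critical-point equation is $t^2-2t-2=0$, and $\psi(1+\sqrt3)=10-4\sqrt3$.

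Your other description, with a separate group far away, also fails if the separation is sent to infinity. $\PD$ then picks one report per group almost uniformly, plus a within-group term. Its cost tends to $3\sum_r W_r/n_r\le 3\OPT$, where $W_r$ is the total pairwise distance inside group $r$ of size $n_r$. The third point has to sit exactly one near-gap beyond the heavy middle point. Without these explicit instances and limits, the proposal is a search plan, not a proof.
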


\begin{proof}
We analyze two families of profiles separately.

For the first family, let \(\eps=m^{-2}\) and consider the line profile
\(
 \mathbf{x}^m
 =(0^m,\eps,1),
\)
where $0^m$ means that $m$ agents are located at $0$. 
Its optimum is \(\eps\). Let $m\to \infty$. 
Directly conditioning on the Proportional anchor and enumerating Pairwise-Distance's three nonzero pair types gives
\begin{align*}
 \frac{\SC(\PM(\mathbf{x}^m),\mathbf{x}^m)}
 {\OPT(\mathbf{x}^m)}
 &=
 \frac{m}{m+2}
 \left[
 \frac{2-\eps}{1+\eps}
 +\frac{2(1-\eps)}{m\eps+1-\eps}
 +\frac{2-\eps}{m+1-\eps}
 \right]
 \longrightarrow4,\\
 \frac{\SC(\PD(\mathbf{x}^m),\mathbf{x}^m)}
 {\OPT(\mathbf{x}^m)}
 &=
 \frac{m(3-2\eps)}
 {m(1+\eps)+1-\eps}
 \longrightarrow3.
\end{align*}
Therefore, we have
\(
 \operatorname{apx}(\Mix_\lambda)\ge3+\lambda.
\)

For the second family, take \(A\) agents at \(0\), \(B\) agents at \(1\), and one agent at \(2\):
\(
 \mathbf{x}^{A,B}
 =(0^A,1^B,2).
\)
Its optimum is \(1\). Let \(A,B\to\infty\) with \(A/B\to t>0\).
By enumeration we have
\begin{align*}
 \SC(\PD(\mathbf{x}^{A,B}),\mathbf{x}^{A,B})
 &=
 \frac{4AB}{AB+2A+B}\longrightarrow4,\\
 \SC(\PM(\mathbf{x}^{A,B}),\mathbf{x}^{A,B})
 &=
 \frac{1}{A+B+1}
 \left[
 \frac{3AB}{B+2}
 +\frac{2AB}{A+1}
 +\frac{3AB}{2A+B}
 \right]\longrightarrow
 \psi(t)
 :=
 \frac{6t^2+10t+2}{2t^2+3t+1}.
\end{align*}
It is easy to verify that the function \(\psi\) is maximized at
\(
 \psi(1+\sqrt3)=10-4\sqrt3.
\)
Along this sequence,
\[
 \frac{\SC(\Mix_\lambda)}{\OPT}
 \longrightarrow
 \lambda\bigl(10-4\sqrt3\bigr)+(1-\lambda)4
 =4-(4\sqrt3-6)\lambda.
\]
Combining the two families proves the proposition.
\end{proof}

Taking \(\lambda=0\) in Proposition~\ref{prop:lower-frontier} and combining the result with Lemma~\ref{lem:pd-accounting} shows that Pairwise-Distance has approximation ratio exactly \(4\), even on the line.

Now we complete the proof of Theorem~\ref{thm:main}.

\begin{proof}[Proof of Theorem~\ref{thm:main}: Optimal mixture]
The two numbers in \eqref{eq:lower-frontier} meet at
\(
 3+\lambda=4-(4\sqrt3-6)\lambda,
\)
whose unique solution is
\(\lambda^*=1/(4\sqrt3-5)\).
Their common value is
\(
 3+\lambda^*.
\)
Thus every fixed mixture has approximation ratio at least \(3+\lambda^*\).
Moreover, Equation~\eqref{eq:hd-upper} gives the matching upper bound for \(\HD=\Mix_{\lambda^*}\).
\end{proof}

\section{Multiple Facilities: \texorpdfstring{$k=n-1$}{k=n-1}}\label{sec:k=n-1}

For locating \(k=n-1\) facilities in a metric space,
Escoffier et al.~\cite{escoffier2011many} introduced the \emph{Inversely Proportional} mechanism, 
which omits agent \(i\) with probability proportional to the reciprocal of her nearest-neighbor distance. It is strategyproof and has an \(n/2\)-approximation.
We introduce the \emph{Inverse-Square} mechanism by raising the reciprocal to the second power, and we show that it preserves strategyproofness and improves the approximation ratio to \(\Theta(\sqrt n)\).

For an all-distinct report profile \(\mathbf x\), let
\[
 d_i(\mathbf x)=\min_{j\ne i}d(x_i,x_j)
\]
be the minimum distance from agent \(i\) to any other agent, and write the shorthand $d_i$ when the context is clear.
The \emph{Inverse-Square mechanism} \(\mathsf{IS}\) omits agent \(i\) with probability
\[
 p_i(\mathbf x)=
 \frac{d_i(\mathbf x)^{-2}}{\sum_{j=1}^n d_j(\mathbf x)^{-2}}
\]
and opens facilities at all other reports.
If there are at most \(n-1\) distinct reports, it covers every distinct reported position.

\begin{theorem}\label{thm:inverse-square}
For every metric space and every \(n\ge3\), the Inverse-Square mechanism for \(k=n-1\) facilities is strategyproof and has an approximation ratio of
\(
 \frac{1+\sqrt{n/2}}2.
\)
\end{theorem}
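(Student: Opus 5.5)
The plan has three parts. First I will reduce both the approximation ratio and the incentive analysis to explicit formulas in the nearest-neighbor distances. Then I will solve a small optimization problem for the upper bound and build a matching instance. Strategyproofness is handled separately and is the expected main difficulty.

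\textbf{Reduction.} If there are at most \(n-1\) distinct reports, the mechanism covers every report and has cost \(0=\OPT_{n-1}\), so I assume all reports are distinct. With \(n-1\) facilities among \(n\) distinct points, an optimal solution serves all but one agent at zero cost. Hence \(\OPT_{n-1}(\mathbf x)=d_{\min}:=\min_i d_i\); this is at least the minimum by the triangle inequality, and is attained by opening all reports but one endpoint of a closest pair. If agent \(i\) is omitted, only \(i\) pays, and her cost is exactly \(d_i\). Therefore
\[
\SC(\mathsf{IS})=\sum_i p_i d_i=\frac{\sum_i d_i^{-1}}{\sum_i d_i^{-2}}.
\]

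\textbf{Upper bound.} Put \(a_i=d_{\min}/d_i\in(0,1]\). The ratio becomes \(\sum_i a_i/\sum_i a_i^2\), and at least two indices (a closest pair) have \(a_i=1\).
\begin{itemize}
\item Fix those two indices at \(1\), and let the remaining \(m=n-2\) values have sum \(s\).
\item By Cauchy--Schwarz their squares sum to at least \(s^2/m\), so the ratio is at most \((2+s)/(2+s^2/m)\).
\item Maximizing over \(s\ge0\), the first-order condition is \(s^2+4s-2m=0\), i.e. \(s=\sqrt{2n}-2\).
\item At this point the value equals \(m/(2s)=(\sqrt{2n}+2)/4=\frac{1+\sqrt{n/2}}2\).
\end{itemize}
This proves the upper bound, using only the definition of \(d_i\).

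\textbf{Tightness.} Equality needs all other \(a_i\) equal to \(s/m\le 1\), i.e. every other agent has nearest-neighbor distance \(D=m/s\ge1\). I would realize this by a metric with one pair at distance \(1\) and all other pairwise distances equal to \(D\); this is a metric since \(D\ge 1/2\). The same configuration can be placed, for instance, in a suitably high-dimensional Euclidean space or a star-shaped tree.

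\textbf{Strategyproofness (main obstacle).} Fix agent \(i\) with true location \(x_i\) and a report \(z\) at distance \(\delta=d(x_i,z)\), and write \(W=\sum_{j\ne i}d_j^{-2}\). The truthful cost is
\[
\frac{d_i^{-1}}{d_i^{-2}+W}.
\]
When reporting \(z\), her cost is \(d_i\) if \(z\) is omitted, and otherwise at least \(\min\{\delta,d_i\}\), since every other opened facility is at distance at least \(d_i\).
\begin{itemize}
\item If \(\delta\ge d_i\), every outcome costs her at least \(d_i\), which is at least the truthful cost, so this case is immediate.
\item If \(\delta<d_i\), the deviation cost is at least \(\delta+(d_i-\delta)p'_z\), where the primed quantities are computed at the deviated profile.
\end{itemize}
The triangle inequality controls the deviated profile:
\[
d'_z\ge d_i-\delta,\qquad d'_j\ge\min\{d_j,\,d_i-\delta\}\quad (j\ne i),
\]
the latter because \(d(x_j,x_i)\ge d_i\). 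Thus each other agent's weight can grow at most to \((d_i-\delta)^{-2}\), and only for agents within distance \(d_i+\delta\) of \(x_i\).

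I expect this step to be the hardest part. The omission probability of \(z\) can drop sharply if the misreport makes some other agent's nearest neighbor (namely \(z\)) very close. The inverse-square exponent must then be shown to make the gain in \(\delta\) outweigh that drop, i.e. to prove
\[
\delta+(d_i-\delta)p'_z\ \ge\ \frac{d_i^{-1}}{d_i^{-2}+W}.
\]
My first attempt is to reduce to the worst case, in which \(z\) is the unique nearest neighbor of each affected agent. I would then normalize \(d_i=1\) and \(t=\delta\), and verify a one-variable inequality of the form
\[
t+(1-t)\,\frac{(1-t)^{-2}}{(1-t)^{-2}+W+c\,(1-t)^{-2}}\ \ge\ \frac{1}{1+W},
\]
where \(c\) counts the affected agents, using that \(W\) already contains their original weights, each at least \((1+t)^{-2}\). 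This mirrors the Escoffier et al.\ argument for exponent \(1\), with exponent \(2\).
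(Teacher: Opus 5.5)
Your approximation analysis (the formula $\sum_i d_i^{-1}/\sum_i d_i^{-2}$, Cauchy--Schwarz, and the maximization at $s=\sqrt{2n}-2$) is correct and is the paper's argument. Your equal-distance metric also certifies tightness. Only nearest-neighbor distances matter, which is why the paper can use a line; a star cannot realize your configuration once $n\ge4$, but nothing depends on that remark.

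The gap is strategyproofness. There the proposed reduction is not just unfinished but incorrect. First, your displayed inequality is false. Its denominator $W+c(1-t)^{-2}$ adds the affected agents' new weights on top of $W$, which already contains their old weights. Take the line with $x_h=0$, $x_i=1$, all other reports far away (so $W\approx1$), and $z=1-t$ (so $c=1$). As $t\to0$ your left-hand side tends to $1/3$, while the right-hand side is $1/2$. Second, you claim that only agents within distance $d_i+\delta$ of $x_i$ gain weight, so that each affected old weight is at least $(1+t)^{-2}$. This is false. An agent at distance $L$ whose unique nearest neighbor is $x_i$ gains weight when $z$ moves toward it, yet its old weight is $L^{-2}$. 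The only available relation is the coupled one: $y'_j=r_j^{-2}$ and $y_j\ge(r_j+\delta)^{-2}$, with $r_j=d(x_j,z)\ge d'_z$. So old and new weights cannot be bounded separately.

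More fundamentally, you have the wrong worst case. Your inequality fixes the deviator's own weight at $(1-t)^{-2}$, its largest possible value; for $c=0$ this is the case most favorable to truthfulness. The delicate deviation moves away from every report. Then no one gains weight, but $z$'s weight can fall to $(1+t)^{-2}$. Lower-bounding $p'_z$ needs the upper bound $d'_z\le d(z,x_h)\le1+t$, which you never state; you record only $d'_z\ge d_i-\delta$, which is the wrong direction. After clearing denominators, this case requires $W\bigl[(1+t)^{-2}-1+t(1+W)\bigr]\ge0$. The sign is secured only by $W\ge1$, which comes from the truthful nearest neighbor $h$ having $d_h\le1$. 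At $W=1$ the first-order terms cancel exactly, leaving $t^2(3+2t)/(1+t)^2$. This is precisely where the exponent $2$ is used, and your plan never invokes $W\ge1$.

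The missing argument is the paper's:
\begin{enumerate}
\item Discard $\delta\ge1/(1+W)$, so that $\delta<1/2$.
\item Reduce to $W(d'_z)^{-2}-W'+\delta W'(1+W)\ge0$, where $W'$ takes old weights for unaffected agents and new weights for affected ones.
\item Prove the base case separately according to whether $h$ is affected.
\item Add affected agents one at a time via a per-agent inequality built on the coupled bounds above.
\end{enumerate}
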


\begin{proof}
  We first prove the strategyproofness.
If a profile has a repeated position, every occupied position is covered and all agents have zero cost.
Thus, no agent can benefit from misreporting.
We may therefore fix an agent \(i\) in an all-distinct profile $\mathbf x$ and normalize her nearest-neighbor distance to \(1\).
Write
\[
 O=\sum_{j\ne i}\frac1{d_j^2(\mathbf x)}.
\]
The mechanism omits \(i\) with probability \(1/(1+O)\), and she then pays her nearest-neighbor distance \(1\).
Hence her truthful expected cost is \(1/(1+O)\).
If \(h\) is a truthful nearest neighbor of \(i\), then \(d_h(\mathbf x)\le1\), so \(O\ge1\) and the truthful cost is at most \(1/2\).

Suppose that \(i\) reports \(x_i'\).
Every outcome then costs her at least \(\min\{d(x_i,x_i'),1\}\).
Therefore, a deviation with \(d(x_i,x_i')\ge1/(1+O)\) cannot help.
Reporting another agent's position is also unprofitable: that report is at distance at least \(1\) from \(x_i\), whereas truthful reporting costs at most \(1/2\).
Consider an all-distinct deviating profile satisfying
\begin{equation*}
 0<d(x_i,x_i')<\frac1{1+O}\le\frac12.
\end{equation*}

Let
\(
 d_i'=\min_{j\ne i}d(x_i',x_j)
\)
be the nearest-neighbor distance of the deviating report.
The triangle inequality gives
\[
 1-d(x_i,x_i')\le d_i'\le1+d(x_i,x_i').
\]
Let \(\mathcal R\) be the set of other agents whose nearest-neighbor distance decreases after the deviation.
For every \(j\in\mathcal R\), let \(y_j\) and \(y'_j\) be her inverse-square weights before and after the deviation, respectively.
Then
\[
 y'_j=\frac1{d(x_j,x_i')^2},
 \qquad
 y_j\ge\frac1{(d(x_j,x_i')+d(x_i,x_i'))^2},
 \qquad
 d(x_j,x_i')\ge d_i'.
\]

For agents outside \(\mathcal R\), the inverse-square weight does not increase.
Let
\[
 B=\sum_{j\in N\setminus(\mathcal R\cup\{i\})}
 \frac1{d_j^2(\mathbf x)}
\]
be the sum of their old weights. Thus $O=B+\sum_{j\in\mathcal R}y_j$. Let \(O'=B+\sum_{j\in\mathcal R}y'_j\).
Thus, \(O'\) uses the new weights \(y'_j\) for the agents in \(\mathcal R\), whose weights increase, and the old weights for all other agents.
Because the weights of agents outside \(\mathcal R\) do not increase, \(O'\) is at least the actual sum of the post-deviation weights of all agents other than \(i\).
If the deviating report $x_i'$ is selected, agent \(i\) pays \(d(x_i,x_i')\); if it is omitted, she pays \(1\).
Her expected cost after deviating is therefore at least
\[
 d(x_i,x_i')+\bigl(1-d(x_i,x_i')\bigr)
 \frac{(d_i')^{-2}}{(d_i')^{-2}+O'}=
\frac{(d_i')^{-2}+d(x_i,x_i')O'}
{(d_i')^{-2}+O'}.
\]
After comparing this expression with \(1/(1+O)\) and clearing the positive denominators, it is enough to prove
\begin{equation}\label{eq:is-phi}
 \frac{O}{(d_i')^2}-O'+d(x_i,x_i')O'(1+O)\ge0.
\end{equation}

We use the following elementary estimate for every \(j\in\mathcal R\):
\begin{equation}\label{eq:is-local-lemma}
 \frac{y_j}{(d_i')^2}-y'_j+d(x_i,x_i')
 \left(
  2y'_j+\frac{y_j}{(1+d(x_i,x_i'))^2}+y_jy'_j
 \right)
 \ge0.
\end{equation}
Indeed, the left-hand side is increasing in \(y_j\), so we may replace \(y_j\) by
\(1/(d(x_j,x_i')+d(x_i,x_i'))^2\).
The remaining inequality follows by clearing denominators and considering whether
\(d(x_j,x_i')\) is at most or at least \(1+d(x_i,x_i')\); in both cases the numerator is nonnegative for
\(0<d(x_i,x_i')<1/2\), using
\(1-d(x_i,x_i')\le d_i'\le1+d(x_i,x_i')\) and \(d(x_j,x_i')\ge d_i'\).

We now verify \eqref{eq:is-phi}, using the truthful nearest neighbor \(h\) of \(i\).
First suppose that \(h\notin\mathcal R\).
Then \(B\ge1\). Starting from an empty \(\mathcal R\), the left-hand side of \eqref{eq:is-phi} is 
\begin{align*}
\frac{B}{(d_i')^2}-B+d(x_i,x_i')B(1+B)
&\ge 
 B\left[
  \frac1{(1+d(x_i,x_i'))^2}-1+d(x_i,x_i')(1+B)
 \right]\\
 &\ge
 B\frac{d(x_i,x_i')^2(3+2d(x_i,x_i'))}{(1+d(x_i,x_i'))^2}
 \ge0.
\end{align*}
Add the agents in \(\mathcal R\) one at a time.
At every step, the accumulated old weight is at least \(1\), and the accumulated new weight is at least
\(1/(1+d(x_i,x_i'))^2\).
Consequently, the increase in the left-hand side of \eqref{eq:is-phi} is bounded below by the expression in \eqref{eq:is-local-lemma}, and is therefore nonnegative.

Second, suppose that \(h\in\mathcal R\), and add \(h\) first.
Its weights satisfy
\[
 y_h\ge1,
 \qquad
 \frac1{(1+d(x_i,x_i'))^2}\le y'_h\le\frac1{(d_i')^2}.
\]
Using these bounds directly, the left-hand side of \eqref{eq:is-phi} after adding \(h\) is at least
\[
\begin{aligned}
 &B\left(\frac1{(1+d(x_i,x_i'))^2}-1\right)
 +d(x_i,x_i')\left(B+\frac1{(1+d(x_i,x_i'))^2}\right)(B+2)\\
 &\quad=
 \frac{d(x_i,x_i')}{(1+d(x_i,x_i'))^2}
 \left[
  B^2(1+d(x_i,x_i'))^2
  +B(1+3d(x_i,x_i')+2d(x_i,x_i')^2)+2
 \right]
 \ge0.
\end{aligned}
\]
The accumulated old and new weights now satisfy the same two lower bounds as above, so adding each remaining agent in \(\mathcal R\) cannot decrease the expression, by \eqref{eq:is-local-lemma}.
This proves \eqref{eq:is-phi}, and hence strategyproofness.

It remains to determine the approximation ratio.
Let \(d_*=\min_i d_i\) be the minimum distance between any two agents.
With \(n-1\) facilities, at least two agents share a facility.  Serving a closest pair with one facility and every other agent at her own position gives \(\OPT_{n-1}= d_*\). 
The expected social cost is
\(
 \frac{\sum_i d_i^{-1}}{\sum_i d_i^{-2}}.
\)
At least two agents of a closest pair have \(d_i=d_*\).
For every other agent $j$, \(d_*/d_j\in[0,1]\). Therefore, 
\[
 \frac{\SC(\mathsf{IS})}{\OPT_{n-1}}
 =\frac{2+\sum_j d_*/d_j}{2+\sum_j(d_*/d_j)^2}
 \le
 \frac{2+\sum_j d_*/d_j}
 {2+\left(\sum_jd_*/d_j\right)^2/(n-2)}
 \le\frac{2+\sqrt{2n}}4
 =\frac{1+\sqrt{n/2}}2,
\]
where both sums range over the \(n-2\) agents outside the closest pair. The first inequality is Cauchy--Schwarz, and for the second inequality, the function
\(
 x\longmapsto\frac{2+x}{2+x^2/(n-2)}
\)
attains its maximum over \(x\in[0,n-2]\) at
\(x=\sqrt{2n}-2\). This proves the approximation ratio. 

For tightness, consider a line profile with \(x_1=0\), \(x_2=1\), and
\(x_j=1+(j-2)(1+\sqrt{n/2})\) for \(j\ge3\).
Then \(d_1=d_2=1\), every other nearest-neighbor distance is \(1+\sqrt{n/2}\), and direct substitution attains the bound exactly.
\end{proof}

\section{Multiple Facilities: \texorpdfstring{$k=3$}{k=3} on the Line}
\label{sec:k-facility}


In this section, we consider the problem of locating $k=3$ facilities on the real line.
We propose a randomized strategyproof mechanism called \emph{Gap-Product} and prove that it achieves an approximation ratio of $6$.
The previous best-known result for this setting is due to Fotakis and Tzamos \cite{fotakis2013strategyproof}: they proposed a randomized group-strategyproof mechanism, called \emph{Equal Cost}, that achieves an $n$-approximation on the line for any $k$ and $n$.

Let $\mathbf x\in\mathbb R^n$ be a reported profile on the line with \(x_1\le\cdots\le x_n\).
For an ordered \(r\)-agent subset \(S=\{s_1<\cdots<s_r\}\subseteq N\), define
\[
 w(S)=\prod_{j=1}^{r-1}(x_{s_{j+1}}-x_{s_j}),
 \qquad
 Z_r=\sum_{\substack{S\subseteq N\\|S|=r}}w(S).
\]
As usual, the empty product gives \(w(S)=1\) when \(S\) is a singleton.
We define the mechanism for general \(k\ge2\): the \emph{Gap-Product} mechanism \(\mathsf{GPM}_k\) selects a \(k\)-agent subset \(S\) with probability \(w(S)/Z_k\) and opens facilities at the selected reports.
If the profile has at most \(k\) distinct occupied positions, the mechanism instead covers every occupied position.
For \(k=2\), \(w(\{i,j\})=|x_i-x_j|\), so \(\mathsf{GPM}_2\) is exactly Pairwise-Distance.

\begin{theorem}\label{thm:gpm-three}
    For $k=3$ facilities on the line, the \emph{Gap-Product} mechanism is strategyproof and achieves a $6$-approximation. 
\end{theorem}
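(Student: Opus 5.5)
The proof has two independent parts: the $6$-approximation and strategyproofness. Both rest on one structural fact about the line. Let $g_t=x_{t+1}-x_t$ denote the consecutive gaps. Expanding each factor $x_{s_{j+1}}-x_{s_j}$ as a sum of consecutive gaps shows that $w(S)$ counts weighted ways of cutting the sorted profile into consecutive blocks. Concretely, $Z_k$ equals a sum over choices of $k-1$ cut gaps $t_1<\dots<t_{k-1}$, with weight $\prod g_{t_\ell}$, times a combinatorial count of admissible representatives in each block. Hence sampling $S\propto w(S)$ is equivalent to a two-stage process. First, sample a partition of the line into $k$ consecutive blocks with probability determined by the cut-gap products and block sizes. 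Then, given the partition, choose within each block a representative, where adjacent selected points must straddle the chosen cut.

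I would first write out this random-partition representation precisely for $k=3$. Each selected triple $s_1<s_2<s_3$ contributes $(x_{s_2}-x_{s_1})(x_{s_3}-x_{s_2})=\sum_{s_1\le a<s_2\le b<s_3} g_a g_b$.

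\textbf{Approximation.} Fix an optimal solution with clusters $C_1,C_2,C_3$ (consecutive on the line), with medians $m_\ell$, and $\OPT=\sum_\ell\sum_{i\in C_\ell}|x_i-m_\ell|$.

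I would mimic Lemma~\ref{lem:pd-accounting}. Write $Z_3\,\SC(\mathsf{GPM}_3)=\sum_{S}w(S)\sum_{r\notin S}\min_{s\in S}|x_r-x_s|$ and group the terms by the $4$-element set $S\cup\{r\}$.

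\emph{Step one.} For a $4$-set $\{a<b<c<d\}$, the grouped quantity is $\sum_{r}w(\{a,b,c,d\}\setminus\{r\})\cdot \mathrm{dist}(r)$. On the line this is an explicit polynomial in the three gaps $u,v,w$ of the $4$-set. I expect a bound of the form $\le 2k=6$ times a charge $\phi_{abcd}$. This charge should be a combination of optimal costs $c_r$ weighted by products of two gaps, chosen so that $\sum_{4\text{-sets}}\phi\le Z_3\cdot\OPT$.

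\emph{Step two.} The congestion bound. I would prove it via the block picture: the coefficient of $c_r$ in $\sum\phi$ should be at most the total weight $Z_3$, exactly as in \eqref{eq:phi-congestion}. The triangle-type inequalities $d_{ab}\le c_a+c_b$ within a cluster play the role they did there.

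An alternative I would try first, since it gives $2k$ for general $k$, proceeds through the partition picture. Conditional on the cut gaps, each block receives a facility at a representative drawn with probability proportional to a linear weight. By the one-facility version of Pairwise-Distance-type arguments, the expected cost inside a block is at most twice that block's one-median cost plus boundary terms charged to the cut gaps. Summing over blocks and averaging the cut positions against the optimal clusters then yields the factor $2k$. Tightness is not needed for the theorem.

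\textbf{Strategyproofness.} Fix agent $i$ at $x$ deviating to $z$, and let the others be $y_1\le\dots\le y_m$. As in the proof of Theorem~\ref{thm:incentive-properties}, split the weight into $B$, the total weight of triples avoiding $i$, and $A$, their contribution to $i$'s cost numerator. The weights $D_x,D_z$ of triples containing the report, and the corresponding numerator $E$, are defined analogously. This reduces the claim to proving $E(B+D_x)\ge A(D_z-D_x)$, and I would aim for the stronger $BE\ge A(D_z-D_x)$.

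Expanding gives a sum over (triple of others) $\times$ (pair of others), i.e.\ over configurations of up to five other points. I would group terms by \emph{overlap type}: the pair $\{k,l\}$ used with $z$ may share $0$, $1$, or $2$ points with the triple $\{a,b,c\}$. Then I would show each group's total is nonnegative. The full-overlap and one-overlap groups should follow from the triangle inequality, like the endpoint terms \eqref{eq:endpoint-nonnegative}. The disjoint groups require collecting the terms attached to a fixed $5$-set (or $4$-set) and proving a polynomial inequality in the gaps, by case analysis on where $x$ and $z$ sit relative to the $y$'s. On the line, Ptolemy is an equality, which helps. By reflection symmetry and the fact that a deviation across another report is dominated, one can assume $z$ lies in the same gap as $x$, or just past a neighbour. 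That reduces the case analysis.

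\textbf{Main obstacle.} I expect the main obstacle to be this last step. The disjoint-overlap terms are not individually nonnegative, so the correct grouping must be found so that each group's total is nonnegative. This is precisely where the argument must fail for $k\ge4$, so the grouping must use something specific to products of two gaps.
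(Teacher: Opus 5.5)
\textbf{Strategyproofness.} Your reduction to $BE\ge A(D_z-D_x)$ and your grouping by overlap type match the paper's skeleton. The gap is that the one-overlap step, which you treat as routine, is not.

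One-overlap terms are not controlled by the triangle inequality, and they are not individually nonnegative. Take $T=\{-10,-5,0\}$, $P=\{0,1\}$, $x=1.5$, $z=100$. Then $\Gamma(T,P)=25\,[\,99(0.5-1.5)+0.5\cdot 1.5\,]<0$.

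The right group is the cyclic sum over a fixed four-set with common anchor $a$: $\Gamma(\{a,v,s\},\{a,u\})+\Gamma(\{a,u,s\},\{a,v\})+\Gamma(\{a,u,v\},\{a,s\})=\Phi_{a,z}(u,v,s)$. With $a$ frozen, this is the analogue of the distinct-index triple terms of the $k=2$ proof, not of its endpoint terms. That is exactly where Ptolemy was needed there.

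Here, proving $\Phi_{a,z}\ge0$ (Lemma~\ref{lem:gpm-anchor}) needs three ingredients:
\begin{itemize}
\item a Ptolemy-type exchange inequality $\Delta(p,u)\Delta(v,s)\le\Delta(p,v)\Delta(u,s)+\Delta(p,s)\Delta(u,v)$ for $\Delta(u,v)=w(\{0,u,v\})$. This is a genuine inequality even on the line, not an identity; the paper proves it by showing that $\Delta(u,v)/(|u||v|)$ is a metric, plus a same-ray computation;
\item for $z$ close to $x$, the one-centre inequality $2\Delta(q,u)\Delta(q,v)\ge q\rho\,\Delta(u,v)$;
\item a concavity argument in $r=|x-z|$.
\end{itemize}

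For the disjoint case you correctly identify $\Psi_{3,3}(U)\ge0$ on five-sets as the crux, but you offer no method beyond case analysis. Your simplification that deviations across another report are dominated is asserted without proof. The paper does not use it, and it would not remove the delicate cases, which are those with $z$ near $x$.

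The paper's missing device is a radial induction. It builds companions $H_1,\dots,H_4$ on $4,3,2,1$ reports with the following properties:
\begin{itemize}
\item pushing a unique outermost ordinary report outward has derivative $H_{j+1}$ of the remaining set;
\item pushing $z$ outward has derivative given by complementary-sample sums $\Lambda_{2,3},\Lambda_{1,3},\Lambda_{1,2}$, each shown nonnegative;
\item tied outer radii are handled separately: collisions reduce to $\Phi$, and opposite outer reports are handled by explicit derivative expansions.
\end{itemize}

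\textbf{Approximation.} Your first route is sound and is the paper's argument specialised to $k=3$; it just has to be carried out.
\begin{itemize}
\item For a four-set with gaps $u,v,w$, the grouped cost is $2uvw+(u+v)w\min\{u,v\}+u(v+w)\min\{v,w\}\le 6uvw$. Hence $Z_3\SC(\mathsf{GPM}_3)\le 6Z_4$.
\item The step you leave implicit is a pigeonhole argument. Optimal clusters are intervals, so two consecutive members $p<q$ of any four-set $U$ share a cluster. This gives $w(U)\le c_p\,w(U\setminus\{p\})+c_q\,w(U\setminus\{q\})$.
\item Injectivity of $U\mapsto U\setminus\{i\}$ then yields $Z_4\le \OPT_3 Z_3$.
\end{itemize}

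Your preferred alternative is not correct as stated. Given the cuts, representatives are uniform in their blocks, not linearly weighted. The random blocks also need not align with the optimal clusters, so ``twice each block's one-median cost'' is not comparable to $\OPT$. The paper makes the averaging over cuts precise through the one-cut refinement $Z_k\SC\le 2kZ_{k+1}$.
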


We first prove the approximation guarantee and then prove strategyproofness. While the exact approximation ratio is $2k$ for every $k\ge2$, strategyproofness fails for every $k\ge4$.

\begin{lemma}[Approximation]\label{lem:gpm-approximation}
For every line profile and every \(k\ge2\),
\[
 \SC(\mathsf{GPM}_k)\le 2k\frac{Z_{k+1}}{Z_k}\le2k\OPT_k.
\]
The factor \(2k\) is asymptotically tight for this mechanism.
\end{lemma}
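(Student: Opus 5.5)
The plan is to prove the two inequalities separately, each by a charging argument between $k$-subsets and $(k+1)$-subsets of agents. For the first inequality I would expand
\[
 Z_k\,\SC(\mathsf{GPM}_k)=\sum_{|S|=k}w(S)\sum_{r\notin S}c_r(S),
\]
where $c_r(S)$ is agent $r$'s distance to the nearest selected report. I would then charge each term $w(S)\,c_r(S)$ to the $(k+1)$-set $T=S\cup\{r\}$.

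\textbf{Step 1 (local charging).} Fix $S=\{s_1<\dots<s_k\}$ and $r\notin S$.
\begin{itemize}
\item If $r$ lies left of $s_1$, then $c_r(S)=x_{s_1}-x_r$ and $w(T)=w(S)(x_{s_1}-x_r)$, so $w(S)c_r(S)=w(T)$. The case where $r$ lies right of $s_k$ is symmetric.
\item If $s_j<r<s_{j+1}$, write $a=x_r-x_{s_j}$ and $b=x_{s_{j+1}}-x_r$. The gap $a+b$ in $w(S)$ is replaced by the two gaps $a$ and $b$ in $w(T)$, so $w(T)=w(S)\,ab/(a+b)$. Since $\min\{a,b\}\le 2ab/(a+b)$, we get $w(S)c_r(S)\le 2w(T)$.
\end{itemize}
Ties between positions and zero gaps would be handled by the same formulas or by continuity.

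\textbf{Step 2 (summing over $T$).} Each $(k+1)$-set $T$ arises from exactly $k+1$ pairs $(S,r)$, one for each choice of the removed element $r\in T$. Removing one of the two extreme elements of $T$ costs factor $1$, and removing one of the $k-1$ interior elements costs factor at most $2$. The total factor is therefore at most $2+2(k-1)=2k$, which gives $Z_k\,\SC(\mathsf{GPM}_k)\le 2kZ_{k+1}$.

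\textbf{Step 3 ($Z_{k+1}\le \OPT_k Z_k$).} This is the step I expect to be the main obstacle. On the line an optimal $k$-facility solution partitions the agents into $k$ consecutive clusters $C_1,\dots,C_k$, each served by its median. By pigeonhole, every $(k+1)$-set $T$ contains two consecutive elements $t_j<t_{j+1}$ in the same cluster. I would take the first such pair and factor
\[
 w(T)=(x_{t_{j+1}}-x_{t_j})\cdot\prod_{\ell\ne j}(\text{other gaps of }T).
\]
Dropping $t_{j+1}$ merges two gaps into a larger one, so $w(T)\le (x_{t_{j+1}}-x_{t_j})\,w(T\setminus\{t_{j+1}\})$.

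The naive bound $x_{t_{j+1}}-x_{t_j}\le |x_{t_j}-m_C|+|x_{t_{j+1}}-m_C|$ may count the cost of one agent several times. So instead of a deterministic removal rule I would follow the hint of a random partition into consecutive blocks. View $w(T)/Z_{k+1}$ as a distribution over $(k+1)$-sets, equivalently over ways of cutting the sorted profile into blocks whose boundaries are the selected points. Within each optimal cluster, I would charge the gap between two consecutive selected points to the agents of that cluster, using the one-dimensional identity that $\sum_{i<i'\in C}(x_{i'}-x_i)$ over adjacent-in-$T$ pairs is dominated by the cluster's median cost times the number of completions. Concretely, I would aim to show that for each fixed $k$-set $S$, the total $\sum_t (x_t-x_{s})$ over admissible insertions $t$ is at most the cost of the optimal clusters. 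The canonical choice of pair, namely the first same-cluster consecutive pair and inserting toward the median, should make the map $T\mapsto(S,t)$ injective with each agent's distance used at most once. If this fails I would fall back to the block-partition reformulation and compare block by block.

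\textbf{Step 4 (tightness).} I would generalize the second line family of Proposition~\ref{prop:lower-frontier}, namely $(0^A,1^B,2)$. Take $k$ large groups at consecutive integer positions, $1,2,\dots,k$, with optimum $1$, together with a single extra point at $k+1$, and tune the group sizes. In the limit the selected set almost surely omits the isolated point or splits a group, and each interior removal realizes the factor-$2$ bound $\min\{a,b\}\approx 2ab/(a+b)$ with $a\approx b$. The goal is to verify that the ratio tends to $2k$, matching the value $4$ for $k=2$.
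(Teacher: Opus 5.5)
\textbf{The gap is Step 3}, which is the second inequality $Z_{k+1}\le\OPT_kZ_k$. You correctly find a same-cluster consecutive pair $p<q$ in $T$ and the bound $w(T)\le(x_q-x_p)\,w(T\setminus\{q\})$. You then drop the argument for a plan you have not verified. The concrete per-$S$ bound you aim for is false. For $k=2$, put one agent $s$ at $0$, $m$ agents at $1$ and one agent $u$ at $100$, so $\OPT_2=1$. For $S=\{s,u\}$, every agent $t$ at $1$ is an admissible insertion under your rule, since the first same-cluster consecutive pair of $S\cup\{t\}$ is $(s,t)$. Yet $\sum_t(x_t-x_s)=m$.

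The missing observation is that the gap-merging bound also holds when you delete $p$: $w(T)\le(x_q-x_p)\,w(T\setminus\{p\})$. If $p$ is interior, its two adjacent gaps merge into a longer one; if $p$ is first, the factor simply disappears. So split $x_q-x_p\le c_p+c_q$ and charge each agent's cost to the set with that agent removed:
\[
 w(T)\le c_p\,w(T\setminus\{p\})+c_q\,w(T\setminus\{q\}).
\]
Summed over all $T$, the coefficient of a fixed $c_i$ is a sum of $w(T\setminus\{i\})$ over distinct sets $T\ni i$. Since $T\mapsto T\setminus\{i\}$ is injective, this coefficient is at most $Z_k$, which gives $Z_{k+1}\le(\sum_ic_i)Z_k=\OPT_kZ_k$. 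Your double counting came only from charging $c_{t_j}$ against $w(T\setminus\{t_{j+1}\})$, a set shared by all $T$ that differ only in $t_{j+1}$. With the symmetric charging, any same-cluster consecutive pair works. You need no canonical choice of pair, no orientation toward the median, and no block-by-block comparison.

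Steps 1--2 are correct. They give a more direct route to $Z_k\SC(\mathsf{GPM}_k)\le 2kZ_{k+1}$ than the paper's. The paper rewrites the mechanism as a random partition into consecutive blocks and bounds the cost by sending each agent to her own block's facility. It then charges within-block distance sums to the $(k+1)$-block refinements. Your local inequality $\min\{a,b\}\le 2ab/(a+b)$ does the same bookkeeping with the true nearest-facility cost.

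Your Step 4 instance is the paper's up to translation, and equal group sizes $M$ suffice. You should carry out the computation rather than leave it as a goal:
\begin{itemize}
\item omitting the isolated agent has total weight $M^k$ and cost $1$;
\item omitting the group at the far end has total weight $M^{k-1}$ and cost $M$;
\item omitting one of the other $k-1$ groups has total weight $2M^{k-1}$ and cost $M$.
\end{itemize}
Hence $\SC(\mathsf{GPM}_k)=2kM/(M+2k-1)\to2k$ while $\OPT_k=1$. The ratio comes from the rare, $\Theta(1/M)$-probability events in which a whole group is skipped.
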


\begin{proof}
If the profile has at most $k$ occupied locations, the mechanism has zero cost, so assume
otherwise. 

We first give an equivalent description of the mechanism.
For every \(t\in\{1,\ldots,n-1\}\), regard the interval between \(x_t\) and \(x_{t+1}\) as a possible cut.
Choose \(k-1\) cuts, which divide the ordered profile into \(k\) nonempty consecutive blocks.
For every such partition, give a weight
\[
 \left(
  \prod_{\text{Cut }\substack{t}}
  (x_{t+1}-x_t)
 \right)
 \left(\prod_{\text{Block }I}|I|\right).
\]
After choosing the partition proportionally to this weight, choose one agent uniformly from each block and open a facility at each selected report.

To verify that this is Gap-Product, fix
\(S=\{s_1<\cdots<s_k\}\).
For the selected reports to consist of \(S\), there must be one cut between \(s_j\) and \(s_{j+1}\) for every \(j\in\{1,\ldots,k-1\}\).
Moreover,
\[
 x_{s_{j+1}}-x_{s_j}
 =\sum_{t=s_j}^{s_{j+1}-1}(x_{t+1}-x_t).
\]
Thus, after summing over all compatible choices of cuts, the total weight assigned to \(S\) is
\[
 \prod_{j=1}^{k-1}(x_{s_{j+1}}-x_{s_j})=w(S).
\]
The product of the block sizes in the partition weight is canceled by the uniform choice of one agent from each block.
Consequently, the total weight of all \(k\)-block partitions is \(Z_k\), and the alternative rule selects every \(S\) with probability \(w(S)/Z_k\).

We next bound the social cost in this equivalent rule.
Assign every agent to the facility selected from her own block.
This assignment may not use the nearest selected facility, so its expected cost is an upper bound on \(\SC(\mathsf{GPM}_k)\).
For a fixed block \(I\), summing this assigned cost over all possible choices of its selected agent gives
\[
 \sum_{r\in I}\sum_{i\in I}|x_i-x_r|
 =
 2\sum_{\substack{t:t,t+1\in I}}
 |\{i\in I:i\le t\}|\,
 |\{i\in I:i>t\}|\,
 (x_{t+1}-x_t).
\]
Indeed, an interval between \(x_t\) and \(x_{t+1}\) contributes to \(|x_i-x_r|\) exactly when \(i\) and \(r\) lie on opposite sides of that interval, and the factor \(2\) accounts for the two possible orders of \(i\) and \(r\).

Consider one term in this sum and add a cut between \(x_t\) and \(x_{t+1}\).
The block \(I\) is split into two consecutive blocks of sizes
\(|\{i\in I:i\le t\}|\) and \(|\{i\in I:i>t\}|\).
After multiplying by the weight of the original partition and averaging uniformly over the selected agent in \(I\), this term is exactly twice the weight of the resulting \((k+1)\)-block partition.
Conversely, every \((k+1)\)-block partition can be obtained from exactly \(k\) different \(k\)-block partitions, one for each of its \(k\) cuts that can be removed.
The same expansion used above shows that the total weight of all \((k+1)\)-block partitions is \(Z_{k+1}\).
It follows that
\begin{equation}\label{eq:gpm-refinement}
 Z_k\SC(\mathsf{GPM}_k)\le2kZ_{k+1}.
\end{equation}

It remains to compare \(Z_{k+1}\) with the optimum.
Fix an optimal solution with \(k\) facilities, and assign every agent to a nearest optimal facility, breaking ties so that the resulting clusters are consecutive.
Let \(c_i\) be the cost of agent \(i\) in this solution.
Now consider any ordered \((k+1)\)-agent subset \(U\).
Because there are only \(k\) optimal clusters, two consecutive agents \(p,q\) in \(U\) with $p<q$ belong to the same cluster.
The triangle inequality through their common facility gives
\(
 x_q-x_p\le c_p+c_q.
\)
In the product defining \(w(U)\), separate the factor \(x_q-x_p\).
If \(p\) is an interior member of \(U\), deleting it replaces the two adjacent distances meeting at \(p\) by their sum; if \(p\) is the first member of \(U\), deleting it simply removes the factor \(x_q-x_p\).
In either case, \(w(U\setminus\{p\})\) is at least the product of all factors of \(w(U)\) other than \(x_q-x_p\).
The same argument applies when \(q\) is deleted.
Therefore, assuming \(x_q-x_p>0\) (otherwise the inequality is trivial), we have
\[
 w(U)\le \frac{w(U)}{x_q-x_p}(c_p+c_q)\le c_p w(U\setminus\{p\})+c_qw(U\setminus\{q\}).
\]

Choose one such consecutive pair \(p,q\) for every \(U\), and sum the preceding inequality over all \((k+1)\)-agent subsets:
\[
\begin{aligned}
 Z_{k+1}
 &=\sum_{\substack{U:|U|=k+1}}w(U)\le
 \sum_{\substack{U:|U|=k+1}}
 \left[c_p w(U\setminus\{p\})+c_qw(U\setminus\{q\})\right].
\end{aligned}
\]
Fix an agent \(i\) and consider all summands containing the factor \(c_i\).
For every set \(U\) whose chosen pair \(p,q\) contains \(i\), the corresponding summand is
\(
 c_iw(U\setminus\{i\}),
\)
which is the first term in the preceding sum when \(i=p\), and the second term when \(i=q\).
Therefore, after factoring out \(c_i\), its coefficient is the sum of \(w(U\setminus\{i\})\) over all such sets \(U\).
Different sets \(U\) give different sets \(U\setminus\{i\}\), so these sets form a subcollection of all \(k\)-agent subsets \(S\).
Since all weights are nonnegative, the coefficient of \(c_i\) is at most
\(
 \sum_{\substack{S:|S|=k}}w(S)=Z_k.
\)

Summing this bound over all agents gives
\begin{equation}\label{eq:gpm-opt-partition}
 Z_{k+1}\le\sum_i c_iZ_k
 =\left(\sum_i c_i\right)Z_k
 =\OPT_k Z_k.
\end{equation}
Combining \eqref{eq:gpm-refinement} and \eqref{eq:gpm-opt-partition}, and dividing by \(Z_k>0\), gives
\[
 \SC(\mathsf{GPM}_k)
 \le 2k\frac{Z_{k+1}}{Z_k}
 \le 2k\OPT_k.
\]

Finally, for tightness, place \(M\) agents at each of \(0,1,\ldots,k-1\) and one agent at \(k\).
The optimum is \(1\), while 
 \(\SC(\mathsf{GPM}_k)=\frac{2kM}{M+2k-1}\to 2k
\) when $M\to \infty$.
\end{proof}

\subsection{Strategyproofness}
We now prove the incentive part of Theorem~\ref{thm:gpm-three}. When $k\ge 4$, it is not hard to find examples where agents have incentive to misreport (see, e.g., \cite{jia2026product}). When $k=2$, Gap-Product reduces to the strategyproof mechanism Pairwise-Distance. Thus, we focus only on $k=3$.

If there are at most three distinct occupied positions, the mechanism covers every occupied position, so no agent can benefit from misreporting.
We may therefore assume that the truthful profile has at least four distinct occupied positions.
A unilateral deviation changes the number of distinct occupied positions by at most one.
Consequently, if the deviating profile invokes the fallback rule, it has exactly three distinct occupied positions, and every triple of positive Gap-Product weight covers all three positions; thus the weighted formulas below agree with the fallback outcome.
As in Section~\ref{sec:pd-strategyproofness}, fix an agent at her true location \(x\), let \(z\) be an alternative report, and let \(Y\) be the labelled set of all other reports.
For every nonempty  \(S\), write
\(
 d(x,S)=\min_{y\in S}|x-y|
\)
for the minimum distance from \(x\) to any member of \(S\).
Define
\[
 B=\sum_{\substack{T\subseteq Y:|T|=3}}w(T),
 \qquad
 A=\sum_{\substack{T\subseteq Y:|T|=3}}w(T)d(x,T),
\]
and, for \(u\in\{x,z\}\),
\[
 D_u=\sum_{\substack{P\subseteq Y:|P|=2}}w(P\cup\{u\}),
 \qquad
 E=\sum_{\substack{P\subseteq Y:|P|=2}}
 w(P\cup\{z\})d(x,P\cup\{z\}).
\]
A selected triple containing the truthful report gives the agent cost zero. Therefore, her expected costs under the truthful and deviating reports are $\frac{A}{B+D_x}$ and $\frac{A+E}{B+D_z}$,
respectively. To exclude beneficial deviations, it is enough to prove
\begin{equation}\label{eq:gpm-strong}
 BE\ge A(D_z-D_x),
\end{equation}
because
\(
 E(B+D_x)-A(D_z-D_x)
 =\bigl[BE-A(D_z-D_x)\bigr]+ED_x\ge0.
\)

For a triple \(T\subseteq Y\) and a pair \(P\subseteq Y\), let
\begin{align}
 \Gamma(T,P)=w(T)\Big[&w(P\cup\{z\})
 \bigl(d(x,P\cup\{z\})-d(x,T)\bigr)+w(P\cup\{x\})d(x,T)\Big].
 \label{eq:gpm-gamma}
\end{align}
Expanding both products in \eqref{eq:gpm-strong} gives
\begin{equation}\label{eq:gpm-expand}
 BE-A(D_z-D_x)
 =\sum_{\substack{T,P\subseteq Y\\|T|=3,\ |P|=2}}\Gamma(T,P).
\end{equation}
In order to prove that it is non-negative,
we group these terms according to whether \(T\) and \(P\) share two, one, or no agents.

\paragraph{Overlapping $T$ and $P$.}
Let $a$ be a report shared by \(T\) and \(P\).
We call \(a\) the \emph{anchor}.
For any two reports \(u,v\), write
\[
 \Delta_a(u,v)=w(\{a,u,v\}).
\]
For three further reports \(u,v,s\), define
\begin{align}
 \Phi_{a,z}(u,v,s)
 =\sum_{\text{cyclic }(u,v,s)}\Delta_a(v,s)\Big[&
 \Delta_a(z,u)d(x,\{a,z,u\})-\bigl(\Delta_a(z,u)-\Delta_a(x,u)\bigr)
 d(x,\{a,v,s\})\Big].
 \label{eq:gpm-phi}
\end{align}
Here the cyclic sum means that the displayed expression is added for the three cyclic orderings $(u,v,s)$, $(v,s,u)$, and $(s,u,v)$.

\begin{lemma}\label{lem:gpm-anchor}
For all reports on the line, \(
 \Phi_{a,z}(u,v,s)\ge0.
\)
\end{lemma}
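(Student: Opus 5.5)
The plan is to treat $\Phi_{a,z}(u,v,s)$ as the line analogue of the triple quantity $T$ in Lemma~\ref{lem:sp-triple}. The anchor-weights $\Delta_a(\cdot,\cdot)$ play the role of the pair distances $d(y_i,y_j)$. The anchored distances $m_w:=d(x,\{a,w\})$ (note $d(x,\{a,v,s\})=\min\{m_v,m_s\}$) play the role of $d_1\le d_2\le d_3$. The pair $\Delta_a(z,\cdot)$, $\Delta_a(x,\cdot)$ plays the role of $d(z,y_i)$, $d(x,y_i)$. Since $\Phi$ is symmetric under permuting $(u,v,s)$ up to the cyclic structure, I would first relabel so that $m_u\le m_v\le m_s$. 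Then the three cyclic terms take the form
\[
\Delta_a(v,s)\bigl[e_u-m_v\delta_u\bigr]+\Delta_a(u,s)\bigl[e_v-m_u\delta_v\bigr]+\Delta_a(u,v)\bigl[e_s-m_u\delta_s\bigr],
\]
where $e_w=\Delta_a(z,w)\,d(x,\{a,z,w\})$ and $\delta_w=\Delta_a(z,w)-\Delta_a(x,w)$. This is formally identical to \eqref{eq:sp-triple} with the roles of the sorted indices matched.

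The two ingredients used in Lemma~\ref{lem:sp-triple} must then be re-established for $\Delta_a$.

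\emph{Triangle-type bounds.} The first ingredient is a Lipschitz bound: $\delta_w\le |x-z|\cdot(\text{a factor bounded by } m$-type quantities$)$. The second is the bound $\Delta_a(z,w)\le \Delta_a(x,w)+(\text{correction})$, from which the endpoint-style estimates follow. I would obtain both by writing $\Delta_a(z,w)$ explicitly as a product of two consecutive gaps among $\{a,z,w\}$, and checking that moving $z$ to $x$ changes at most one gap (or swaps order) by at most $|x-z|$.

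\emph{A Ptolemy-type inequality.} The second ingredient replaces Ptolemy's inequality:
\[
\Delta_a(u,v)\Delta_a(z,s)\le\Delta_a(u,s)\Delta_a(z,v)+\Delta_a(v,s)\Delta_a(z,u)
\]
for the relevant orderings. For points on one side of $a$, the identity $(q-p)=(p-a)(q-a)\bigl|\tfrac1{p-a}-\tfrac1{q-a}\bigr|$ turns gap products into products of genuine line distances after the inversion $t\mapsto 1/(t-a)$. Ptolemy then holds on the line (where it is an equality for ordered points) after multiplying out the common factors $|w-a|$. The case where $a$ lies between points, with $\Delta_a=|u-a||v-a|$, factorizes and should be direct.

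With these in hand, I would run the same case split as in Lemma~\ref{lem:sp-triple}, according to the position of $d(x,\{a,z\})$ relative to $m_u\le m_v\le m_s$. Because $d(x,\{a,z,w\})=\min\{d(x,\{a,z\}),m_w\}$, each case fixes which minima appear. The small case uses only the triangle-type bounds. The middle case uses the slack-variable reduction to an affine function of two slacks, checked at vertices. The large cases write the expression as a nonnegative multiple of the Ptolemy-type defect plus manifestly nonnegative terms.

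The main obstacle is that $\Delta_a$ is not a metric and its form depends on the order of $a,z,w$ (and $a,x,w$) on the line. The triangle-type and Ptolemy-type bounds therefore hold only with order-dependent correction factors, and these may break the clean algebra of the Ptolemaic case. If the uniform reduction fails, I would fall back on a direct case enumeration. Since $\Phi$ is continuous and piecewise polynomial in the positions, I would split by the relative order of $a$, $x$, $z$ and the sorted $u,v,s$. Within each cell, I would fix all but one coordinate and use monotonicity or convexity of $\Phi$ in that coordinate to push to boundary configurations where points coincide. At those configurations $\Phi$ reduces either to the $k=2$ inequality (\eqref{eq:sp-strong} for Pairwise-Distance) or to an explicit nonnegative polynomial.
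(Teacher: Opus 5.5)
\textbf{Verdict: genuine gap.} Your plan proves the lemma only in the regime $d(x,\{a,z\})\ge m_v$. For $d(x,\{a,z\})<m_v$, the tools you transplant from Lemma~\ref{lem:sp-triple} fail on the line, and you do not supply the replacement the paper uses.

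\textbf{What works.} In the regime $d(x,\{a,z\})\ge m_v$ you follow the paper. After translating $a$ to $0$ and writing $\Delta=\Delta_0$, $\Phi$ rearranges into nonnegative terms plus $(m_v-m_u)$ times the defect of the four-point exchange inequality. That inequality is your Ptolemy-type inequality, and the paper proves it for all four points. Your inversion sketch for it needs repair: for same-side points $\Delta(p,q)=|p|\,|q|\min\{|p|,|q|\}\,|1/p-1/q|$. The $\min$ factors differ across the three matchings, so line Ptolemy does not apply verbatim. The paper instead normalizes to $\eta=\Delta/(|p||q|)=1-\min/\max$, which is a metric, and checks the one-ray case by an explicit identity.

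\textbf{Why the transplant fails for $d(x,\{a,z\})<m_v$.} Lemma~\ref{lem:sp-triple} relies on two facts here.
(i) The same number $d_i$ is both the threshold in the minima and the subtracted term in $\delta_i$. This is what collapses the bounds to $d(x,z)(d_i+d(x,z)-d_j)$.
(ii) The pair weights satisfy additive triangle inequalities: $d(y_2,y_3)\le d(y_1,y_2)+d(y_1,y_3)$, $d(y_1,y_2)\ge d_2-d_1$, and $d(y_2,y_3)\le d(z,y_2)+d(z,y_3)$. The slack/vertex reduction also rests on these.

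Neither transfers to the line.
\begin{itemize}
\item The threshold is $m_w$, while $\delta_w$ involves $\Delta(x,w)$. Your bound $\delta_w\le|x-z|\cdot(m\text{-type})$ is false: if $w$ lies across $a$ from $x$ and $z$, then $\delta_w=|w|(|z|-|x|)$ with $|w|$ unbounded, whereas every $m\le|x|$.
\item $\Delta$ has no triangle inequality. Take $x=1$ and $(u,v,s)=(1.1,3,100)$, which respects $m_u\le m_v\le m_s$. Then $\Delta(v,s)=291>\Delta(u,v)+\Delta(u,s)\approx110.9$.
\item Inserting the sharp Lipschitz bound $\Delta(z,w)\le\Delta(x,w)+|w|\,|x-z|$ into the small case, as in Lemma~\ref{lem:sp-triple}, fails. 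At $x=1$, $(u,v,s)=(1.9,0.05,-1)$, it yields a lower bound $r\cdot(\text{bracket})$ with bracket $\to-0.03125$ as $r=|x-z|\to0$, although $\Phi>0$ there.
\end{itemize}

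\textbf{The missing idea: a one-variable analysis in $z$ with a multiplicative substitute for the triangle inequality.}
\begin{itemize}
\item Fix $u,v,s$, write $z=x+\varepsilon r$ (with $x>0$), and let $F(r)=\Phi$ on $[0,m_v]$. Then $F(0)=0$, and $F(m_v)\ge0$ by the exchange case.
\item Since $m_v,m_s\ge m_v$ and $m_u\ge m_u$, each $g_w(r)=\Delta(x+\varepsilon r,w)$ is linear or concave quadratic on $(0,m_u)$ and on $(m_u,m_v)$. Hence $F'''\le0$ on both intervals.
\item It therefore suffices to show $F'(0^+)\ge0$ and $F'(m_u^{\pm})\ge0$.
\item The only negative contribution there is $-(m_v-m_u)\Delta(v,s)\,g_u'$.
\item This term is controlled by the one-centre inequality $2\Delta(q,p_1)\Delta(q,p_2)\ge q\rho\,\Delta(p_1,p_2)$, valid for $0\le\rho\le q$ and $p_1,p_2\notin(q-\rho,q+\rho)$. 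It is applied with $\rho=m_v-m_u$ at $q=u$. For $F'(m_u^\pm)$ it is also applied at $q=x+\varepsilon m_u$, and the two are combined by AM--GM.
\end{itemize}
Nothing in your plan plays this role.

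\textbf{Your fallback does not supply it either.} Pushing to the boundary by ``monotonicity or convexity in one coordinate'' requires exactly these derivative signs. The claimed reduction of collapsed configurations to the $k=2$ inequality \eqref{eq:sp-strong} is unsupported. The endpoints actually needed are $z=x$ and $d(x,\{a,z\})=m_v$.
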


\begin{proof}
Translate the anchor \(a\) to \(0\). If \(x=0\), then the true location coincides with the anchor. Every set appearing in \eqref{eq:gpm-phi} contains this anchor, so
\(
 d(x,\{a,z,u\})=d(x,\{a,v,s\})=0.
\)
Thus every term in the cyclic sum is zero, and hence \(\Phi_{a,z}(u,v,s)=0\). Assume without loss of generality that \(x>0\). Put
\[
 \Delta(u,v)=\Delta_0(u,v),
 \qquad
 \ell(u)=\min\{x,|x-u|\}.
\]
Then \(d(x,\{0,u,v\})=\min\{\ell(u),\ell(v)\}\).
We use the following two elementary inequalities, proved in Appendix~\ref{app:gpm-aux}:
\begin{align}
 \Delta(p,u)\Delta(v,s)
 &\le \Delta(p,v)\Delta(u,s)+\Delta(p,s)\Delta(u,v),
 \label{eq:gpm-exchange}\\
 2\Delta(q,u)\Delta(q,v)
 &\ge q\rho\,\Delta(u,v),
 \label{eq:gpm-centre}
\end{align}
where the second inequality holds when \(q>0\), \(0\le\rho\le q\), and \(u,v\notin(q-\rho,q+\rho)\).

Relabel the three reports as \(u_1,u_2,u_3\) so that
\[
 \ell(u_1)=\alpha\le\ell(u_2)=\beta\le\ell(u_3),
 \qquad
 W_{pq}=\Delta(u_p,u_q).
\]
Expanding the cyclic sum gives
\begin{align}
\Phi_{0,z}
={}&W_{23}\Big[\beta\Delta(x,u_1)
 +\Delta(z,u_1)(\min\{\ell(z),\alpha\}-\beta)\Big]\notag\\
&+W_{13}\Big[\alpha\Delta(x,u_2)
 +\Delta(z,u_2)(\min\{\ell(z),\beta\}-\alpha)\Big]\notag\\
&+W_{12}\Big[\alpha\Delta(x,u_3)
 +\Delta(z,u_3)(\min\{\ell(z),\ell(u_3)\}-\alpha)\Big].
\label{eq:gpm-phi-ordered}
\end{align}

Suppose first that \(\ell(z)\ge\beta\). Rearranging \eqref{eq:gpm-phi-ordered}, we obtain
\begin{align*}
\Phi_{0,z}
={}&\beta W_{23}\Delta(x,u_1)+\alpha W_{13}\Delta(x,u_2)
 +\alpha W_{12}\Delta(x,u_3)\\
&+(\beta-\alpha)\bigl[W_{13}\Delta(z,u_2)
 +W_{12}\Delta(z,u_3)-W_{23}\Delta(z,u_1)\bigr]\\
&+W_{12}\Delta(z,u_3)
 \bigl(\min\{\ell(z),\ell(u_3)\}-\beta\bigr).
\end{align*}
Every term outside the square brackets is nonnegative, and the square bracket is nonnegative by \eqref{eq:gpm-exchange}.

It remains to consider \(\ell(z)<\beta\). Then \(\ell(z)=|x-z|\), so write \(z=x+\varepsilon r\), where \(\varepsilon\in\{-1,1\}\) and \(0\le r<\beta\). Let \(F(r)\) denote \eqref{eq:gpm-phi-ordered} for this report, and put
\[
 g_i(r)=\Delta(x+\varepsilon r,u_i).
\]
We have \(F(0)=0\), while \(F(\beta)\ge0\) by the case already proved. If \(\beta=x\) and \(\varepsilon=-1\), the endpoint report is the anchor and its value is also immediate.

For \(0\le r\le\alpha\),
\begin{align}
F(r)
={}&W_{23}[\beta g_1(0)+(r-\beta)g_1(r)]+W_{13}[\alpha g_2(0)+(r-\alpha)g_2(r)]
 +W_{12}[\alpha g_3(0)+(r-\alpha)g_3(r)].
\label{eq:gpm-F-first}
\end{align}
Since \(\ell(u_i)\ge\alpha\), no \(g_i\) has an interior breakpoint on \([-\alpha,0]\). Each \(g_i\) is therefore linear or concave quadratic there, and concavity gives
\begin{equation}\label{eq:gpm-F-zero}
F'(0+)\ge
 W_{23}[g_1(-\alpha)-(\beta-\alpha)g_1'(0)]
 +W_{13}g_2(-\alpha)+W_{12}g_3(-\alpha).
\end{equation}
This is immediate if \(\alpha=\beta\) or \(g_1'(0)\le0\). Otherwise the piecewise formula gives
\[
 u_1=x-\varepsilon\alpha,
 \quad g_1(-\alpha)=0,
 \quad g_2(-\alpha)=W_{12},
 \quad g_3(-\alpha)=W_{13},
 \quad g_1'(0)\le u_1.
\]
Here \(u_1>0\), \(\beta-\alpha\le u_1\), and \(u_2,u_3\) lie outside the interval of radius \(\beta-\alpha\) around \(u_1\). Applying \eqref{eq:gpm-centre} gives
\(
 2W_{12}W_{13}\ge(\beta-\alpha)u_1W_{23},
\)
so \eqref{eq:gpm-F-zero} yields \(F'(0+)\ge0\).

If \(\alpha=\beta\), differentiating \eqref{eq:gpm-F-first} at \(\alpha\) from the left gives \(F'(\alpha-)\ge0\), because every coefficient containing \(\beta-\alpha\) vanishes. Since \(F'\) is concave on \((0,\alpha)\), this proves \(F\ge0\) there. Now assume \(\alpha<\beta\). For \(\alpha\le r\le\beta\),
\begin{align}
F(r)
={}&W_{23}[\beta g_1(0)-(\beta-\alpha)g_1(r)]+W_{13}[\alpha g_2(0)+(r-\alpha)g_2(r)]
 +W_{12}[\alpha g_3(0)+(r-\alpha)g_3(r)].
\label{eq:gpm-F-second}
\end{align}
Let \(q=x+\varepsilon\alpha\). Since \(u_1,q\in\{x-\alpha,x+\alpha\}\) and \(\alpha<\beta\le x\), both centres are positive and the radius \(\beta-\alpha\) is admissible. The intervals of this radius around \(u_1\) and \(q\) lie inside \((x-\beta,x+\beta)\), which contains neither \(u_2\) nor \(u_3\). Applying \eqref{eq:gpm-centre} at both centres and then AM--GM gives
\begin{equation}\label{eq:gpm-alpha}
 W_{13}g_2(\alpha)+W_{12}g_3(\alpha)
 \ge(\beta-\alpha)\sqrt{u_1q}\,W_{23}.
\end{equation}
Moreover, every positive one-sided derivative \(g_1'(\alpha\pm)\) is at most \(\sqrt{u_1q}\). Explicitly,
\begin{align*}
 F'(\alpha-)
 &=W_{23}g_1(\alpha)+W_{13}g_2(\alpha)+W_{12}g_3(\alpha)
 -(\beta-\alpha)W_{23}g_1'(\alpha-),\\
 F'(\alpha+)
 &=W_{13}g_2(\alpha)+W_{12}g_3(\alpha)
 -(\beta-\alpha)W_{23}g_1'(\alpha+).
\end{align*}
Thus \eqref{eq:gpm-alpha} implies \(F'(\alpha-),F'(\alpha+)\ge0\).

Differentiating \eqref{eq:gpm-F-first} three times gives \(F'''\le0\) on \((0,\alpha)\). Hence \(F'\) is concave and nonnegative at both endpoints, so \(F(\alpha)\ge0\). Likewise, \(F'''\le0\) on \((\alpha,\beta)\). Since \(F'(\alpha+)\ge0\), \(F\) can have no strict interior minimum on this interval; its endpoint values are nonnegative. Therefore \(F(r)\ge0\) for the actual deviation. Translation and reflection complete the proof, and repeated coordinates follow by continuity.
\end{proof}

If \(|T\cap P|=1\), fix their common agent \(a\). For every fixed set of four agents $\{a,u,v,s\}$, there are three possible pairs of $T$ and $P$.
It is easy to verify that the three corresponding terms in \eqref{eq:gpm-expand} sum to 
\[
\Gamma(\{a,v,s\},\{a,u\})+\Gamma(\{a,u,s\},\{a,v\})+\Gamma(\{a,u,v\},\{a,s\})
=\Phi_{a,z}(u,v,s),\] 
which is nonnegative as desired by Lemma~\ref{lem:gpm-anchor}. 

If \(P=\{a,b\}\subseteq T=\{a,b,u\}\), then
\(
 \Phi_{a,z}(b,b,u)=2\Gamma(T,P).
\)
Thus every term $\Gamma(T,P)$ with \(|T\cap P|=2\) is also nonnegative.

\paragraph{$T$ and $P$ Disjoint.}
It remains to consider \(T\cap P=\varnothing\). For \(p,q\ge1\) and a labelled set \(U\) of \(p+q-1\) reports, define
\begin{align}
 \Psi_{p,q}(U)
 =\sum_{\substack{P\subseteq U\\|P|=p-1}}
 w(U\setminus P)\Big[&w(P\cup\{z\})
 \bigl(d(x,P\cup\{z\})-d(x,U\setminus P)\bigr)+w(P\cup\{x\})d(x,U\setminus P)\Big].
 \label{eq:gpm-psi}
\end{align}
Fix a five-agent set \(U\) so that it is the union of disjoint sets \(T\) and \(P\), and choosing the two-agent set \(P\subseteq U\) uniquely determines the triple \(T=U\setminus P\).
Therefore, the sum of all terms in \eqref{eq:gpm-expand} corresponding to $U$ is
\begin{align*}
\sum_{\substack{P\subseteq U\\|P|=2}}\Gamma(U\setminus P,P)
&=
\sum_{|P|=2}
w(U\setminus P)w(P\cup\{z\})
\bigl(d(x,P\cup\{z\})-d(x,U\setminus P)\bigr)\\
&\quad+
\sum_{|P|=2}w(U\setminus P)w(P\cup\{x\})d(x,U\setminus P)\\
&=\Psi_{3,3}(U).
\end{align*}
Hence, it suffices to prove that \(\Psi_{3,3}(U)\ge0\) for every five-agent set \(U\).

The argument uses the following five expressions with different sizes of \(U\):
\begin{align*}
 H_0(U)&=\Psi_{3,3}(U), && |U|=5,\\
 H_1(U)&=\Psi_{2,3}(U)+\Psi_{3,2}(U), && |U|=4,\\
 H_2(U)&=\Psi_{1,3}(U)+2\Psi_{2,2}(U)+\Psi_{3,1}(U), && |U|=3,\\
 {H_3(U)}&={3\Psi_{1,2}(U)+3\Psi_{2,1}(U)
 +w(U\cup\{x\})-w(U\cup\{z\})}, && {|U|=2},\\
 {H_4(U)}&={6\Psi_{1,1}(U)
 +4\bigl(w(U\cup\{x\})-w(U\cup\{z\})\bigr)}, && {|U|=1}.
\end{align*}

\begin{lemma}\label{lem:gpm-five}
For every labelled five-agent set \(U\),
\[
 \Psi_{3,3}(U)=H_0(U)\ge0.
\]
\end{lemma}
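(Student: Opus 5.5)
The five auxiliary expressions $H_0,\dots,H_4$ suggest a downward induction on $|U|$. I would prove $H_m(U)\ge0$ for all admissible labelled sets, starting from $m=4$ (a single report) and climbing to $m=0$ (five reports). The step from $H_{m+1}$ to $H_m$ comes from a ``peeling'' identity: move the extreme report of $U$ continuously, or differentiate with respect to a gap.

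The base case $H_4(\{u\})=6\Psi_{1,1}(\{u\})+4\bigl(w(\{u,x\})-w(\{u,z\})\bigr)$ is a one-variable inequality in $|u-x|,|u-z|,|x-z|$ on the line. I would check it by splitting on the relative order of $x,z,u$; only the triangle inequality $|u-z|-|u-x|\le|x-z|$ is needed there. $H_3$ and $H_2$ are similar but with two or three reports. For these I would reuse the machinery of Lemma~\ref{lem:gpm-anchor}, namely the exchange inequality \eqref{eq:gpm-exchange} and the centre inequality \eqref{eq:gpm-centre}. The case split follows whether $\ell(z)=\min\{x,|x-z|\}$ lies below or above the relevant order statistics of the distances from $x$.

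The inductive step is where I expect the structure to lie. Fix the leftmost (or rightmost) report $u_0$ of $U$ and regard $\Psi_{p,q}(U)$ as a function of the gap $g$ between $u_0$ and its neighbour.
\begin{itemize}
\item For every $P$, both $w(U\setminus P)$ and $w(P\cup\{\cdot\})$ are affine in $g$ when $u_0$ is the extreme element of that set.
\item Otherwise these weights do not depend on $g$, because gap products telescope.
\item The distance terms $d(x,\cdot)$ are independent of $g$ once $u_0$ is far from $x$.
\end{itemize}
Hence, in the regime where $u_0$ is far from $x$ and $z$, $H_m$ is affine in $g$. Its slope collects exactly those terms in which $u_0$ is an endpoint, and deleting $u_0$ should turn that slope into $H_{m+1}(U\setminus\{u_0\})$. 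The binomial coefficients $1,2,1$, $3,3$, $6$ and the correction terms $w(U\cup\{x\})-w(U\cup\{z\})$ are presumably there precisely to make these identities exact. So the first thing I would verify is
\[
 \frac{\partial}{\partial g}H_m(U)=c_m\,H_{m+1}(U\setminus\{u_0\})
\]
with a positive constant $c_m$. The base value at $g=0$, where $u_0$ merges with its neighbour, reduces to a configuration with repeated coordinates. That configuration is handled by continuity or by an $H_{m+1}$-type identity.

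The main obstacle is the regime where the extreme report lies between or near $x$ and $z$. There the $d(x,\cdot)$ factors change with $g$, so $H_m$ is only piecewise polynomial rather than affine. My plan is to reflect so that the moving report is on the side away from $x$. If every report of $U$ cannot be pushed outward this way, the remaining configuration has reports on both sides of $x$. That case I would handle like the proof of Lemma~\ref{lem:gpm-anchor}: treat the value as a function $F(r)$ of $r=|x-z|$ and show $F(0)=0$. Then show $F$ is nonnegative at the breakpoints $r=\ell(u_i)$ using \eqref{eq:gpm-exchange}, and control the sign of $F'$ at those breakpoints with \eqref{eq:gpm-centre} and AM--GM. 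Finally, use $F'''\le0$ on each piece to rule out interior negative minima.
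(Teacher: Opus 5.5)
\paragraph{Where the argument breaks.} Your architecture matches the paper's. It is a downward induction $H_4\to H_0$ in which the slope of $H_j$, as an extreme report moves outward, equals $H_{j+1}$ of the remaining set; this is the paper's \eqref{eq:gpm-ordinary-derivative}, with constant $1$. Your base case $H_4(\{u\})\ge 6|x-z|-4|x-z|\ge0$ is also exactly the paper's.

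However, that derivative identity holds only when the moving ordinary report is the \emph{unique} report farthest from $x$, with $z$ counted among the reports. Being coordinate-extreme in $U$ and ``far'' is not enough. If $z$ lies beyond $u_0$, then $u_0$ is interior in $w(P\cup\{z\})$, which is then quadratic, not affine, in the gap. If a report on the other side of $x$ is farther, $u_0$ can be the point nearest to $x$ in some $U\setminus P$ or $P\cup\{z\}$, so the factors $d(x,\cdot)$ vary with it. So as you slide $u_0$ toward its neighbour, the affine regime ends in a collision or in a tie of the outer radius. The whole weight of the proof sits on these terminal configurations, and the proposal does not handle them.

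\paragraph{The three missing boundary cases.} Concretely, three boundary cases are needed.

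(i) Collisions. With two ordinary labels at $q$, only terms with one copy of $q$ on each side survive, and $H_0(\{q,q,u,v,s\})=2\Phi_{q,z}(u,v,s)$. Nonnegativity there \emph{is} Lemma~\ref{lem:gpm-anchor}; ``continuity'' only transfers inequalities and cannot create them. For $H_1,H_2,H_3$ the paper transfers the collision case via the far-report limit $H_j(U\cup\{L\})=LH_{j+1}(U)+O(1)$. A collision with $z$ needs a separate complementary pairing.

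(ii) $z$ farthest from $x$. Here no ordinary report can be moved, so one must move $z$ itself. The derivatives are then $\Lambda_{2,3}$, $\Lambda_{1,3}$, $w+\Lambda_{1,2}$ and $2w$. Their nonnegativity is the separate complementary-sample inequalities \eqref{eq:gpm-lambda}, and $\Lambda_{2,3}$ needs a layer-cake argument.

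(iii) Two outermost reports on opposite sides of $x$. The paper moves both outward together and checks the derivative by exact computation. For $H_0$ the derivative is $H_1(\{R\}\cup V)+H_1(\{-R\}\cup V)$; for $H_1$ it is a six-case polynomial table.

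\paragraph{Why your fallback does not close the gap.} Your fallback is an $F(r)$ argument in $r=|x-z|$ modelled on Lemma~\ref{lem:gpm-anchor}, and it does not transfer. \eqref{eq:gpm-exchange} and \eqref{eq:gpm-centre} are inequalities for weights $\Delta_a(u,v)=w(\{a,u,v\})$ that share an anchor $a$. In $\Psi_{3,3}(U)$ the triple $U\setminus P$ and the pair $P$ are disjoint, so there is no anchor. With five reports, $F$ also has breakpoints both at the radii and at every crossing of $z$ with a report, where $w(P\cup\{z\})$ changes form. You give no argument for $F\ge0$ at these breakpoints or for the sign of $F'''$. The same objection applies to your plan to treat $H_3$ and $H_2$ ``like Lemma~\ref{lem:gpm-anchor}''; the paper instead obtains them from the same radial induction.
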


\begin{proof}
We use three elementary facts. Their computations are given in Appendices~\ref{app:gpm-complementary} and \ref{app:gpm-boundary}.

First, for a labelled set \(U\) of \(a+b\) reports, define
\[
 \Lambda_{a,b}(U)=
 \sum_{\substack{P\subseteq U\\|P|=a}}
 w(P)w(U\setminus P)
 \bigl(d(x,P)-d(x,U\setminus P)\bigr).
\]
Appendix~\ref{app:gpm-complementary} proves
\begin{equation}\label{eq:gpm-lambda}
 \Lambda_{1,2}(U)\ge0,
 \qquad \Lambda_{1,3}(U)\ge0,
 \qquad \Lambda_{2,3}(U)\ge0.
\end{equation}

Second, translate \(x\) to \(0\). If an ordinary report \(y\) is the unique outermost report and moves outward along its ray, direct differentiation gives
\begin{equation}\label{eq:gpm-ordinary-derivative}
 \frac{d}{ds}H_j(U)=H_{j+1}(U\setminus\{y\}),
 \qquad j=0,1,2,3,
\end{equation}
where \(s=|y|\). If the deviating report \(z\) is uniquely outermost and moves outward, then
\begin{align}
 H_0'(U)&=\Lambda_{2,3}(U), && |U|=5,\notag\\
 H_1'(U)&=\Lambda_{1,3}(U), && |U|=4,\notag\\
 H_2'(U)&=w(U)+\Lambda_{1,2}(U), && |U|=3,\notag\\
 H_3'(U)&=2w(U), && |U|=2.
 \label{eq:gpm-z-derivative}
\end{align}
Every derivative in \eqref{eq:gpm-z-derivative} is nonnegative by \eqref{eq:gpm-lambda}.

Third, Appendix~\ref{app:gpm-boundary} proves the following boundary fact: if \(H_{j+1}\) is nonnegative for all configurations and the largest distance from \(x\) is attained by at least two reports, then \(H_j\ge0\). For \(j=4\), no induction hypothesis is needed. The proof treats co-located ordinary reports by Lemma~\ref{lem:gpm-anchor} and handles opposite outer reports by exact derivative calculations; it does not move co-located labels independently.

We now prove \(H_4,H_3,H_2,H_1,H_0\ge0\) in this order. For \(U=\{y\}\),
\[
 H_4(U)=6|x-z|+4(|x-y|-|z-y|)
 \ge2|x-z|\ge0.
\]
Assume \(H_{j+1}\ge0\) everywhere and consider \(H_j\), where \(j=3,2,1,0\). If the outer radius is tied, apply the boundary fact. Otherwise move the unique outermost report inward until the first tie or until it reaches \(x\). Reversing this motion has nonnegative derivative by \eqref{eq:gpm-ordinary-derivative} or \eqref{eq:gpm-z-derivative}, so moving inward cannot increase \(H_j\). The endpoint is nonnegative by the boundary fact, or it is the completely collapsed zero configuration. Hence the original value is nonnegative. Descending to \(H_0\) proves the lemma.
\end{proof}

\paragraph{Completing the proof.}
If the truthful profile has at most three distinct occupied positions, the mechanism covers all positions and every agent has zero cost. 
Therefore, assume that there are at least four distinct positions. By the reduction above, it is enough to prove that the expression $\sum_{\substack{T,P\subseteq Y:|T|=3,\ |P|=2}}\Gamma(T,P)$ in \eqref{eq:gpm-expand} is nonnegative.
If \(T\cap P=\varnothing\), their union is a five-agent set \(U\).
For a fixed \(U\), the sum of all terms in \eqref{eq:gpm-expand} corresponding to \(U\) is
$
\sum_{\substack{P\subseteq U:|P|=2}}\Gamma(U\setminus P,P)
=\Psi_{3,3}(U)$, which is nonnegative by Lemma~\ref{lem:gpm-five}.
If \(T\cap P\ne\varnothing\), the groups of terms are also nonnegative by Lemma~\ref{lem:gpm-anchor}. Thus every group in \eqref{eq:gpm-expand} is nonnegative, so truthful reporting minimizes the agent's expected cost.

\section{Conclusion}\label{sec:conclusion}

This paper develops three distance-weighted randomization rules for multi-facility location.
For two facilities, Pairwise-Distance is strategyproof on Ptolemaic spaces and has a tight approximation ratio of \(4\), while Hybrid-Distance, the optimal fixed-probability mixture of Pairwise-Distance and Proportional, is strategyproof on the same spaces and improves the tight ratio to \(3.5186\).
For \(n\) agents and \(k=n-1\) facilities, Inverse-Square is strategyproof on every metric space and has a tight approximation ratio of \(\Theta(\sqrt{n})\), improving the previous best-known ratio of \(\frac{n}{2}\).
On the line, Gap-Product is strategyproof for \(k=3\) with a tight approximation ratio of \(6\), but is not strategyproof for any \(k\ge4\).
Together, these results show how distance-weighted randomization can be matched to Ptolemaic geometry, randomized omission when \(k=n-1\), and the ordered gap structure of the line.

Several central questions remain open.
For two facilities, it is unknown whether a randomized strategyproof mechanism outside the fixed Proportional--Pairwise-Distance family can improve the ratio achieved here.
For \(k\ge4\) on the line, the next challenge is to find a different randomized strategyproof rule with guarantees that scale well beyond three facilities.
In arbitrary metric spaces, extending inverse-nearest-neighbor randomization from one omitted report to several omitted reports may provide a route from the endpoint \(k=n-1\) toward general \(n\) and \(k\).
Bridging these regimes would move toward a unified account of the approximation cost of strategyproofness in randomized multi-facility location.

\section*{AI Use Statement}
We used OpenAI GPT-based tools, including ChatGPT and Codex, to assist in developing and checking mechanisms, intermediate lemmas, and proofs.
For the two-facility results, the authors supplied the central idea of combining two mechanisms with complementary behavior, while these tools played the primary role in developing the detailed strategyproofness proofs and approximation-ratio analyses.
For the results with more than two facilities, these tools played the primary role in developing the mechanisms, lemmas, and proofs, including both the incentive and approximation analyses.
The authors independently checked all AI-assisted claims, proofs, and calculations and take full responsibility for their correctness and for the paper's content.

\bibliographystyle{plain}
\bibliography{mybibfile}

\appendix

\renewcommand{\thesection}{Appendix \Alph{section}}
\renewcommand{\thesubsection}{\Alph{section}.\arabic{subsection}}
\renewcommand{\theequation}{\Alph{section}.\arabic{equation}}
\section{Auxiliary Proofs for Gap-Product}\label{app:gpm}

\subsection{Elementary inequalities for Gap-Product}\label{app:gpm-aux}

This appendix proves the two gap inequalities used in Lemma~\ref{lem:gpm-anchor}. After translating the anchor to \(0\), write
\(
 \Delta(u,v)=w(\{0,u,v\}).
\)
For fixed \(v>0\), direct expansion gives
\[
 \Delta(u,v)=
 \begin{cases}
  (-u)v, & u\le0,\\
  u(v-u), & 0\le u\le v,\\
  v(u-v), & u\ge v.
 \end{cases}
\]
The case \(v<0\) follows by reflection. In particular, \(u\mapsto\Delta(u,v)\) is linear or concave quadratic between consecutive breakpoints.

\paragraph{Four-point exchange.}
We first prove that, for all \(p,u,v,s\in\mathbb R\),
\[
 \Delta(p,u)\Delta(v,s)
 \le \Delta(p,v)\Delta(u,s)+\Delta(p,s)\Delta(u,v).
\]
If one point is \(0\), the inequality is immediate. Otherwise define
\[
 \eta(u,v)=\frac{\Delta(u,v)}{|u||v|}.
\]
This quantity equals \(1\) when \(u\) and \(v\) lie on opposite rays. When they lie on the same ray, it equals
\(
 1-\frac{\min\{|u|,|v|\}}{\max\{|u|,|v|\}}.
\)
It is a metric. Indeed, for \(0<u\le v\le s\), we have
\(
 1-\eta(u,s)=(1-\eta(u,v))(1-\eta(v,s)),
\)
which implies the triangle inequality, and cases involving both rays are immediate. Dividing the desired exchange inequality by \(|puvs|\) therefore proves every mixed-ray case. If all four points lie on one ray, order their radii as \(u_1\le u_2\le u_3\le u_4\). The only possibly largest matching product is
\(
 \eta(u_1,u_3)\eta(u_2,u_4).
\)
If the other two matching products are denoted by \(A\) and \(C\), their excess over this product is
\[
 A+C-\eta(u_1,u_3)\eta(u_2,u_4)
 =\frac{(u_2-u_1)(u_3-u_2)(u_4-u_3)}{u_2u_3u_4}\ge0.
\]

\paragraph{One-centre inequality.}
Let \(q>0\), \(0\le\rho\le q\), and \(u,v\notin(q-\rho,q+\rho)\). We prove
\[
 2\Delta(q,u)\Delta(q,v)\ge q\rho\Delta(u,v).
\]
The cases \(\rho=0\) or \(uv=0\) are immediate. Put
\[
 \lambda=\frac{\rho}{q},
 \qquad U=\eta(q,u),
 \qquad V=\eta(q,v),
\]
and assume \(u\le v\). After division by \(q^2|u||v|\), it remains to prove \(2UV\ge\lambda\eta(u,v)\).

If \(u,v\le q-\rho\), then \(U,V\ge\lambda\) and \(\eta(u,v)\le U+V\). If \(u\le q-\rho\le q+\rho\le v\), then
\[
 U\ge\lambda,
 \qquad V\ge\frac{\lambda}{1+\lambda},
 \qquad \eta(u,v)=U+V-UV.
\]
For fixed \(V\), the desired difference increases with \(U\) and is nonnegative at \(U=\lambda\). Finally, if \(q+\rho\le u\le v\), then \(U\ge\lambda/(1+\lambda)\ge\lambda/2\) and \(\eta(u,v)\le V\). These cases prove the inequality.

\subsection{Complementary-sample inequalities}\label{app:gpm-complementary}

We prove the three inequalities in \eqref{eq:gpm-lambda}. For \(\Lambda_{1,2}\), order the three reports by their distance from \(x\), say \(d_1\le d_2\le d_3\), and denote their coordinates by \(y_1,y_2,y_3\). Expanding the three choices of the singleton gives
\begin{align*}
\Lambda_{1,2}
={}&-(d_2-d_1)|y_2-y_3|
 +(d_2-d_1)|y_1-y_3|+(d_3-d_1)|y_1-y_2|\\
\ge{}&(d_3-d_2)|y_1-y_2|\ge0,
\end{align*}
where the first inequality is the triangle inequality.

For \(\Lambda_{1,3}\), order four reports by distance from \(x\), and let \(W_i\) be the weight of the triple obtained by deleting the \(i\)-th report in this distance order. Then
\[
 \Lambda_{1,3}\ge(d_2-d_1)(W_2+W_3+W_4-W_1).
\]
To see that the last factor is nonnegative, order the same four reports by coordinate and call their consecutive gaps \(a,b,c\). The four deletion weights are $bc$, $(a+b)c$, $a(b+c)$ and $ab$.
Each of these four numbers is at most the sum of the other three: after cancellation, the only nontrivial differences are \(2ab\) and \(2bc\). Hence \(\Lambda_{1,3}\ge0\).

For \(\Lambda_{2,3}\), use the layer-cake expression
\[
 d(x,S)=\int_0^\infty \mathbf 1\{S\cap N_s=\varnothing\}\,ds,
 \qquad
 N_s=\{u\in U:|u-x|\le s\}.
\]
For a fixed \(N=N_s\), the integrand of \(\Lambda_{2,3}\) is
\begin{align*}
 G(N)=\sum_{\substack{P\subseteq U\\|P|=2}}w(P)w(U\setminus P)
 \Big(&\mathbf 1\{P\cap N=\varnothing\}-\mathbf 1\{(U\setminus P)\cap N=\varnothing\}\Big).
\end{align*}
It is zero when \(N=\varnothing\), and it is nonnegative when \(|N|\ge3\): the second indicator is then always zero, while the first one is nonnegative. Since \(N_s\) is a consecutive block in coordinate order, only five cases remain up to reflection. Let \(y_1\le\cdots\le y_5\) have consecutive gaps \(a,b,c,d\). Direct expansion gives
\[
\begin{array}{c|l}
N&G(N)\\
\hline
\{y_1\}&2abc+2abd+4ac^2+2acd\\
\{y_2\}&2abc+2abd\\
\{y_3\}&2abc+4b^2c+4bc^2+2bcd\\
\{y_1,y_2\}&2abc+2abd+2ac^2+acd\\
\{y_2,y_3\}&2abc+abd+2b^2c+2bc^2+bcd
\end{array}
\]
Every coefficient is nonnegative. Integrating over \(s\) proves \(\Lambda_{2,3}\ge0\).

\subsection{Boundary configurations in the radial argument}\label{app:gpm-boundary}

We prove the boundary fact used in Lemma~\ref{lem:gpm-five}.

\paragraph{Far reports and collisions.}
If a bounded set \(U\) is augmented by an ordinary report \(L\to+\infty\), then
\begin{equation}\label{eq:gpm-far}
 H_j(U\cup\{L\})=L H_{j+1}(U)+O(1),
 \qquad j=0,1,2,3.
\end{equation}
Indeed, \(w(S\cup\{L\})=(L-\max S)w(S)\), while \(L\) is never nearest to \(x\) in a selected set containing a bounded report.

If two ordinary labels coincide at \(q\), then
\(
 H_0(\{q,q,u,v,s\})=2\Phi_{q,z}(u,v,s)\ge0.
\)
Applying \eqref{eq:gpm-far} successively gives the corresponding statement for \(H_1,H_2,H_3\).

If \(z\) coincides with an ordinary report, split each sum \(\Psi_{p,q}\) according to the side containing that ordinary copy. On the \(P\)-side the deviating weight vanishes. On the complementary side, pairing \(P\) with its complement cancels the deviation parts of \(\Psi_{p,q}+\Psi_{q,p}\), while every truthful part remains nonnegative.
 {More explicitly, the two additional terms are
\[
 w(U\cup\{x\})-w(U\cup\{z\}) \quad\text{in }H_3(U),
 \qquad
 4\bigl(w(U\cup\{x\})-w(U\cup\{z\})\bigr) \quad\text{in }H_4(U).
\]
Because \(z\) coincides with an ordinary report in \(U\), the set \(U\cup\{z\}\) contains two reports at the same position. Therefore \(w(U\cup\{z\})=0\), whereas \(w(U\cup\{x\})\ge0\), so both displayed terms are nonnegative.}
Thus every \(H_j\) is nonnegative at such a collision.

\paragraph{Opposite outer reports involving the deviation.}
Translate \(x\) to \(0\), let \(z=R\), and put one ordinary report at \(-R\), with all other ordinary reports strictly inside. If \(V\) denotes the inner ordinary reports, differentiating as both outer reports move outward gives
\begin{align*}
 \frac{d}{dR}H_0(\{-R\}\cup V)
 &=H_1(V)+\Lambda_{2,3}(\{-R\}\cup V), && |V|=4,\\
 \frac{d}{dR}H_1(\{-R\}\cup V)
 &=H_2(V)+\Lambda_{1,3}(\{-R\}\cup V)+2R w(V), && |V|=3,\\
 \frac{d}{dR}H_2(\{-R\}\cup V)
 &=H_3(V)+w(\{-R\}\cup V)+\Lambda_{1,2}(\{-R\}\cup V)
   +4R w(V), && |V|=2,\\
 \frac{d}{dR}H_3(\{-R\}\cup V)
 &=H_4(V)+2w(\{-R\}\cup V)+6R, && |V|=1.
\end{align*}
Each identity follows by differentiating the last gap. A selected set containing both outer reports contributes the additional displayed term involving \(R\). Every right-hand side is nonnegative by the radial induction and \eqref{eq:gpm-lambda}. The remaining boundary case is direct:
\(
 H_4(\{-R\})=2R\ge0.
\)
Moving the two outer reports inward reaches a collision already handled above.

\paragraph{Two opposite ordinary reports.}
Now the ordinary reports include \(-R\) and \(R\), while \(|z|<R\). If \(V\) denotes the three inner ordinary reports, then
\[
 \frac{d}{dR}H_0(\{-R,R\}\cup V)
 =H_1(\{R\}\cup V)+H_1(\{-R\}\cup V)\ge0.
\]
At the other end of the induction,
\[
 \frac{d}{dR}H_3(\{-R,R\})=12|z|\ge0.
\]

For \(H_2\), let \(y\) be the remaining ordinary report and reflect so that \(0\le y<R\). Direct differentiation gives
\[
\frac{d}{dR}H_2(\{-R,R,y\})=
\begin{cases}
 z^2-8zy+6y^2, & -R\le z\le-y,\\
 -z(-7z+8y), & -y\le z\le0,\\
 z(24R-5z+4y), & 0\le z\le y,\\
 24Ry+z^2+4zy-6y^2, & y\le z\le R.
\end{cases}
\]
In the first interval, writing \(z=-y-u\) turns the expression into \(u^2+10uy+15y^2\). The middle two expressions are visibly nonnegative, and the last is at least \(23y^2\) because \(R\ge z\ge y\).

It remains to verify the derivative of \(H_1\). Interchange the two inner ordinary reports and reflect if necessary so that \(y_1\le y_2\) and \(y_1+y_2\ge0\). Let \(\chi=1\) when \(y_1<0\), let \(\chi=0\) otherwise, and put \(\sigma=1-2\chi\). In the following table all displayed variables are nonnegative. The substitutions cover the six consecutive intervals containing \(z\), and each final column is the exact expansion of \(dH_1/dR\).

\scriptsize
\renewcommand{\arraystretch}{1.18}
\begin{longtable}{c|p{0.30\textwidth}|p{0.58\textwidth}}
\toprule
&Substitution&Exact nonnegative expansion of \(dH_1/dR\)\\
\midrule
\endhead
1&\(y_1=\sigma a, y_2=a+b, z=-y_2-u, R=y_2+u+v\)
&\(10a^3+27a^2b+12a^2u+12a^2v+18ab^2+18abu+14abv+4b^3+7b^2u+8b^2v+2bu^2+2buv\)
\newline{}\(\quad+\chi(12a^3+28a^2b+20a^2u+16a^2v+16ab^2+20abu+24abv+4au^2+4auv)\)\\[1mm]
2&\(y_1=\sigma a, z=-a-v, y_2=a+u+v, R=y_2+h\)
&\(10a^3+12a^2h+19a^2u+27a^2v+2ahu+14ahv+16auv+18av^2+2huv+8hv^2+5uv^2+4v^3\)
\newline{}\(\quad+\chi(12a^3+16a^2h+24a^2u+28a^2v+20ahu+24ahv+20au^2+36auv+16av^2)\)\\[1mm]
3&\(y_1=\sigma(u+v), z=-v, y_2=u+v+b, R=y_2+h\)
&\(v[2bh+6bu+19bv+12hv+2u^2+14uv+10v^2\)
\newline{}\(\qquad+\chi(20b^2+20bh+36bu+24bv+24hu+16hv+18u^2+32uv+12v^2)]\)\\[1mm]
4&\(y_1=\sigma(u+v), z=u, y_2=u+v+b, R=y_2+h\)
&\(u[2bh+31bu+22bv+12hu+24hv+10u^2+30uv+18v^2\)
\newline{}\(\qquad+\chi(16hu+12u^2+16uv+2v^2)+(1-\chi)(20b^2+20bh+20bv)]\)\\[1mm]
5&\(y_1=\sigma a, z=a+u, y_2=a+u+v, R=y_2+h\)
&\(10a^3+12a^2h+31a^2u+31a^2v+22ahu+2ahv+20au^2+24auv+2huv+u^2v\)
\newline{}\(\quad+\chi(12a^3+16a^2h+12a^2u)+(1-\chi)(20ahv+20auv+20av^2)\)\\[1mm]
6&\(y_1=\sigma a, y_2=a+b, z=y_2+u, R=y_2+u+v\)
&\(10a^3+31a^2b+12a^2u+12a^2v+20ab^2+26abu+22abv+3b^2u+2bu^2+2buv\)
\newline{}\(\quad+\chi(12a^3+12a^2b+20a^2u+16a^2v+4abu+4au^2+4auv)\)\\
\bottomrule
\end{longtable}
\normalsize

When \(\chi=1\), the six intervals are
\[
 [-R,-y_2],\ [-y_2,y_1],\ [y_1,0],\ [0,-y_1],\ [-y_1,y_2],\ [y_2,R].
\]
When \(\chi=0\), they are
\[
 [-R,-y_2],\ [-y_2,-y_1],\ [-y_1,0],\ [0,y_1],\ [y_1,y_2],\ [y_2,R].
\]
Every coefficient in the table is nonnegative, so \(dH_1/dR\ge0\).

Thus every opposite-outer configuration is nondecreasing as the two outer reports move away from the origin. Moving them inward reaches either a collision, already proved nonnegative, or the completely collapsed configuration. This proves the boundary fact.

\end{document}